\pgfplotsset{compat=1.18}
\newtheorem{theorem}{Theorem}[section]
\newtheorem{definition}[theorem]{Definition}
\newtheorem{lemma}[theorem]{Lemma}
\newtheorem{corollary}[theorem]{Corollary}
\newtheorem{property}[theorem]{Property}
\newtheorem{procedure}[theorem]{Procedure}
\newenvironment{proof}[1][Proof]{\textbf{#1.} }{\ $\square$}
\newcommand{\abs}[1]{{\left|{#1}\right|}}
\newcommand{\pr}[1]{{\left({#1}\right)}}
\newcommand{\cbr}[1]{{\left\{{#1}\right\}}}
\newcommand{\clopen}[1]{\left[{#1}\right)}
\newcommand{\ceil}[1]{\left\lceil{#1}\right\rceil}
\newcommand{\floor}[1]{\left\lfloor{#1}\right\rfloor}
\newcommand{\Bin}[2]{\operatorname{Bin}\pr{{#1}, {#2}}}
\newcommand{\Exp}[1]{\operatorname{Exp}\pr{#1}}
\newcommand{\Pois}[1]{\operatorname{Pois}\pr{#1}}
\newcommand{\BetaVar}[2]{\operatorname{\beta}\pr{{#1}, {#2}}}
\newcommand{\bigo}[1]{O\pr{{#1}}}
\newcommand{\num}[1]{N({#1})}
\newcommand{\timeseg}[2]{\clopen{t_{#1}, t_{#2}}}
\newcommand{\rdatalist}[1]{F^{#1}}
\newcommand{\elh}[1]{\pr{e#1, l#1, u#1}}
\newcommand{\x}{\times}
\newcommand{\EE}{\mathbb{E}}
\newcommand{\NN}{\mathbb{N}}
\newcommand{\PP}{\mathbb{P}}
\newcommand{\RR}{\mathbb{R}}
\newcommand{\ZZ}{\mathbb{Z}}
\newcommand{\cG}{\mathcal{G}}
\newcommand{\cR}{\mathcal{R}}
\newcommand{\cX}{\mathcal{X}}
\newcommand{\cY}{\mathcal{Y}}
\newcommand{\fnm}[1]{#1}
\newcommand{\sur}[1]{#1}
\newcommand{\orgname}[1]{#1}
\newcommand{\orgdiv}[1]{#1}
\newcommand{\orgaddress}[1]{#1}
\newcommand{\postcode}[1]{#1}
\newcommand{\city}[1]{#1}
\newcommand{\state}[1]{#1}
\newcommand{\country}[1]{#1}
\title{An exact tau-leaping method}
\author[1]{\fnm{Ron} \sur{Solan}} 
\author[1, 2, 3]{\fnm{Gad} \sur{Getz}} 
\affil[1]{\orgname{Broad Institute of Massachusetts Institute of Technology and Harvard}, \orgaddress{\city{Cambridge}, \postcode{02142}, \state{MA}, \country{USA}}}
\affil[2]{\orgdiv{Krantz Center for Cancer Research and Department of Pathology}, \orgname{Mass General Hospital}, \orgaddress{\city{Boston}, \postcode{02115}, \state{MA}, \country{USA}}}
\affil[3]{\orgname{Harvard Medical School}, \orgaddress{\city{Boston}, \postcode{02115}, \state{MA}, \country{USA}}}
\begin{document}
\maketitle

\begin{abstract}
    The Gillespie algorithm and its extensions are commonly used for the 
    simulation of chemical reaction networks.
    A limitation of these algorithms is that they have to process and update the system after every reaction,
    requiring significant computation.
    Another class of algorithms, based on the $\tau$-leaping method, 
    is able to simulate multiple reactions at a time at the cost of decreased accuracy.
    We present a new algorithm for the exact simulation of chemical reaction networks
    that is capable of sampling multiple reactions at a time via a first-order approximation similarly
    to the $\tau$-leaping methods.
    We prove that the algorithm has an improved runtime complexity compared to existing methods
    for the exact simulation of chemical reaction networks, 
    and present an efficient and easy to use implementation that outperforms 
    existing methods in practice.
\end{abstract}

\section{Introduction}

Stochastic models are used in many branches of science 
to describe systems that change over time, 
where the change depends both on the current state of the system and on random external forces. 
The state of the system can be continuous, as when describing the motion of particles, 
or discrete, as when describing chemical reactions. 
Discrete state stochastic models have been used in, e.g., chemistry~\cite{gegenhuber2017fusing}, 
biology~\cite{weinberger2005stochastic}, 
ecology~\cite{dobrinevski2012extinction}, 
and astronomy~\cite{cuppen2013kinetic}.

The Gillespie algorithm~\cite{gillespie1977exact} is the best-known algorithm for the simulation of discrete state stochastic models.
The algorithm works by repeatedly sampling the next change in the system and the time until it occurs.
After sampling the change in the system, the algorithm updates the list of possible future changes and their rates.
It repeats these two steps until simulating the whole time period.

Variations on the Gillespie algorithm allow time-varying systems~\cite{anderson2007time_dependent}, 
delayed reactions~\cite{cai2007delayed}, 
and differentiation of functions on the state with respect to the network parameters~\cite{rijal2025differentiable}.


The current directions in optimizing the Gillespie algorithm focus on using efficient data structures to accelerate
the sampling~\cite{thanh2016rssacr} and updating~\cite{ghosh2021blsssa} steps.
This approach is limited: It has to perform work for every single state change, 
and it only looks forward a single event at a time.
This is in contrast to continuous SDE simulation methods that can employ higher derivatives, 
as in Runge-Kutta methods~\cite{burrage1996stochatic_runge_kutta}.
This limitation is especially detrimental when the system has different parts with different behaviors:
parts of the system with quick stochastic fluctuations can often be accurately and efficiently simulated
by continuous methods, but their frequent state changes increase the runtime of the Gillespie algorithm.
Other parts of the system, which change more slowly, 
require less simulation work from the Gillespie algorithm, 
but reduce the accuracy of continuous methods.
Previous solutions to this problem have focused on specific types of networks~\cite{cao2005slow} 
or use approximations~\cite{ahmadian2017hybrid}.

In this paper we present a new algorithm for the exact simulation of discrete state SDEs, 
which we call the \emph{$\tau$-splitting method}. 
This method overcomes the limitations of the Gillespie algorithm and bridges 
the gap between continuous and discrete simulation methods
by simulating the progress of the system using a first order approximation, 
and performing work not for every reaction, 
but for every time a reaction deviates from the continuous approximation.
This approach significantly reduces the runtime compared to other approaches,
and opens new avenues of improvement for stochastic simulation algorithms.

\section{The problem and the main result}

\subsection{Informal background}
\label{section:informal}

We are interested in the exact simulation of biochemical systems,
and in particular in simulating the dynamics in a biological cell.
In a cell, there are many different chemical reactions occurring in parallel,
and the amounts of molecules span several orders of magnitude: 
e.g., in a human neuron there are one or two DNA molecules encoding each gene and $500\cdot10^9$ potassium atoms~\cite{howard1993neuron_volume,purves2018neuroscience}.

We describe chemical systems following the chemical master equation (CME) 
formulation~\cite{gillespie1992rigorous}.
In these systems, there are molecules undergoing chemical reactions.
For a chemical species $A$, we denote the number of $A$ molecules by $\num A$.
We will denote a chemical reaction in which the molecules $A$ and $B$ combine to form the molecule $C$ by
\begin{align*}
    A + B \to C.
\end{align*}
When one of the sides is empty,\footnote{
    This can happen, e.g., when a molecule enters the cell from an environment 
    that is assumed to be constant, and is not modelled.
} we denote it by $\phi$ for visual clarity.
Each reaction has a rate constant $r$. 
According to the CME, this means that in every time period of length $dt$, 
the probability of the reaction occurring is $ r \num A \num B dt + \bigo{{dt}^2}$.
When the reaction $A + B \to C$ occurs, $\num A$ decreases by 1, $\num B$ decreases by 1, and $\num C$ increases by 1.
When there are reactions with two molecules of the same type:
\begin{align*}
    2A \to C,
\end{align*}
the probability of a reaction in a time period of length $dt$ is $r \binom{\num A}{2} dt + \bigo{{dt}^2}$, 
since the rate is proportional to the number of pairs of $A$ molecules
that can react with each other.
This formula is different from $r N^2\pr{A} dt$ when the number of $A$ molecules is small.
Indeed, when there is only one $A$ molecule, the reaction $2A \to C$ can't progress.

The CME defines a continuous-time Markov chain describing the progress of the chemical system.
The most basic algorithm for sampling the state of the system at some time $T$ 
is the Gillespie algorithm~\cite{gillespie1977exact}.
The Gillespie algorithm repeatedly samples the identity of the next reaction and the time until it occurs.
This limits its ability to simulate systems where a very large number of reactions take place.
To improve the performance of sampling the state of the system, 
the $\tau$-leaping method was introduced~\cite{gillespie2001approximate},
which uses techniques similar to ODE simulations.
However, the $\tau$-leaping method does not sample exactly from the 
distribution defined by the CME, 
but from an approximation of it.
In this paper we introduce the $\tau$-splitting algorithm, 
which builds on the ideas of the $\tau$-leaping method, but samples the state of the system 
from the exact distribution defined by the CME.
Like the $\tau$-leaping method, the $\tau$-splitting algorithm does not sample the whole trajectory of the system,
and instead skips over time periods using a constant-rate approximation.
Unlike the $\tau$-leaping method, the $\tau$-splitting algorithm only skips over periods 
where the constant-rate approximation is guaranteed to be accurate, 
and the distribution of the states sampled using the $\tau$-splitting algorithm is the 
same as the distribution sampled using the Gillespie algorithm.

We will now illustrate the Gillespie, the $\tau$-leaping, 
and the $\tau$-splitting algorithms for the simulation of the CME\@.
Consider the system of radioactive decay over the time period $[0, T]$.
This system has a single reaction,
\begin{align*}
    A \to B,
\end{align*}
where one $A$ atom decays to one $B$ atom.
Let $\gamma$ be the rate of the reaction.

\paragraph{The Gillespie algorithm}
In the Gillespie algorithm, we simulate the system iteratively.
When there are $\num A$ atoms of $A$, the rate is $\gamma \num A$. 
Thus, the time until the next decay is distributed as $\Exp{\gamma \num A}$.
In each iteration, 
we sample the time until the next radioactive decay as $\Exp{\gamma \num A}$, 
then subtract 1 from $\num A$ and update the rate to reflect the new $\num A$.
This advances the simulation one decay at a time.

\paragraph{The $\tau$-leaping method}
In the $\tau$-leaping method, 
we split the simulated time period to segments of length $\tau$.
For a given time segment of length $\tau$, 
let $\num A$ be the number of $A$ atoms at the beginning of the time segment.
We assume that the decay rate during the time segment is constant and equal to the rate at the beginning, 
and sample the number of radioactive decays via $\Pois{\gamma \num A\tau}$, 
the Poisson random variable with rate $\gamma \num A \tau$.

Figure~\ref{figure:tau_leap} displays the decay of $A$ atoms simulated 
by the Gillespie algorithm (in black) and the $\tau$-leaping method (in cyan).
Since the Gillespie algorithm samples one reaction at a time, 
it simulates every single change in $\num A$. 
Since the $\tau$-leaping method splits the time period into segments
of length $\tau$ and simulates every time segment without simulating
the exact trajectory within each segment, 
the cyan arrows skip over the individual reactions and point directly at
the state in the end of each segment.
Since the $\tau$-leaping method only updates the reaction rate
at the end of each segment, it does not take into account that 
the rate decreases during each segment, 
and overestimates the number of decays.

\begin{figure}[H]
    \centering
    
    \begin{tikzpicture}
        \draw[->] (0,0) -- (7,0); 
        \draw[->] (0,0) -- (0,6); 
        
        \node[left] at (0, 3) {Amount};
        \node[below] at (3.5, -0.5) {Time};

        \node[below] at (0, 0) {0};
        \node[below] at (1, 0) {$\tau$};
        \node[below] at (2, 0) {$2\tau$};
        \node[below] at (3, 0) {$3\tau$};
        \node[below] at (4, 0) {$4\tau$};
        \node[below] at (5, 0) {$5\tau$};
        \node[below] at (6, 0) {$6\tau$};

        \draw[ultra thick]
        (0.000, 7.000) -- (0.011, 7.000) --
        (0.011, 6.854) -- (0.013, 6.854) --
        (0.013, 6.708) -- (0.025, 6.708) --
        (0.025, 6.562) -- (0.050, 6.562) --
        (0.050, 6.417) -- (0.069, 6.417) --
        (0.069, 6.271) -- (0.078, 6.271) --
        (0.078, 6.125) -- (0.110, 6.125) --
        (0.110, 5.979) -- (0.127, 5.979) --
        (0.127, 5.833) -- (0.161, 5.833) --
        (0.161, 5.688) -- (0.176, 5.688) --
        (0.176, 5.542) -- (0.182, 5.542) --
        (0.182, 5.396) -- (0.184, 5.396) --
        (0.184, 5.250) -- (0.185, 5.250) --
        (0.185, 5.104) -- (0.210, 5.104) --
        (0.210, 4.958) -- (0.295, 4.958) --
        (0.295, 4.812) -- (0.324, 4.812) --
        (0.324, 4.667) -- (0.346, 4.667) --
        (0.346, 4.521) -- (0.357, 4.521) --
        (0.357, 4.375) -- (0.393, 4.375) --
        (0.393, 4.229) -- (0.458, 4.229) --
        (0.458, 4.083) -- (0.466, 4.083) --
        (0.466, 3.938) -- (0.651, 3.938) --
        (0.651, 3.792) -- (0.736, 3.792) --
        (0.736, 3.646) -- (0.814, 3.646) --
        (0.814, 3.500) -- (0.822, 3.500) --
        (0.822, 3.354) -- (0.951, 3.354) --
        (0.951, 3.208) -- (0.972, 3.208) --
        (0.972, 3.062) -- (0.988, 3.062) --
        (0.988, 2.917) -- (1.020, 2.917) --
        (1.020, 2.771) -- (1.163, 2.771) --
        (1.163, 2.625) -- (1.260, 2.625) --
        (1.260, 2.479) -- (1.266, 2.479) --
        (1.266, 2.333) -- (1.280, 2.333) --
        (1.280, 2.188) -- (1.281, 2.188) --
        (1.281, 2.042) -- (1.285, 2.042) --
        (1.285, 1.896) -- (1.653, 1.896) --
        (1.653, 1.750) -- (1.894, 1.750) --
        (1.894, 1.604) -- (2.000, 1.604) --
        (2.000, 1.458) -- (2.075, 1.458) --
        (2.075, 1.312) -- (2.165, 1.312) --
        (2.165, 1.167) -- (2.178, 1.167) --
        (2.178, 1.021) -- (2.285, 1.021) --
        (2.285, 0.875) -- (2.498, 0.875) --
        (2.498, 0.729) -- (2.556, 0.729) --
        (2.556, 0.583) -- (3.114, 0.583) --
        (3.114, 0.438) -- (3.455, 0.438) --
        (3.455, 0.292) -- (4.708, 0.292) --
        (4.708, 0.146) -- (5.223, 0.146) --
        (5.223, 0.000);
        \draw[->, cyan, ultra thick] (0.0 , 7.0) -- (0.95, 1.6);
        \draw[->, cyan, ultra thick] (1.05, 1.5) -- (1.95 , 0.6);
        \draw[->, cyan, ultra thick] (2.05, 0.56) -- (2.95 , 0.12);
        \draw[->, cyan, ultra thick] (3.05, 0.1) -- (3.95 , 0.05);
        \draw[->, cyan, ultra thick] (4.05, 0.0) -- (4.95 , 0.0);
        \draw[->, cyan, ultra thick] (5.05, 0.0) -- (5.95 , 0.0);
    \end{tikzpicture}
    \caption{
        An illustration of the number of $A$ atoms in a run of the Gillespie (black) and the $\tau$-leaping (cyan) algorithms.
    }\label{figure:tau_leap}
\end{figure}
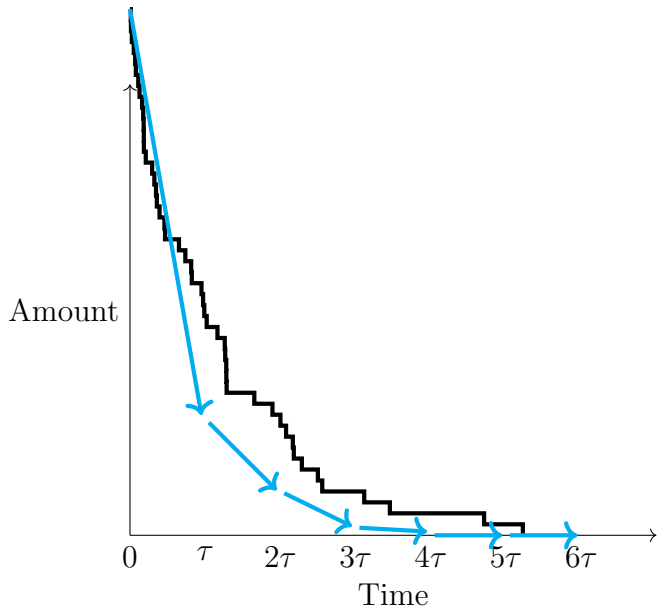

The $\tau$-leaping method is faster than the Gillespie algorithm, 
since every iteration can handle many radioactive decays.
However, it gives inaccurate results, 
since the assumption that the rate does not change along each time segment is incorrect.
In the worst case, since Poisson random variables are unbounded, 
we can get more radioactive decays than there are reactants, 
and the simulation could have a negative number of $A$ atoms remaining.

\paragraph{The $\tau$-splitting algorithm}

We present here a simplified description of the $\tau$-splitting algorithm.
For the algorithm to sample from the correct distribution, 
steps have to be correctly conditioned on previous steps, 
which we skip to simplify the presentation.
The complete description of the algorithm is given in 
Sections~\ref{section:basic} and~\ref{section:tau_improved}.

The $\tau$-splitting algorithm, similarly to the $\tau$-leaping method, 
simulates the reaction network by sampling the number of occurrences of each reaction during a time period
using a constant rate approximation.
For a reaction $R_i$ with a constant rate $\rho_i$, 
the number $e_i$ of occurrences of the reaction over a time period of length $t$
is distributed as $\Pois{\rho_i t}$.
Unlike the $\tau$-leaping method, the $\tau$-splitting algorithm also 
samples lower and upper bounds $l_i$ and $u_i$ for the rate,
such that the sample of $\Pois{l_i t}$ and $\Pois{u_i t}$ 
would still\footnote{
    To formalize this notion, we will define a coupling of the random variables $\Pois{\gamma}$ for all $\gamma \in \RR^+$,
    such that the sample of the Poisson random variables, viewed as a function from $\RR^+ \to \NN$, is a piecewise constant non-decreasing function.
    The values $l_i$ and $u_i$ are the lower and upper bounds of the segment in which $\rho_i$ is contained.
} be equal to $e_i$.
Intuitively, it makes sense that changing the reaction rate very 
little should not change the number of times it occurs, 
and the $\tau$-splitting algorithm samples the bounds $l_i$ and $u_i$ 
in a way that formalizes this intuition.

The $\tau$-splitting algorithm then checks how much the change in the state over 
the time period could change the reaction rates.
If for a reaction $R_i$, despite the change in the state over the time period, 
the reaction rate $\rho_i$ remains in $\clopen{l_i, u_i}$,
then the change in the reaction rate is guaranteed not to affect the number of occurrences of $R_i$ over the time period, 
and the constant-rate approximation is accurate.
If for all reactions $R_i$ the rate $\rho_i$ is guaranteed to remain in $\clopen{l_i, u_i}$, 
then the step is accurate, 
and the $\tau$-splitting algorithm can use it to to advance the simulation.
If for some reaction $R_i$, the rate $\rho_i$ is not guaranteed to remain in $\clopen{l_i, u_i}$ during the time period,
then the real number of occurrences of $R_i$ might not be equal to the $e_i$ sampled using the constant rate approximation.
In this case, the $\tau$-splitting algorithm splits the time period into two halves,
and recursively simulates each half.
It continues to split the time period this way until in every short time period the constant-rate approximation
is shown to be accurate.

We will now illustrate the $\tau$-splitting algorithm when simulating 
a single trajectory of the radioactive decay system (Table~\ref{table:tau_split}, Figure~\ref{figure:tau_split_decay}).
Since the example describes a single trajectory, 
all concrete values given are samples of random variables.
The trajectory simulated by the Gillespie algorithm is drawn in black, 
and the steps sampled by the $\tau$-splitting algorithm are drawn in orange and cyan.
Each row in the table repesents a step of the $\tau$-splitting algorithm.
Following the columns of Table~\ref{table:tau_split} from left to right, we check the number of $A$ atoms at the beginning of the step, compute the reaction rate $\rho$,
sample the number of occurrences $e$ and the lower bound $l$ of the rate such that 
the decay of the rate $\rho$ to $l$ 
will not change the number of occurrences.
We then compare the rate at the end of the step to $l$.
If it is larger than $l$, the step is declared accurate, and we progress the system.
Otherwise, we split the step into two halves.
Since in the example $\rho$ cannot increase, 
the check that $\rho < u$ is redundant, 
and we omit $u$ from the discussion.
Suppose that the initial number of $A$ atoms is 48 and that the reaction rate is $\gamma$.

\textbf{Row 1}: 
The algorithm tries to simulate $\clopen{0, T}$. The reaction rate at time 0 is $48\gamma$.
The algorithm samples the number of occurrences from $\Pois{48\gamma T}$, and gets $e=55$.
It then samples the lower bound $l$ and gets $26\gamma$.
Since the step would bring the reaction rate to $(48 - 55)\gamma = -7\gamma$, which is lower than $26\gamma$,
the step is rejected, and the algorithm thus splits $\clopen{0, T}$ to $\clopen{0, T/2}$ and $\clopen{T/2, T}$, 
and simulates the two periods recursively.

\textbf{Row 2}: 
The algorithm tries to simulate $\clopen{0, T/2}$. It samples $e = 47$ from $\Pois{48\gamma T/2}$, 
and rejects the step similarly to the first step,
making it simulate $\clopen{0, T/4}$ and $\clopen{T/4, T/2}$.

\textbf{Row 3}:
The algorithm tries to simulate $\clopen{0, T/4}$. It samples $e = 35$ from $\Pois{48\gamma T / 4}$
and $l = 9\gamma$.
The rate at the end of the step would be $(48 - 35)\gamma = 13\gamma$, which is greater than $l$,
so the step is accepted. The algorithm can thus proceed to simulate $\clopen{T/4, T/2}$.
It then simularly simulates $\clopen{T/4, T/2}$ (Row 4) and $\clopen{T/2, T}$ (Row 5).

\begin{table}[H]
    \centering
    \caption{Progress of the $\tau$-splitting algorithm}
    \label{table:tau_split}
    \begin{tabular}{l@{ }l@{ }lcccccc}
        \toprule
        \multicolumn{3}{c}{Time Segment} & $\num A$ & $\rho$ at step start & $e$ & $l$ & $\rho$ at step end & Accurate \\
        \midrule
        $[0, $ & $T$ & $)$ & 48 & $48\gamma$ & 55 & $26\gamma$ & $-7\gamma$  & \ding{55} \\
        $[0, $ & $T/2$ & $)$ & 48 & $48\gamma$ & 47 & $9\gamma$ & $\gamma$ & \ding{55} \\
        $[0, $ & $T/4$ & $)$ & 48 & $48\gamma$ & 35 & $9\gamma$ & $13\gamma$ & \ding{51} \\
        $[T/4, $ & $T/2$ & $)$ & 13 & $13\gamma$ & 9 & $2\gamma$ & $4\gamma$ & \ding{51} \\
        $[T/2, $ & $T$ & $)$ & 4 & $4\gamma$ & 4 & 0 & 0 & \ding{51} \\
        \bottomrule
    \end{tabular}
\end{table}

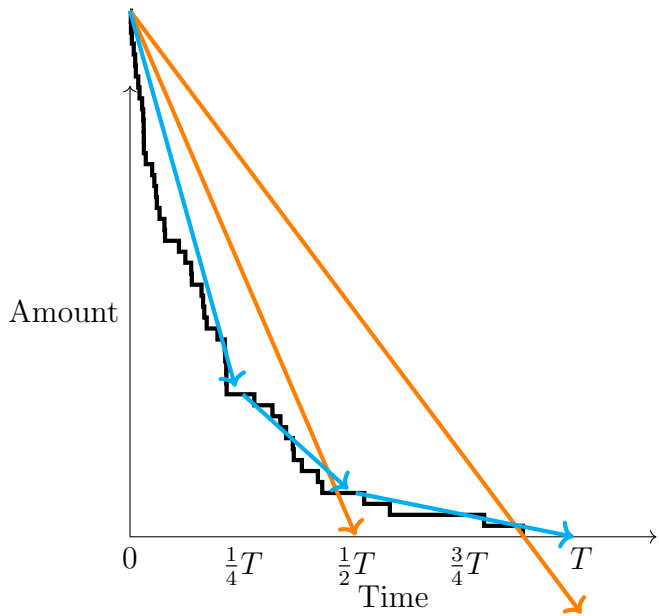
\begin{figure}[H]
    \centering
    
    \begin{tikzpicture}
        \draw[->] (0,0) -- (7,0); 
        \draw[->] (0,0) -- (0,6); 
        
        \node[below] at (0, 0) {0};
        \node[below] at (1.5, 0) {$\frac{1}{4}T$};
        \node[below] at (3, 0) {$\frac{1}{2}T$};
        \node[below] at (4.5, 0) {$\frac{3}{4}T$};
        \node[below] at (6, 0) {$T$};
        \node[below] at (3.5, -0.5) {Time};
        \node[left] at (0, 3) {Amount};

        \draw[ultra thick]
        (0.000, 7.000) -- (0.011, 7.000) --
        (0.011, 6.854) -- (0.013, 6.854) --
        (0.013, 6.708) -- (0.025, 6.708) --
        (0.025, 6.562) -- (0.050, 6.562) --
        (0.050, 6.417) -- (0.069, 6.417) --
        (0.069, 6.271) -- (0.078, 6.271) --
        (0.078, 6.125) -- (0.110, 6.125) --
        (0.110, 5.979) -- (0.127, 5.979) --
        (0.127, 5.833) -- (0.161, 5.833) --
        (0.161, 5.688) -- (0.176, 5.688) --
        (0.176, 5.542) -- (0.182, 5.542) --
        (0.182, 5.396) -- (0.184, 5.396) --
        (0.184, 5.250) -- (0.185, 5.250) --
        (0.185, 5.104) -- (0.210, 5.104) --
        (0.210, 4.958) -- (0.295, 4.958) --
        (0.295, 4.812) -- (0.324, 4.812) --
        (0.324, 4.667) -- (0.346, 4.667) --
        (0.346, 4.521) -- (0.357, 4.521) --
        (0.357, 4.375) -- (0.393, 4.375) --
        (0.393, 4.229) -- (0.458, 4.229) --
        (0.458, 4.083) -- (0.466, 4.083) --
        (0.466, 3.938) -- (0.651, 3.938) --
        (0.651, 3.792) -- (0.736, 3.792) --
        (0.736, 3.646) -- (0.814, 3.646) --
        (0.814, 3.500) -- (0.822, 3.500) --
        (0.822, 3.354) -- (0.951, 3.354) --
        (0.951, 3.208) -- (0.972, 3.208) --
        (0.972, 3.062) -- (0.988, 3.062) --
        (0.988, 2.917) -- (1.020, 2.917) --
        (1.020, 2.771) -- (1.163, 2.771) --
        (1.163, 2.625) -- (1.260, 2.625) --
        (1.260, 2.479) -- (1.266, 2.479) --
        (1.266, 2.333) -- (1.280, 2.333) --
        (1.280, 2.188) -- (1.281, 2.188) --
        (1.281, 2.042) -- (1.285, 2.042) --
        (1.285, 1.896) -- (1.653, 1.896) --
        (1.653, 1.750) -- (1.894, 1.750) --
        (1.894, 1.604) -- (2.000, 1.604) --
        (2.000, 1.458) -- (2.075, 1.458) --
        (2.075, 1.312) -- (2.165, 1.312) --
        (2.165, 1.167) -- (2.178, 1.167) --
        (2.178, 1.021) -- (2.285, 1.021) --
        (2.285, 0.875) -- (2.498, 0.875) --
        (2.498, 0.729) -- (2.556, 0.729) --
        (2.556, 0.583) -- (3.114, 0.583) --
        (3.114, 0.438) -- (3.455, 0.438) --
        (3.455, 0.292) -- (4.708, 0.292) --
        (4.708, 0.146) -- (5.223, 0.146) --
        (5.223, 0.000);
        \draw[->, orange, ultra thick] (0.0 , 7.0) -- (6.0, -1.02);
        \draw[->, orange, ultra thick] (0.0 , 7.0) -- (3.0, 0.02);
        \draw[->, cyan, ultra thick] (0.0 , 7.0) -- (1.4, 1.996);
        \draw[->, cyan, ultra thick] (1.5, 1.896) -- (2.9 , 0.633);
        \draw[->, cyan, ultra thick] (3.0,  0.583) -- (5.9 , 0.0);
    \end{tikzpicture}
    \caption{
        An illustration of the $\tau$-splitting algorithm.
        The trajectory sampled by the Gillespie algorithm is drawn in black.
        The algorithm first attempts to step directly to time $T$ (orange arrow pointing at $(T, -7)$).
        The rate changes significantly in that time period, and the step is rejected.
        The algorithm then attempts to step to $\frac{1}{2}T$ (orange arrow pointing at $(\frac{1}{2}T, 1)$).
        The rate changes significantly in that step as well, and the step is rejected.
        The algorithm next attempts to step to $\frac{1}{4}T$ (cyan arrow).
        The rate does not change significantly in that time period, and the step is accepted.
        The next two steps over the time periods $\clopen{\frac 1 4 T, \frac 1 2 T}$ and $\clopen{\frac 1 2 T, T}$ 
        are accepted as well (cyan arrows).
    }\label{figure:tau_split_decay}
\end{figure}

In a system with multiple reactions, 
the algorithm searches for leaps for each reaction independently.

\subsection{Problem statement}

We will now formally define the problem we are to solve.
While we use continuous time markov chains (CTMCs) defined by the CME~\cite{gillespie1992rigorous} 
formulation, the algorithm can be easily adapted to any setting where the  
transition rates depend on the state in a sufficiently controlled fashion.
The algorithm can be extended to incorporate variations of the CME model, 
such as complex rate functions~\cite{thanh2020rssalib}, 
time-dependent rates~\cite{anderson2007time_dependent}, 
and delayed reactions~\cite{cai2007delayed}.

For a vector $v$, let $v[i]$ denote the $i$'th component of the vector.

Let the state of the system be $S \in \NN^s$, 
where $\NN$ is the set of nonnegative integers and $s$ is the dimension of the state space, 
and let $S_t$ be the state of the system at time $t$.
The quantity $S[k]$ is interpreted as the amount of the $k$'th \emph{reactant} in the chemical system.
The transitions of the system are described by a set of $d$ chemical reactions $\cR$.
A \emph{reaction} $R_i = (I_i, C_i, O_i, r_i)$ is composed of the set of inputs $I_i$, 
the input count vector $C_i$, 
the stoichiometry vector $O_i$,
and a base reaction rate $r_i$.
For example, in a system with four reactants $C_1, C_2, C_3, C_4$,
the reaction
\begin{align}
    C_1 + C_2 \to 2C_1 + C_3
\end{align}
would have the input set $I = \cbr{1, 2}$, 
the input count vector $C = \pr{1, 1, 0, 0}$, 
and the stoichiometry vector $O = \pr{1, -1, 1, 0}$.
In the systems we consider, the input count vector and the stoichiometry vector are usually sparse, 
having $O(1)$ nonzero components.

The rate at which the reaction $R_i$ occurs is $\rho_i(S) = r_i \prod_{k \in I_i} \binom{S[k]}{C_i[k]}$,
which is equal to the base rate of the reaction times the number of combinations of reactants in which the $k$'th reactant appears $C_i[k]$ times.
Assuming no interference from other reactions, 
the time until a reaction $R_i$ occurs is distributed $\Exp{\rho_i(S)}$, 
where $\mathrm{Exp}$ denotes the exponential distribution. 
When the reaction occurs, it changes the state from $S$ to $S + O_i$.

Since the rate at which a reaction $R_i$ occurs at state $S$
is $\rho_i(S)$, the total rate at which the state changes at $S$
is $\sum_i \rho_i(S)$, and the time until the state change is distributed $\Exp{\sum_i \rho_i(S)}$.
When the state changes, with probability $\frac{\rho_i(S)}{\sum_j \rho_j(S)}$ the reaction $R_i$ occurs,
changing the state from $S$ to $S + O_i$.
Given the initial state, the distribution $\PP_t$ of $S_t$, 
the state at time $t$, 
is determined by the transitions.
Given a time $T > 0$, the goal is to sample the state at time $T$.
We present the $\tau$-splitting algorithm, which solves this problem.
Unlike the Gillespie algorithm, that samples the whole trajectory of the system, 
the $\tau$-splitting algorithm samples only a subset of the trajectory,
such that in each time segment the changes in the state are distributed uniformly and independently.

\subsection{The main result}

We have implemented an optimized version of the full $\tau$-splitting algorithm in the Rust programming language,
that outperforms previous state-of-the-art algorithms,
and proved that it has an improved runtime complexity.
We compare the time it takes for the algorithm to simulate the 
Fc$\epsilon$RI signalling network~\cite{liu2013fceri} and the 
B-cell receptor signalling network~\cite{barua2012computational}.
The Fc$\epsilon$RI signalling network has 380 reactants and 3862 reactions.
The B-cell receptor signalling network formally has 1122 reactants and 24388 reactions, 
but 108 reactions are given with a rate of 0 and were removed by us from the system, 
leaving 24280 reactions.
We use the initial states provided by~\cite{ghosh2021blsssa}.

We compare our algorithm to the RSSA-CR implementation in~\cite{thanh2020rssalib} and the BlSSSA implementation in~\cite{ghosh2021blsssa}.
Both our algorithm and the RSSA-CR implementation simulate time periods, 
while BlSSSA simulates reaction counts. 
To compare them, we applied the $\tau$-splitting algorithm and the RSSA-CR algorithm on a time period 
where there are at least $10^7$ reaction events,
and simulated exactly $10^7$ events using the BlSSSA algorithm.
We ran each algorithm five times on an Intel i5-1035G1 CPU and averaged the results.

Figure~\ref{figure:benchmarking} shows the runtime of the three 
algorithms when simulating $10^7$ reactions in the two systems.
For each of the two systems appear three bars, one for each algorithm.
The $\tau$-splitting algorithm is 4.8 times faster than the BlSSSA algorithm 
and 11.7 times faster than the RSSA-CR algorithm for the simulation
of the B-cell receptor signalling network, and 2.0 times faster than the BlSSSA algorithm
and 4.5 times faster than the RSSA-CR algorithm for the simulation of the Fc$\epsilon$RI signalling network.

\begin{figure}[H]
\begin{tikzpicture}
\begin{axis}[
    ybar,
    bar width=12pt,
    width=12cm,
    height=6cm,
    enlarge x limits=0.5,
    legend style={at={(0.5,-0.15)}, anchor=north, legend columns=-1},
    ylabel={Time (s)},
    symbolic x coords={B-Cell, Fc$\epsilon$RI},
    xtick=data,
    ymin=0,
    ymax=9,
    nodes near coords,
    nodes near coords align={vertical},
    bar shift=0pt,
    axis x line*=bottom,
    axis y line*=left,
    major x tick style = transparent,
]

\addplot+[ybar, bar shift=-18pt, fill=cyan, draw=none] 
    coordinates {(B-Cell,7.73) (Fc$\epsilon$RI,6.06) };  
\addplot+[ybar, bar shift=0pt, fill=RedOrange, draw=none] 
    coordinates {(B-Cell,3.19) (Fc$\epsilon$RI,2.73)};
\addplot+[ybar, bar shift=18pt, fill=ForestGreen, draw=none] 
    coordinates {(B-Cell,0.66) (Fc$\epsilon$RI,1.33)};  

\legend{RSSA-CR, BlSSSA, $\tau$-splitting}

\end{axis}
\end{tikzpicture}
\caption{
    The average runtime for the simulation of $10^7$ reations 
    by the RSSA-CR, the BlSSSA, and the $\tau$-splitting algorithms.
}\label{figure:benchmarking}
\end{figure}
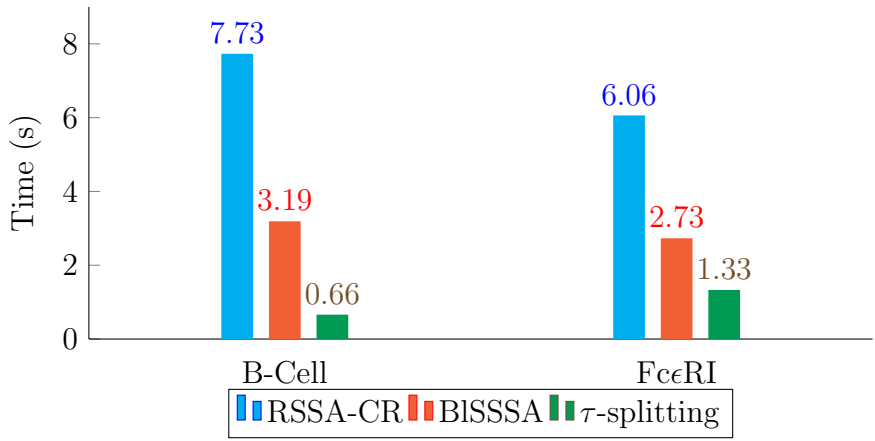

A reaction is called \emph{stable} when the change in its rate over a time 
period does not change the number of occurrences.
When this happens, the recursion of the $\tau$-splitting algroithm for that reaction ends,
and the algorithm spends no more runtime for that reaction. 
By stopping processing a reaction that has several occurrences, 
the algorithm decreases its runtime compared to the Gillepsie algorithm.

The mean number of occurrences in a reaction when it becomes stable is 16.9 
in the B-Cell receptor system and 11.86 in the Fc$\epsilon$RI system.
We plot the histogram of the number of occurrences in reactions at the step where they became stable
in the B-Cell receptor signalling network in Figure~\ref{figure:step_size_histogram}.
While the most common number of occurrences is 0, 
there is a significant number of steps with a larger number of occurrences,
which the Gillespie algorithm spends many steps simulating.

\begin{figure}[H]
\begin{tikzpicture}
\begin{axis}[
    xlabel={Reaction occurence count},
    ylabel={Amount},
    ymode=log,
    smooth,
    grid=major,
    width=14cm,
    height=8cm
]
\addplot+[blue, thick, mark=none] coordinates {
(0, 57350) (1, 43866) (2, 33340) (3, 27659) (4, 24231) (5, 22498) (6, 20979) (7, 19475) (8, 18414) (9, 17276) (10, 16315) (11, 15277) (12, 14903) (13, 14247) (14, 13486) (15, 13071) (16, 12293) (17, 11687) (18, 11177) (19, 10642) (20, 10067) (21, 9290) (22, 8914) (23, 8253) (24, 7711) (25, 7398) (26, 7094) (27, 6919) (28, 6611) (29, 6355) (30, 6197) (31, 6030) (32, 5837) (33, 5542) (34, 5189) (35, 5009) (36, 4655) (37, 4372) (38, 4158) (39, 3731) (40, 3471) (41, 3299) (42, 3017) (43, 2939) (44, 2725) (45, 2550) (46, 2308) (47, 2150) (48, 1971) (49, 1925) (50, 1849) (51, 1708) (52, 1631) (53, 1534) (54, 1471) (55, 1485) (56, 1478) (57, 1341) (58, 1300) (59, 1235) (60, 1191) (61, 1116) (62, 1070) (63, 1107) (64, 1035) (65, 955) (66, 875) (67, 877) (68, 828) (69, 733) (70, 727) (71, 671) (72, 644) (73, 645) (74, 615) (75, 564) (76, 520) (77, 475) (78, 463) (79, 438) (80, 410) (81, 372) (82, 354) (83, 303) (84, 309) (85, 305) (86, 286) (87, 256) (88, 203) (89, 196) (90, 167) (91, 181) (92, 160) (93, 162) (94, 156) (95, 127) (96, 138) (97, 136) (98, 143) (99, 137) (100, 123) (101, 128) (102, 106) (103, 101) (104, 115) (105, 108) (106, 113) (107, 95) (108, 88) (109, 95) (110, 83) (111, 78) (112, 87) (113, 76) (114, 63) (115, 64) (116, 65) (117, 62) (118, 56) (119, 58) (120, 54) (121, 67) (122, 42) (123, 59) (124, 66) (125, 58) (126, 30) (127, 48) (128, 35) (129, 36) (130, 35) (131, 32) (132, 38) (133, 30) (134, 35) (135, 26) (136, 25) (137, 18) (138, 28) (139, 29) (140, 21) (141, 17) (142, 16) (143, 28) (144, 20) (145, 17) (146, 16) (147, 13) (148, 12) (149, 12) (150, 18) (151, 14) (152, 14) (153, 16) (154, 11) (155, 11) (156, 7) (157, 9) (158, 14) (159, 22) (160, 12) (161, 8) (162, 8) (163, 6) (164, 11) (165, 6) (166, 6) (167, 7) (168, 8) (169, 8) (170, 10) (171, 2) (172, 7) (173, 4) (174, 2) (175, 4) (176, 6) (177, 3) (178, 7) (179, 4) (180, 6) (181, 4) (182, 4) (183, 7) (184, 2) (185, 2) (186, 2) (187, 3) (188, 3) (189, 1) (190, 3) (192, 1) (193, 1) (194, 4) (195, 4) (196, 2) (197, 3) (198, 5) (199, 1) (201, 1) (202, 2) (203, 3) (204, 2) (206, 1) (207, 4) (208, 1) (209, 1) (210, 1) (211, 1) (212, 1) (213, 1) (216, 1) (219, 1) (220, 1) (222, 1) (226, 1) (227, 1) (230, 1) (231, 1) (233, 1) (236, 3) (241, 1) (251, 1) (256, 1) (260, 1) (261, 1) (266, 2) (268, 3) (269, 2) (271, 2) (272, 2) (273, 1) (274, 3) (276, 1) (277, 1) (278, 1) (279, 1) (283, 2) (284, 1) (285, 1) (286, 4) (287, 3) (288, 4) (289, 1) (290, 6) (291, 4) (292, 2) (294, 5) (295, 1) (296, 4) (297, 7) (298, 6) (299, 2) (300, 2) (301, 6) (302, 3) (303, 5) (304, 4) (305, 4) (306, 4) (307, 4) (308, 5) (309, 3) (310, 1) (311, 4) (312, 3) (313, 2) (314, 2) (315, 1) (316, 3) (317, 1) (318, 3) (319, 1) (321, 1) (322, 2) (323, 3) (325, 1) (326, 2) (327, 2) (328, 2) (331, 1) (332, 3) (333, 1) (335, 1) (341, 1) (346, 1)
};
\end{axis}
\end{tikzpicture}
\caption{
    A histogram of the number of occurrences in reactions at the step where $\rho$ 
    becomes bounded between $l$ and $u$.
    A small fraction of the reactions that had more than 350 occurrences 
    at that point are removed from the histogram for visual clarity.
}\label{figure:step_size_histogram}

\end{figure}
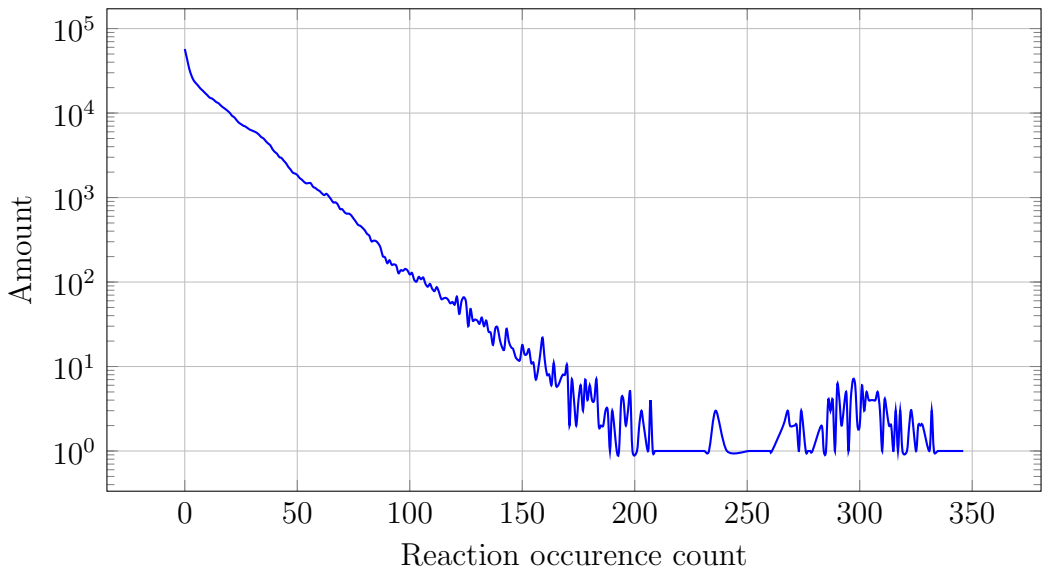

We will now present the complexity of the $\tau$-splitting algorithm.
The results here will be stated in their simplest form, 
and their technical form will be stated and proved in Section~\ref{section:full:complexity}.

The runtime of the $\tau$-splitting algorithm is a random variable, 
and it would be natural to estimate its expectation.
However, the runtime is highly dependent on the trajectory, 
and the distribution of the trajectories is difficult to predict from the algorithm's input,
making the estimation difficult.
Instead, we will define the \emph{runtime density} to be a function from the set of states to $\RR$,
such that the expected runtime of the algorithm is the integral of the runtime density over all trajectories.
To describe the complexity, we will also need the concept of reactions \emph{depending} on other reactions.
A reaction $R_i$ \emph{depends} on a reaction $R_j$ if the occurrence of $R_j$ can change the rate of $R_i$, 
which happens when for some $k \in I_i$ we have $O_j[k] \ne 0$.

To illustrate the concept of runtime density, 
we will compute it for two optimized versions of the Gillespie algorithm, 
the RSSA-CR algorithm~\cite{thanh2016rssacr} 
and the BlSSSA algorithm~\cite{ghosh2021blsssa}.
\begin{lemma}
    Let $D_i'$ be the number of reactions depending on $R_i$.
    The runtime density of the RSSA-CR algorithm
    is $\bigo{\sum_i (D_i' + K)\rho_i(S)}$,
    where $K$ is the number of groups used by the algorithm,
    and the runtime density of the BlSSSA algorithm
    is $\bigo{\sum_i \sqrt{s}\rho_i(S)}$.
\end{lemma}

The complexity of the $\tau$-splitting algorithm, stated informally, is:
\begin{theorem}
    Let $D_i$ be the total number of reactions depending on $R_i$ or on which $R_i$ depends.
    Let $\rho_k^\pm = \sum_j \rho_j(S) \abs{O_j[k]}$.
    
    When the values $S[k]$ are sufficiently large, the function
    \begin{align*}
        f(S) = \bigo{\sum_i D_i \sqrt{\rho_i(S) \sum_{k \in I_i} \frac{\rho_k^\pm}{S[k]}}}
    \end{align*}
    is a runtime density for the $\tau$-splitting algorithm.
\end{theorem}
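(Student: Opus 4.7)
The plan is to separate the total work of the algorithm reaction by reaction and bound, for each reaction $R_i$, the typical size of a time segment that gets accepted. Writing $T_i(t)$ for the time-density of leaf segments produced by the recursion for $R_i$ while the state is in a neighborhood of $S$, the total runtime is, up to constants, $\sum_i D_i \int_0^T T_i(t)\,dt$ because each accepted segment contributes $O(D_i)$ work (one pass over the reactions that share a reactant with $R_i$, in both directions of the dependency graph). Thus it suffices to show that the expected number of leaf segments per unit of simulated time near state $S$ is $O(\sqrt{\rho_i(S) \sum_{k \in I_i} \rho_k^{\pm}/S[k]})$.

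Next I would analyze the acceptance criterion for a single reaction in a segment of length $t$. Using the standard Poisson-process coupling $\Pois{\lambda} = N(\lambda)$ with $N$ a rate-$1$ Poisson process, the bounds $l_i,u_i$ sampled at rate $\rho_i$ correspond to the two consecutive atoms of $N$ straddling $\rho_i t$; their gaps $\rho_i - l_i$ and $u_i - \rho_i$ are therefore stochastically of order $1/t$. On the other hand, during a segment of length $t$ each reactant $S[k]$ with $k \in I_i$ changes by an amount whose typical size is $t \cdot \rho_k^{\pm}$ (drift) plus $\sqrt{t\cdot\rho_k^{\pm}}$ (fluctuation), and a first-order expansion of $\rho_i(S) = r_i \prod_{k\in I_i}\binom{S[k]}{C_i[k]}$ under the hypothesis that each $S[k]$ is large gives
\begin{align*}
    |\Delta \rho_i| \;\lesssim\; \rho_i(S)\, t \sum_{k \in I_i} \frac{\rho_k^{\pm}}{S[k]}
    \;+\; (\text{lower order fluctuation terms}).
\end{align*}
Setting the drift bound equal to the typical gap $1/t$ yields the critical segment length
\begin{align*}
    t_i^\star \;\asymp\; \frac{1}{\sqrt{\rho_i(S) \sum_{k\in I_i} \rho_k^{\pm}/S[k]}},
\end{align*}
and the acceptance criterion fails with constant probability only when $t \gtrsim t_i^\star$. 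A segment of length $t \le t_i^\star$ is accepted with probability $\Omega(1)$, so the recursion stops within $O(1)$ levels past the scale $t_i^\star$ in expectation.

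The counting step is then straightforward: the recursion tree for $R_i$ on a time interval of length $\Delta$ has $O(\Delta/t_i^\star)$ leaves (plus $O(\log(\Delta/t_i^\star))$ rejected internal nodes along each root-to-leaf path, which is absorbed in the constant). Writing this as a density in $t$ gives
\begin{align*}
    T_i(t) \;=\; O\!\left(\sqrt{\rho_i(S_t)\sum_{k\in I_i}\frac{\rho_k^{\pm}(S_t)}{S_t[k]}}\right),
\end{align*}
and summing $D_i T_i(t)$ over $i$ identifies the integrand as the claimed runtime density.

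The main obstacle is the middle step: controlling $u_i - \rho_i$ and $\rho_i - l_i$ via the Poisson coupling and simultaneously controlling $|\Delta \rho_i|$ via the state drift, while accounting for the fact that the algorithm conditions future samples on past ones and that splits are made before the state is actually updated. Rigorously, I would couple the rejection events across the tree to the points of a single Poisson process $N$ per reaction, then show that conditional on the state being near $S$, the probability that a segment of length $t$ is rejected for $R_i$ is $O(\rho_i t \sum_k \rho_k^{\pm}/S[k] \cdot t) = O((t/t_i^\star)^2)$, which is precisely what is needed for a geometric recursion to terminate at scale $t_i^\star$. The "sufficiently large $S[k]$" hypothesis enters to justify the first-order expansion of $\rho_i$ and to dominate the fluctuation terms by the drift in the acceptance condition; the fully quantified statement of this, matching Section~\ref{section:full:complexity}, is where the technical weight of the proof lies.
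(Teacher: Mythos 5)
Your proposal follows essentially the same route as the paper: the paper bounds the probability that reaction $R_i$ is unstable on a segment of length $t$ by $c\,\rho_i t^2 \sum_{k\in I_i}\rho_k^\pm/S[k]$ (exponential gaps of order $1/t$ for $u_i-\rho_i$ and $\rho_i-l_i$ versus a first-order drift bound on $\Delta\rho_i$), identifies the same critical scale $2^{-d^*}T \asymp \bigl(\rho_i\sum_k \rho_k^\pm/S[k]\bigr)^{-1/2}$, charges $O(D_i)$ work per unstable node, and sums the dyadic recursion tree exactly as you do. The technical points you flag as the remaining weight (the Poisson coupling across the tree, and separating "unstable" from "split because a dependency is unstable") are handled in the paper by Lemma~\ref{lemma:unstable_inactive_inequality} and Corollary~\ref{corollary:unstable_inactive_integral_inequality}, but the overall argument is the one you outline.
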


We will now compare the runtime densities of the 
RSSA-CR, BlSSSA, and $\tau$-splitting algorithms.
All three complexities have a term coming from the frequency of the events they process
and a term for the cost of processing the event.

The BlSSSA and the RSSA-CR algorithms process each reaction,
and the rate of reactions needing to be processed is $\sum_i \rho_i(S)$.
The $\tau$-splitting algorithm only spends time when reaction occurrences 
significantly change the reaction rate, the rate of which is only
$\sqrt{\rho_i(S) \sum_{k \in I_i} \frac{\rho_k^\pm}{S[k]}}$ 
(This is proved in Section~\ref{section:full:complexity}).
The larger the values of $S[k]$ are, the lower the rate of events needing to be processed 
by the $\tau$-splitting algorithm compared to the other two algorithms.

The update cost terms are all different.
The update cost for the $\tau$-splitting algorithm is $D_i$, 
the number of reactions depending on the $i$'th reaction or on which the $i$'th reaction depends.
The update cost for the RSSA-CR algorithm is $D_i' + K$.
The term $D_i'$ is the number of reactions depending on the $i$'th reaction,
and the term $K$ is an implementation detail of the RSSA-CR algorithm, which is generally smaller than $D_i'$ and we will not discuss.
The average of $D_i'$ is half the average of $D_i$ from graph-theoretic considerations,
but the difference in their contribution to the runtime could be significant for 
some graph structures.
The term $\sqrt s$ of the BlSSSA algorithm is significantly different from $D_i$ and $D_i'$.
When most reactions affect only a few other reactions, $\sqrt s > D_i > D_i'$.
When the reactions are highly dependent on each other, $D_i > D_i' > \sqrt s$.

To summarize, the $\tau$-splitting algorithm dominates on loosely coupled networks 
where the values $S[k]$ are large,
the BlSSSA algorithm dominates on densely coupled reaction networks with low $S[k]$,
and the RSSA-CR algorithm dominates in loosely coupuld networks with low $S[k]$.

An important research question is finding an algorithm that combines all advantages.
Furthermore, the $\tau$-splitting algorithm uses a first-order approximation. 
Using higher-order approximations could lead to further runtime improvements.

\section{The basic \texorpdfstring{$\tau$}{tau}-splitting algorithm}\label{section:basic}

We here describe a basic version of the $\tau$-splitting algorithm.
In Section~\ref{section:tau_improved} we will present the full version of the $\tau$-splitting algorithm, which has a better time complexity.
The basic version is presented for pedagogical reasons and to aid in proving the algorithm's correctness. 

\subsection{\texorpdfstring{$\tau$}{tau}-leaping steps}\label{section:basic:tau_steps}

Recall from Section~\ref{section:informal} that to simulate a period $\timeseg{0}{1}$,
the $\tau$-splitting algorithm samples a step using the constant rate approximation, 
and samples the change in the reaction rate that would make the step inaccurate.
We begin by describing the data we sample for a step.

The basic data structure used by the algorithm is the \emph{$\tau$-leaping step}, 
which extends the steps of the $\tau$-leaping method~\cite{gillespie2001approximate}.
The $\tau$-leaping step describes the progress of the system from the time $t_0$ to the time $t_1$.
It consists of:
\begin{itemize}
    \item An initial state $S \in \NN^s$. This is the state used to compute the reaction rates, 
    and is usually equal to the state $S_{t_0}$.
    \item A vector\footnote{Recall that $d$ is the number of reactions.} $\pr{e_i}_i \in \NN^d$. 
        For each $i$, $e_i$ is the putative number of occurrences of the reaction $R_i$ 
        between $t_0$ and $t_1$, assuming a constant reaction rate.
    \item The vectors $\pr{l_i}_i \in \RR^d$ and $\pr{u_i}_i \in \RR^d$. 
        For each $i$, the values $l_i$ and $u_i$ are the lower and upper bounds on the rate of $R_i$ 
        where the algorithm would still sample the same $e_i$.
        This notion will be formally defined in Section~\ref{section:basic:tau_sampling}.
\end{itemize}
The algorithm also uses $\tau$-leaping steps where $S \ne S_{t_0}$ as intermediates.
We will call such improper steps \emph{$\tau$-leaping precursors}.

The $\tau$ splitting algorithm works by sampling an initial 
$\tau$-leaping step for the time segment $\clopen{0, T}$, 
then recursively splitting $\tau$-leaping steps to improve their accuracy.
The algorithm uses the three building blocks: 
The stopping condition (Procedure~\ref{procedure:stop_condition}), 
splitting $\tau$-leaping steps (Procedure~\ref{procedure:tau:split}), 
and resampling $\tau$-leaping steps (Procedure~\ref{procedure:tau:resample}). 
The three procedures will be elaborated on in Sections~\ref{section:basic:tau_sampling} and~\ref{section:basic:stop_condition}.

The first building block is the \emph{stopping condition}, 
that determines when a step is accurate and when it should be split further.
We formally define a stopping condition the following way:
\begin{procedure}[Stopping condition]\label{procedure:stop_condition}
    A stopping condition is a function from $\tau$-leaping steps to the set $\cbr{\text{STOP}, \text{SPLIT}}$.
\end{procedure}
We will say that a step \emph{satisfies} a stopping condition if the step is mapped by the stopping condition to STOP\@.


We will define the stopping condition used in the $\tau$-splitting algorithm in Section~\ref{section:basic:stop_condition}.

The second building block is the procedure used to split $\tau$-leaping steps into two halves.
Given two times $t_0 < t_1$, denote $t_{1/2} = (t_0 + t_1)/2$.

\begin{procedure}[Splitting $\tau$-leaping steps]\label{procedure:tau:split}
    Given a $\tau$-leaping step over the time segment $\timeseg{0}{1}$, 
    splits the $\tau$-leaping step into a 
    $\tau$-leaping step over the time segment $\timeseg{0}{1/2}$ 
    and a $\tau$-leaping precursor over $\timeseg{1/2}{1}$ 
    with the initial state $S_{t_0}$ (rather than $S_{t_{1/2}}$).
\end{procedure}

Procedure~\ref{procedure:tau:split} produces a $\tau$-leaping step for the time segment $\timeseg{1/2}{1}$,
where the values of $e_i$ were sampled using the rates at time $t_0$ and not the rates at 
time $t_{1/2}$. 
The third building block resamples the values of $e_i$ according to the change of rates from $t_0$ to $t_{1/2}$.

\begin{procedure}[Resampling $\tau$-leaping steps]\label{procedure:tau:resample}
    Given a $\tau$-leaping precursor over the time segment $\timeseg{0}{1}$ with
    an initial state $S$, and the state $S_{t_0}$,
    samples a $\tau$-leaping step over the same time segment with the initial state $S_{t_0}$.
\end{procedure}

For the algorithm to be correct, we need the following properties, which will be
elaborated on and proved in Sections~\ref{section:basic:tau_sampling} and~\ref{section:basic:stop_condition}.

The first property guarantees that when the stopping condition maps a $\tau$-leaping step to STOP,
then we can progress the system through the step's time segment:

\begin{property}[Exact stopping condition]\label{property:stop:exact}
    If a $\tau$-leaping step over $\timeseg{0}{1}$ with vectors $\pr{\pr{e_i}_i, \pr{l_i}_i, \pr{u_i}_i}$ 
    is mapped to STOP, then $S_{t_1} = S_{t_0} + \sum_i e_i O_i$.
\end{property}

The second property guarantees that all $\tau$-leaping steps share the same family of distributions,
regardless of the way they are sampled:

\begin{property}[Correct distribution]\label{property:tau:distribution}
    There exists a family of distributions $\cY(\rho, t)$ over $\ZZ\x\RR\x\RR$ 
    parametrized by the rate $\rho$ and the time $t$,
    such that in a $\tau$-leaping step over the time segment $\timeseg{0}{1}$,
    for all $i$ the tuple $\elh{_i}$ is distributed according to $\cY(\rho_i(S_{t_0}), t_1 - t_0)$.
    Moreover, the marginal distribution of $e_i$ is $\Pois{\rho_i (t_1 - t_0)}$, 
    the Poisson distribution with rate $\rho_i (t_1 - t_0)$.
\end{property}

The transitions of the discrete-state SDE depend only on the current state,
and not on the past trajectory. 
The third property guarantees that the way we sample $\tau$-leaping steps satisfies the same independence:

\begin{property}[Independence]\label{property:tau:independence}
    A $\tau$-leaping step over $\timeseg{0}{1}$ 
    is independently distributed of any $\tau$ leaping step ending before $t_0$ when conditioned on $S_{t_0}$,
    and all tuples $\elh{_i}$ in the step are independent from each other.
\end{property}

All $\tau$ leaping steps and precursors sampled by the algorithm will satisfy these properties.

\subsection{The basic \texorpdfstring{$\tau$}{tau}-splitting algorithm}\label{section:basic_algorithm}

We will now use the procedures defined in Section~\ref{section:basic:tau_steps}
to define the basic $\tau$-splitting algorithm.

To simulate the state of the system at time $T$ given an initial state $S_0$, 
the algorithm starts by sampling a $\tau$-leaping step for the time segment $\clopen{0, T}$.
Since $\tau$-leaping steps over long time segments may be inaccurate, the algorithm recursively
splits inaccurate $\tau$-leaping steps to increase their accuracy.

The initial step has initial state $S_0$, 
and every tuple $\elh{_i}$ is sampled from $\cY(\rho_i(S_0), T)$.
The recursive processing of a $\tau$-leaping step $\tau_{\timeseg{0}{1}}$ for the time segment $\timeseg{0}{1}$ has four steps:

\paragraph{Step 1.}\label{algorithm:basic:sample}
If $\tau_{\timeseg{0}{1}}$ satisfies the stopping condition, 
set $S_{t_1} = S_{t_0} + \sum_i e_i O_i$ and finish the recursive call.
Otherwise, set $t_{1/2} = (t_0 + t_1) / 2$, and split $\tau_{\timeseg{0}{1}}$ to 
a $\tau$-leaping step $\tau_{\timeseg{0}{1/2}}$ over $\timeseg{0}{1/2}$ 
and a $\tau$-leaping precursor $\tau_{\timeseg{1/2}{1}}^*$ over $\timeseg{1/2}{1}$ 
using Procedure~\ref{procedure:tau:split}.

\paragraph{Step 2.}\label{algorithm:basic:recurse_left}
Recursively apply the algorithm to $\tau_{\timeseg{0}{1/2}}$. After this step, the state $S_{t_{1/2}}$ is determined.

\paragraph{Step 3.}\label{algorithm:basic:resample}
Now that the state $S_{t_{1/2}}$ is determined, 
sample the $\tau$-leaping step $\tau_{\timeseg{1/2}{1}}$ conditioned on $\tau_{\timeseg{1/2}{1}}^*$ 
and the state $S_{t_{1/2}}$ using Procedure~\ref{procedure:tau:resample}.

\paragraph{Step 4.}\label{algorithm:basic:recurse_right}
Recursively apply the algorithm to $\tau_{\timeseg{1/2}{1}}$. After this step, the state $S_{t_1}$ is determined.

This concludes the recursive procedure of the $\tau$-splitting algorithm.
Our main result asserts that the algorithm samples $S_T$ according to $\PP_T$.

\begin{theorem}\label{theorem:exact_algorithm_correctness}
    Given an initial state $S_0$, a set of reactions $\cR$, and a time $T$,
    such that all for all $t, k$ the values $S_t[k]$ are bounded in probability,
    the $\tau$-splitting algorithm samples a state distributed as $\PP_T$.
\end{theorem}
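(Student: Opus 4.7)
The plan is to show by induction on the random recursion tree of the algorithm that, for every recursive call invoked on a $\tau$-leaping step over a time segment $\timeseg{0}{1}$ with input state $S_{t_0}$, the output state $S_{t_1}$ is distributed as the CME transition kernel $\PP_{t_1-t_0}(\cdot \mid S_{t_0})$. Taking $t_0 = 0$, $t_1 = T$, $S_{t_0} = S_0$ then yields the theorem. The proof naturally splits into two parts: almost-sure termination of the recursion, and correctness of each recursive call conditional on termination.

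For correctness, the base case is when $\tau_{\timeseg{0}{1}}$ satisfies the stopping condition. Property~\ref{property:stop:exact} then gives $S_{t_1} = S_{t_0} + \sum_i e_i O_i$ under the coupling implicit in that property, and Property~\ref{property:tau:distribution} ensures the $e_i$ are drawn from the correct marginal Poisson distributions, so the step reproduces the CME exactly. For the inductive step, Procedure~\ref{procedure:tau:split} yields a valid $\tau$-leaping step $\tau_{\timeseg{0}{1/2}}$, so the induction hypothesis applied to the left half sampled in Step~2 gives $S_{t_{1/2}}$ with the correct conditional distribution given $S_{t_0}$. Procedure~\ref{procedure:tau:resample} then constructs $\tau_{\timeseg{1/2}{1}}$ from the precursor $\tau_{\timeseg{1/2}{1}}^*$ and the newly revealed state $S_{t_{1/2}}$; by Properties~\ref{property:tau:distribution} and~\ref{property:tau:independence} the resulting step is distributed like a freshly sampled $\tau$-leaping step over $\timeseg{1/2}{1}$ with initial state $S_{t_{1/2}}$ and is independent of the history prior to $t_{1/2}$. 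Applying the induction hypothesis to this right half and composing via the Markov property of the CME then yields the correct conditional distribution of $S_{t_1}$ given $S_{t_0}$.

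For termination I would use the hypothesis that $S_t[k]$ is bounded in probability. Intuitively, on a short enough time segment the Poisson counts $e_i$ tend to be zero, so the reaction rates hardly change and the stopping condition fires with high probability. Restricting to a compact set $K$ of states in which the trajectory lies with probability at least $1 - \varepsilon$, one argues that there is a time scale $\delta > 0$ such that for every $S \in K$ a step of length $\delta$ is accepted with probability bounded below by some $p > 0$. A Borel--Cantelli or branching-process argument then shows that, conditional on the trajectory remaining in $K$, the recursion depth is almost surely finite along each branch and hence the tree is almost surely finite; letting $\varepsilon \to 0$ removes the conditioning.

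The main obstacle is the termination argument. The specific shape of the stopping condition from Section~\ref{section:basic:stop_condition} must be exploited to quantify how the acceptance probability approaches one as the segment length shrinks, and one must avoid a scenario in which splits cascade indefinitely in some branches. A secondary subtlety is that the induction above is on a random tree, so the correctness proof should be phrased as a backward induction from the leaves once termination is established, rather than as an induction whose well-foundedness is taken for granted from the outset.
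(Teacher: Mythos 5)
Your proof has two genuine gaps, and the overall induction is structured in a way that cannot be repaired without essentially switching to the paper's argument. First, the base case is circular and, read as a distributional claim, false. Property~\ref{property:stop:exact} is not a primitive you may invoke: in the paper it is the \emph{conclusion} of the hard part of the argument (it is derived from Lemmas~\ref{lemma:approximate_node_stability} and~\ref{lemma:finite_unstable} at the end of Section~\ref{section:basic:stop_condition}), and it is a pathwise statement under a coupling with an infinitely refined trajectory, not a statement about conditional laws. Moreover, ``the $e_i$ are Poisson with the rates at $S_{t_0}$, so the step reproduces the CME exactly'' is precisely the $\tau$-leaping approximation, which the paper spends Section~\ref{section:informal} explaining is \emph{not} exact; and conditioning on the stopping condition firing biases the law of $\pr{e_i}_i$ further. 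The reason an accepted step is nevertheless correct is Lemma~\ref{lemma:approximate_node_stability}: if a node satisfies the exact stopping condition then so do all its descendants and their counts sum to the parent's, so accepting early gives the \emph{same random outcome} as refining forever --- a coupling identity, not an identity of conditional distributions. Second, your inductive step applies the induction hypothesis to the two halves, but the halves are not fresh recursive calls: their tuples are sampled conditionally on the parent's tuple (Procedures~\ref{procedure:tau:split} and~\ref{procedure:tau:resample}), and the split branch is only entered on the event that the parent \emph{fails} the stopping condition, which biases the children's tuples away from fresh samples of $\cY$. The mixture over accept/split is correct, but neither branch is correct conditionally, so the induction hypothesis cannot be applied branch by branch. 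The paper sidesteps all of this by never conditioning on the accept/split decision: Lemma~\ref{lemma:distribution_convergence} shows the fixed-depth ($d$-approximate) algorithm converges to $\PP_T$, Lemma~\ref{lemma:approximate_node_stability} shows the exact algorithm's output coincides with the deep-refinement limit, and Lemma~\ref{lemma:finite_unstable} shows the limit is attained at a finite depth almost surely.

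Your termination argument is also too weak as sketched. A uniform lower bound $p>0$ on the acceptance probability of a length-$\delta$ step does not stop the recursion: each rejected node spawns two children, so the tree of splits is a binary branching process that survives forever with positive probability whenever $2(1-p)>1$. What is actually needed --- and what Lemma~\ref{lemma:finite_unstable} proves --- is that the per-node failure probability at depth $d$ is $o(2^{-d})$, so that the expected number of failing nodes among the $2^d$ nodes at depth $d$ tends to $0$; this comes from the occupancy bound $\Pois{O(2^{-d})}$ on the number of occurrences per node together with the exponential law of $u_i - \rho_i$. You flag ``quantifying how the acceptance probability approaches one'' as the main obstacle, which is the right instinct, but the Borel--Cantelli/branching framing you propose would have to be replaced by this quantitative rate to close the argument.
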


The requirement that $S_t[k]$ be bounded in probability is necessary, 
since without it the state of the system could diverge to infinity in finite time,
which would prevent its simulation by either the $\tau$-splitting algorithm 
or the Gillespie algorithm.
This can happen when using reactions of the form $2A \to 3A$, 
but does not occur in reaction systems where rules such as conservation of mass are in effect.

\subsection{The sampling of \texorpdfstring{$\tau$}{tau}-leaping steps}\label{section:basic:tau_sampling}

We now turn to the sampling and splitting of $\tau$-leaping steps.
We first define the distribution $\cY$ used in Property~\ref{property:tau:distribution}, 
a function $f$ used to sample values from that distribution, 
and the vectors $\pr{l_i}_i$ and $\pr{u_i}_i$.
We will then show how to implement Procedures~\ref{procedure:tau:split} and~\ref{procedure:tau:resample} 
such that Properties~\ref{property:tau:distribution} and~\ref{property:tau:independence} are satisfied.

In the following subsections we will use the $\beta$ distribution to sample the 
highest and lowest of $n$ uniformly distributed points.
It will be convenient to define $\beta(1, 0)$ to be the delta function at 1,
and $\beta(0, 1)$ to be the delta function at 0.

\subsubsection{Definition of \texorpdfstring{$f$}{f} and \texorpdfstring{$\cY$}{Y}}
To sample $\pr{e_i}_i$ and define $\pr{l_i}_i$ and $\pr{u_i}_i$ 
we will use a Poisson point process\footnote{
    The Poisson point process on $\RR^2$ with rate $\rho$ samples points on the plane 
    such that the number of points in a subset with area $A$ 
    is a Poisson distributed random variable with rate $\rho A$, 
    and the number of points in disjoint subsets of the plane 
    are independent random variables~\cite{kingman1992poisson}.
} $\cX$ on $\RR^2$ with rate 1.
We consider a plane $\RR^2$, where the x axis corresponds to time 
and the y axis corresponds to the reaction rate (Figure~\ref{figure:tau_sampling}).

Let $X$ be a sample of the Poisson point process.
Let $f(X, \timeseg{0}{1}\x\clopen{0, \rho})$ be the following function 
from samples of the Poisson point process to $\ZZ \x \RR \x \RR$.
The image of $f$ is a tuple $(e, l, u)$, 
where $e$ is defined to be the number of points 
sampled by the Poisson point process inside the rectangle $\timeseg{0}{1}\x\clopen{0, \rho}$, 
$l$ is defined to be the height of the highest point in $\timeseg{0}{1}\x\clopen{0, \rho}$, 
and $u$ is defined to be the height of the lowest point in $\timeseg{0}{1}\x\clopen{\rho, \infty}$.
The distribution $\cY(\rho, t)$ is defined to be
the distribution of $f(\cX, \clopen{0, t} \x \clopen{0, \rho})$.

\begin{figure}[H]
\begin{tikzpicture}
\draw[->, thick] (0,0) -- (10,0);
\draw[->, thick] (0,0) -- (0,6);

\def\tzero{3};
\def\tone{7};
\def\rate{3};
\fill[red!20] (\tzero,0) rectangle (\tone,\rate);


\node[below] at (0, 0) {0};
\node[below] at (\tzero, 0) {$t_0$};
\node[below] at (\tone, 0) {$t_1$};
\node[below] at (9,0) {time};

\node[left] at (0,0) {0};
\node[left] at (0,3) {$\rho$};
\node[left] at (0,5) {reaction rate};

\draw[black] (\tzero, 0) -- (\tzero, 6);
\draw[black] (\tone, 0) -- (\tone, 6);
\draw[black] (0, \rate) -- (10, \rate);
\draw[blue, dashed] (0,2.4) -- (10,2.4);
\node[blue, left] at (0,2.4) {$l$};
\draw[red, dashed] (0,3.7) -- (10,3.7);
\node[red, left] at (0,3.7) {$u$};

\fill (1,1) circle (2pt);
\fill (2,2.1) circle (2pt);
\fill (3.5,1) circle (2pt); 
\fill (4,2.4) circle (2pt); 
\fill (6.5,1.1) circle (2pt); 
\fill (6,3.7) circle (2pt); 
\fill (4.5,4.5) circle (2pt);
\fill (8,3.2) circle (2pt);
\fill (6.8,5.8) circle (2pt);

\end{tikzpicture}

\caption{An illustration of the Poisson point process. The rectangle $\clopen{t_0, t_1}\x\clopen{0, \rho}$ is shaded. Since there are three points inside the rectangle, $e = 3$. Lines through the highest point in the rectangle and lowest point in $\timeseg{0}{1}\x\clopen{0, \rho}$ indicate $l$ and $u$ respectively.}\label{figure:tau_sampling}
\end{figure}
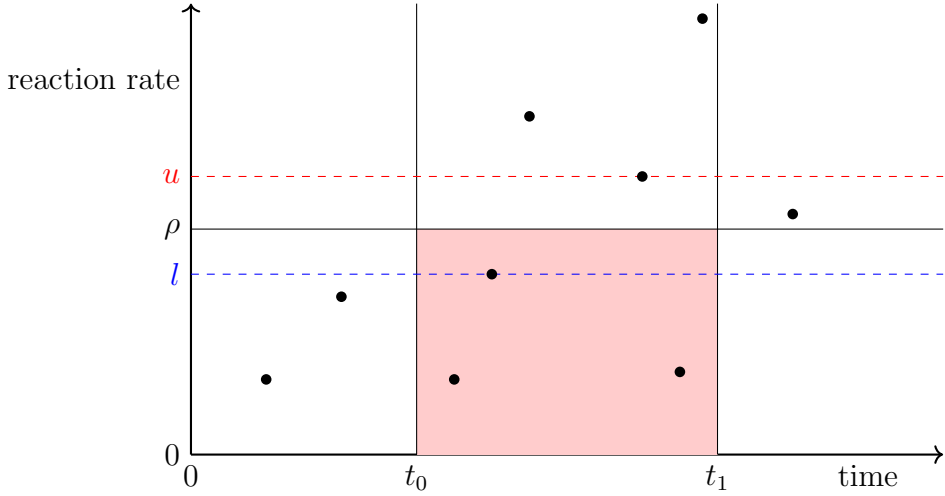

As required by Property~\ref{property:tau:distribution}, 
the value $e$ is Poisson distributed with rate $\rho t$. 
The value $l$ is the maximal among $e$ uniformly distributed points, 
and is thus distributed as $\rho\cdot\BetaVar{e}{1}$. 
The value $u$ is the first Point sampled by the Poisson process above $\rho$,
and is thus distributed as $\rho + \Exp{t}$.

\subsubsection{Implementation of Procedure~\ref{procedure:tau:split}}

Given a $\tau$-leaping step $\tau_{\timeseg{0}{1}} = \elh{_i}_i$ 
spanning $\timeseg{0}{1}$, 
we will sample a $\tau$-leaping step $\tau_{\timeseg{0}{1/2}} = {\elh{_i^L}}_i$
and a $\tau$-leaping precursor $\tau_{\timeseg{1/2}{1}} = {\elh{_i^R}}_i$
with initial state $S_{t_0}$.

For each reaction $R_i$, denote $\rho_i = \rho_i(S_{t_0})$.
Let $\cX$ be a Poisson point process on $\RR^2$ with rate 1,
and sample $X$ according to the distribution 
$\cX | f(\cX, \timeseg{0}{1}\x\clopen{0, \rho_i}) = \elh{_i}$.
We set ${\elh{_i^L}}$ to be $f(X, \timeseg{0}{1/2}\x\clopen{0, \rho_i})$ 
and ${\elh{_i^R}}$  to be $f(X, \timeseg{1/2}{1}\x\clopen{0, \rho_i})$ 
(Figure~\ref{figure:tau_split}).

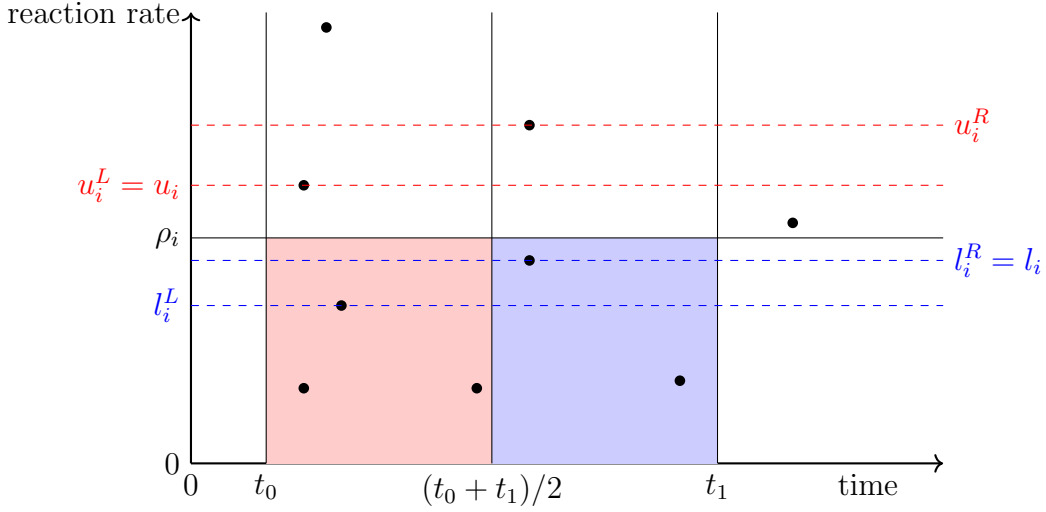
\begin{figure}[H]
\begin{tikzpicture}
\draw[->, thick] (0,0) -- (10,0);
\draw[->, thick] (0,0) -- (0,6);

\def\tzero{1};
\def\thalf{4};
\def\tone{7};
\def\rate{3};
\fill[red!20] (\tzero,0) rectangle (\thalf,\rate);
\fill[blue!20] (\thalf,0) rectangle (\tone,\rate);


\node[below] at (0, 0) {0};
\node[below] at (\tzero, 0) {$t_0$};
\node[below] at (\thalf, 0) {$(t_0 + t_1) / 2$};
\node[below] at (\tone, 0) {$t_1$};
\node[below] at (9,0) {time};

\node[left] at (0,0) {0};
\node[left] at (0,3) {$\rho_i$};
\node[left] at (0,6) {reaction rate};

\draw[black] (\tzero, 0) -- (\tzero, 6);
\draw[black] (\thalf, 0) -- (\thalf, 6);
\draw[black] (\tone, 0) -- (\tone, 6);
\draw[black] (0, \rate) -- (10, \rate);

\fill (1.5,1) circle (2pt);
\fill (1.8,5.8) circle (2pt);
\fill (3.8,1) circle (2pt); 
\fill (6.5,1.1) circle (2pt); 
\fill (8,3.2) circle (2pt);
\fill (2,2.1) circle (2pt);
\draw[blue, dashed] (0,2.1) -- (10,2.1);
\node[blue, left] at (0,2.1) {$l_i^L$};
\fill (4.5,2.7) circle (2pt); 
\draw[blue, dashed] (0,2.7) -- (10,2.7);
\node[blue, right] at (10,2.7) {$l_i^R = l_i$};
\fill (1.5,3.7) circle (2pt);
\draw[red, dashed] (0,3.7) -- (10,3.7);
\node[red, left] at (0,3.7) {$u_i^L = u_i$};
\fill (4.5,4.5) circle (2pt);
\draw[red, dashed] (0,4.5) -- (10,4.5);
\node[red, right] at (10,4.5) {$u_i^R$};

\end{tikzpicture}

\caption{An illustration of sampling ${\elh{_i^L}}$ and ${\elh{_i^R}}$ conditioned on $\elh{_i}$.}\label{figure:tau_split}
\end{figure}

The value $e_i^L$ is distributed $\Bin{e}{1/2}$, and $e_i^R = e_i - e_i^L$.
One of $u_i^L$ and $u_i^R$ is equal to $u_i$ with equal probability, and the other is distributed $u_i + \Exp{t/2}$.
One of $l_i^L$ and $l_i^R$ is equal to $l_i$ with probabilities $\frac{e_i^L}{e_i}$ and $\frac{e_i^R}{e_i}$ respectively.
Let $l_i^X$ be the one that is not equal to $l_i$.
Then $l_i^X$ is distributed $l_i \cdot \BetaVar{e_i^X}{1}$.

To prove that this procedure maintains Properties~\ref{property:tau:distribution} 
and~\ref{property:tau:independence},
note that when $\elh{_i}\sim\cY(\rho, (t_1 - t_0)) = f(\cX, \timeseg{0}{1}\x\clopen{0, \rho})$,
$X$ is a random sample of $\cX$, 
and is independent from any random variable from which $\elh{_i}$ are independent. 
Thus, by the properties of the Poisson process, ${\elh{_i^L}}$ and ${\elh{_i^R}}$
are random samples of $\cY(\rho, (t_1 - t_0)/2)$,
and are independent of each other and of any other random 
variable from which $\elh{_i}$ are independent.

\subsubsection{Implementation of Procedure~\ref{procedure:tau:resample}}

Given a $\tau$-leaping precursor $\tau_{\timeseg{0}{1}}^* = {\elh{_i^*}}_i$ 
spanning $\timeseg{0}{1}$ with initial state $S^*$, 
we will sample a $\tau$-leaping step $\tau_{\timeseg{0}{1}} = \elh{_i}_i$ 
spanning $\timeseg{0}{1}$ with initial state $S_{t_0}$.

For each reaction $R_i$, denote $\rho_i = \rho_i(S_{t_0})$ and $\rho_i^* = \rho_i(S^*)$.
Let $\cX$ be a Poisson point process on $\RR^2$ with rate 1,
and sample $X$ according to the distribution $\cX | f(\cX, \timeseg{0}{1}\x\clopen{0, \rho_i^*}) = {\elh{_i^*}}$.
We set $\elh{_i}$ to be $f(X, \timeseg{0}{1}\x\clopen{0, \rho_i})$.
(Figure~\ref{figure:tau_resampling}).
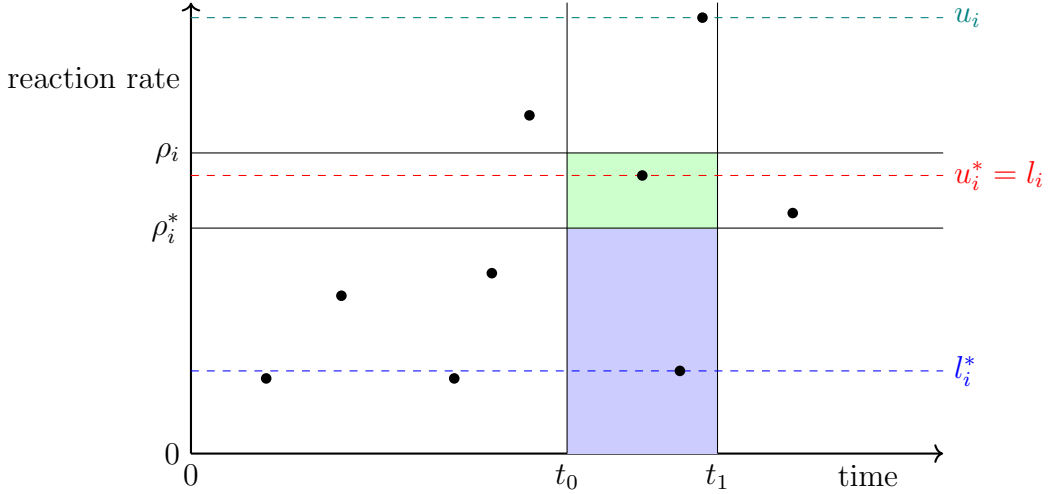
\begin{figure}[H]
\begin{tikzpicture}
\draw[->, thick] (0,0) -- (10,0);
\draw[->, thick] (0,0) -- (0,6);

\def\tzero{3};
\def\tzero{5};
\def\tone{7};
\def\rate{3};
\fill[blue!20] (\tzero,0) rectangle (\tone,\rate);
\fill[green!20] (\tzero,\rate) rectangle (\tone,4);


\node[below] at (0, 0) {0};
\node[below] at (\tzero, 0) {$t_0$};
\node[below] at (\tone, 0) {$t_1$};
\node[below] at (9,0) {time};

\node[left] at (0,0) {0};
\node[left] at (0,3) {$\rho_i^*$};
\node[left] at (0,4) {$\rho_i$};
\node[left] at (0,5) {reaction rate};

\draw[black] (\tzero, 0) -- (\tzero, 6);
\draw[black] (\tone, 0) -- (\tone, 6);
\draw[black] (0, \rate) -- (10, \rate);
\draw[black] (0, 4) -- (10, 4);
\draw[blue, dashed] (0,1.1) -- (10,1.1);
\node[blue, right] at (10,1.1) {$l_i^*$};
\draw[red, dashed] (0,3.7) -- (10,3.7);
\node[red, right] at (10,3.7) {$u_i^* = l_i$};
\draw[teal, dashed] (0,5.8) -- (10,5.8);
\node[teal, right] at (10,5.8) {$u_i$};

\fill (1,1) circle (2pt);
\fill (2,2.1) circle (2pt);
\fill (3.5,1) circle (2pt); 
\fill (4,2.4) circle (2pt); 
\fill (6.5,1.1) circle (2pt); 
\fill (6,3.7) circle (2pt); 
\fill (4.5,4.5) circle (2pt);
\fill (8,3.2) circle (2pt);
\fill (6.8,5.8) circle (2pt);

\end{tikzpicture}

\caption{An illustration of sampling $\elh{_i}$ 
conditioned on ${\elh{_i^*}}$.
Using the rate $\rho_i^*$ leads to the blue rectangle, and hence $e_i^* = 1$.
Using the rate $\rho_i$ leads to the union of the blue and green rectangles,
and hence $e_i = 2$.
}\label{figure:tau_resampling}
\end{figure}

To determine the distribution of $\elh{_i}$, we distinguish between three cases:

\begin{itemize}
    \item $\rho_i \in \clopen{u_i^*, \infty}$: In this case (See Figure~\ref{figure:tau_resampling})\begin{itemize}
        \item $e_i \sim e_i^* + 1 + \Pois{(t_1 - t_0)(\rho_i - u_i^*)}$;
        \item $l_i \sim u_i^* + (\rho_i - u_i^*)\cdot \BetaVar{e_i - e_i^* - 1}{1}$;
        \item $u_i \sim \rho_i + \Exp{t_1 - t_0}$.
    \end{itemize}
    \item $\rho_i \in \clopen{l_i^*, u_i^*}$: In this case \begin{itemize}
        \item $e_i = e_i^*$;
        \item $l_i = l_i^*$;
        \item $u_i = u_i^*$.
    \end{itemize}
    \item $\rho_i \in \clopen{0, l_i^*}$: In this case \begin{itemize}
        \item $e_i \sim \Bin{e_i^* - 1}{\frac{\rho_i}{l_i^*}}$;
        \item $l_i \sim \rho_i \cdot \BetaVar{e_i}{1}$;
        \item $u_i \sim \rho_i + \pr{l_i^* - \rho_i} \BetaVar{1}{e_i^* - e_i - 1}$.
    \end{itemize}
\end{itemize}

To prove that this procedure maintains Properties~\ref{property:tau:distribution} and~\ref{property:tau:independence},
note that when ${\elh{_i^*}}\sim\cY(\rho_i^*, t) = f(\cX, \timeseg{0}{1}\x\clopen{0, \rho_i^*})$,
$X$ is a random sample of $\cX$.
Thus, by the properties of the Poisson process, 
the tuple $\elh{_i}$ is a random sample of $\cY(\rho, t)$ 
and is independent from any random variable from which ${\elh{_i^*}}$ are independent.

\subsection{The stopping condition}\label{section:basic:stop_condition}

We will now define the exact stopping condition used in Procedure~\ref{procedure:stop_condition},
as well as a series of approximate stopping conditions.
We will then prove that the distribution sampled using the approximate stopping conditions converges to both $\PP_T$ and to the distribution sampled using the exact stopping condition,
thereby concluding the proof.


Note that splitting a $\tau$-leaping step by Procedure~\ref{procedure:tau:split}
conserves the total number of reaction occurrences and widens the gap between $l_i$ and $u_i$, 
and that resampling a step from a precursor by Procedure~\ref{procedure:tau:resample} 
preserves ${\elh{_i^*}}$ if $\rho_i \in \clopen{l_i^*, u_i^*}$.
Thus, once $\rho_i$ can no longer leave $\clopen{l_i, u_i}$, 
the total number of reaction occurrences will not change by further splitting.
A technical point we need is that the change in rate after the last reaction occurrence
in a time segment cannot reduce the total number of reactions. 
Thus, when computing a lower bound on the change in rate, we do not include the last
occurrence of $R_i$.

Recall that $O_i$ is the stoichiometry vector of the $i$'th reaction.
Let $O_i^+$ be the vector such that $O_i^+[k] = \max\pr{O_i[k], 0}$,
and $O_i^-$ be the vector such that $O_i^-[k] = \min\pr{O_i[k], 0}$, for every $k$.
We extend the domain of the rate function to $\ZZ^s$ by setting $\rho_i(S) = 0$ 
if $S[k] < 0$ for any $k$.

We can now define the stopping condition:

\begin{definition}[The exact stopping condition]\label{def:stop_condition}
    Let $\delta_i = \min\pr{e_i, 1}$. 
    A $\tau$-leaping step over $\timeseg{0}{1}$ with vectors $\pr{e_i}_i, \pr{l_i}_i, \pr{u_i}_i$ is mapped to STOP if and only if for all $i$:
    \begin{enumerate}[label=C\arabic*., ref=C\arabic*]
        \item\label{stop_condition_1} 
        The rate $\rho_i(S_{t_0} + \sum_j e_j O_j^+) < u_i$.
        \item\label{stop_condition_2}
        The rate $\rho_i(S_{t_0} + \sum_j e_j O_j^- - \delta_i O_i^-) \ge l_i$.
    \end{enumerate}
\end{definition}

The proof that the exact stopping condition satisfies Property~\ref{property:stop:exact} 
makes use of a binary tree describing the $\tau$-leaping steps.
As described in Section~\ref{section:basic_algorithm}, 
the algorithm constructs a binary tree, 
where the root node corresponds to the $\tau$-leaping step over the time segment $\clopen{0, T}$, 
and if the stopping condition is not satisfied by a $\tau$-leaping step over $\timeseg{0}{1}$,
then the left child node corresponds to the $\tau$-leaping step over $\timeseg{0}{1/2}$
and the right child node corresponds to the $\tau$-leaping step over $\timeseg{1/2}{1}$.

For a node $N$, define $e_i^N$ to be the value $e_i$ of the $\tau$-leaping step associated with $N$,
and define $l_i^N$, $u_i^N$, and $\delta_i^N$ similarly.

For a state $S$ and a node $N$,
let $N(S) = S + \sum e_i^N O_i$ be the approximate state after applying the step, 
let $N^+(S) = S + \sum e_i^N O_i^+$,
and let $N^-(S) = S + \sum e_i^N O_i^-$.
Note that for all $k$, $N^-(S)[k] \le S[k] \le N^+(S)[k]$, 
and that since the rate is monotone in the amount of the $k$'th reactant,
$\rho_i(N^-(S)) \le \rho_i(S) \le \rho_i(N^+(S))$.

We will refer to nodes mapped to STOP as \emph{stable}.
We will now introduce an approximate stopping condition parametrized by $d \in \NN$.
Under the $d$-approximate stopping condition, the binary tree constructed by the algorithm 
is the complete binary tree of depth $d$.

\begin{definition}[The $d$-approximate stopping condition]\label{def:stop_condition_approx}
    Let $d \in \NN$.
    A $\tau$-leaping step is mapped to STOP if and only if it corresponds to a node
    of depth at least $d$.
\end{definition}

Let $\PP_T^d$ be the distribution sampled by the algorithm using the $d$-approximate stopping condition
rather than the exact stopping condition.
We will now show that $\PP_T^d$ converges to $\PP_T$ as $d \to \infty$.

\begin{lemma}\label{lemma:distribution_convergence}
    If $\rho_i$ is bounded in probability for all $i$,
    then as $d \to \infty$, $\PP_T^d$ converges in $L_1$ to $\PP_T$.
\end{lemma}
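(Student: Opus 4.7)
The plan is to couple the $d$-approximate $\tau$-splitting process with the exact Gillespie process using a shared family of Poisson point processes $\cX_i$ (one per reaction, as in Section~\ref{section:basic:tau_sampling}), and to show that under this coupling the two sampled states at time $T$ agree with probability tending to $1$ as $d\to\infty$. By Properties~\ref{property:tau:distribution} and~\ref{property:tau:independence}, the $d$-approximate algorithm, whose recursion tree is a complete binary tree of depth $d$, produces the same joint distribution of leaf $\tau$-leaping steps as a plain $\tau$-leaping simulation with fixed interval length $\tau := T/2^d$: in each interval $I_k := \clopen{k\tau, (k+1)\tau}$ one samples $e_i$ as the count of $\cX_i$-points in $I_k \x \clopen{0, \rho_i(S_{k\tau})}$ and then updates $S_{(k+1)\tau} = S_{k\tau} + \sum_i e_i O_i$, independently across intervals given the intervening states. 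The exact Gillespie process is driven by the same $\cX_i$, counting the points that lie below each reaction's time-varying rate curve.

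Next I would reduce to bounded rates via a ``good'' event. Fix $\epsilon > 0$; since $\rho_i$ is bounded in probability under the exact dynamics, choose $M$ so that $\PP(E_M) \ge 1-\epsilon$, where $E_M := \cbr{\sup_{t \in \clopen{0,T}} \max_i \rho_i(S_t) \le M}$. Let $G_d$ be the event that for every $k \in \cbr{0,\ldots,2^d-1}$ the rectangle $I_k \x \clopen{0, M}$ contains at most one point of $\bigcup_i \cX_i$. The total rate in each such rectangle is at most $dM\tau$, so a Poisson tail bound combined with a union bound gives $\PP(G_d^c) \le 2^d \cdot (dM\tau)^2/2 = d^2 M^2 T^2/(2\cdot 2^d) \to 0$ as $d \to \infty$.

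On $E_M \cap G_d$ I would argue by induction on $k$ that the exact and approximate sampled states coincide at time $k\tau$. In the inductive step, both processes enter $I_k$ at the same state $S$ with rates $\rho_i(S) \le M$, so by $G_d$ at most one reaction can occur in $I_k$ in either process. Moreover, since the exact process has no reaction inside $I_k$ strictly before the candidate point, its rate curves remain constant and equal to $\rho_i(S)$ up to that candidate time; hence the exact and approximate processes both count the candidate iff its height lies below the corresponding $\rho_i(S)$. They therefore execute the same reaction (or none) and finish $I_k$ in the same state.

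Letting $A_d$ be the coupling event that the two sampled states at time $T$ differ, the previous steps give $\PP(A_d) \le \PP(E_M^c) + \PP(G_d^c) \le \epsilon + o(1)$, so $\limsup_{d\to\infty}\PP(A_d) \le \epsilon$; letting $\epsilon \to 0$ yields $\PP(A_d) \to 0$. Because $\NN^s$ is countable, $L_1$ convergence of distributions is equivalent to total-variation convergence, which is bounded by $\PP(A_d)$ via the coupling inequality, finishing the proof. The main subtlety I anticipate is formalizing the induction step: the approximate process uses the fixed rate $\rho_i(S)$ throughout $I_k$ while the exact rate is \emph{a priori} time-varying, and it is essential that both counts be controlled by the same ``at most one point in $I_k \x \clopen{0, M}$'' event defined via the global envelope $M$ rather than via instantaneous rates, so that the rate change after the candidate point cannot create a discrepancy.
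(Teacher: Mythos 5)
Your proposal is correct, but it reaches the conclusion by a genuinely different route than the paper. Both arguments share the same first reduction: by Properties~\ref{property:tau:distribution} and~\ref{property:tau:independence}, the depth-$d$ tree collapses to plain $\tau$-leaping with step $2^{-d}T$, with conditionally independent Poisson counts at the leaves. From there the paper compares \emph{transition rates}: it notes that the frozen rate used by the algorithm differs from the true rate only after a reaction occurrence and only for a duration at most $2^{-d}T$, bounds the integrated $L_1$ discrepancy of the rates by $O(2^{-d})$ in probability, and then invokes the (stated without proof, and somewhat delicate) principle that the integrated rate discrepancy dominates the $L_1$ distance of the time-$T$ marginals. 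You instead build an explicit coupling through shared driving Poisson point processes, restrict to a good event on which each depth-$d$ slab below the envelope $M$ contains at most one point, show by induction that the exact and approximate chains then agree at every grid time, and finish with the coupling inequality (valid since the state space is countable, so $L_1$ equals twice total variation). Your version is more self-contained and makes the mechanism concrete --- it avoids comparing a discrete-time Poisson-increment chain to a CTMC at the level of generators, at the cost of being slightly longer and of requiring the standard (but unstated in the paper) thinning representation of the exact CTMC by the same point processes. Two small points to tidy: the symbol $d$ in your bound $2^d(dM\tau)^2/2$ is doing double duty as the recursion depth and as the paper's number of reactions, so rename one of them; and you should remark explicitly that on the good event the approximate rates also stay below $M$ (this follows from the inductive agreement of states at grid points together with $E_M$), which is what guarantees that the approximate counting rectangles sit inside the slab governed by $G_d$.
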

\begin{proof}
    The state is only changed by nodes at depth $d$, and we may ignore all other nodes.
    For those nodes, the length of the time segment is $2^{-d}T$.
    By Properties~\ref{property:tau:distribution} and~\ref{property:tau:independence},
    the steps at nodes at depth $d$ are independent when conditioned on the state at the beginning of the node; 
    for each node $N$, the values $\pr{e_i^N}_i$ are independent when conditioned on the state at the beginning of the node;
    and the distribution of $e_i^N$ is $\Pois{2^{-d}T \rho_i^N}$.
    
    Since the reaction rates are bounded in probability,
    for every $\epsilon > 0$ there exists a $C$ such that $P(\forall t,\ \ C > \sum_i \rho_i(S_t)) > 1 - \epsilon$.
    Conditional on that event, the total number of reaction occurrences is stochastically dominated by $\Pois{CT}$. 
    Thus, the total number of reaction occurrences is bounded in probability.
    After a reaction occurrence at time $t$ in a node $N$, the rate $\rho_i(N)$ used by the algorithm 
    differs from the rate $\rho_i(S_t)$ for a period of time shorter than the length of the 
    $\tau$-leaping step in which it occurs, 
    which is $2^{-d}T$, and the absolute difference is smaller than $C$.
    Thus, the integral of the $L_1$ difference of the transition rates 
    of the system simulated by the algorithm and the discrete state SDE is in $O(2^{-d})$ in probability.
    Since the integral of the $L_1$ difference of transition rates is greater than the integral of the $L_1$
    difference of the distributions, the claim is proved.
\end{proof}

\bigskip

We now prove that for any $d$, the exact stopping condition~\ref{def:stop_condition} 
satisfies Property~\ref{property:stop:exact} for the system sampled using the
$d$-approximate algorithm.
We will prove this by induction on the depth of the node in the tree.

\begin{lemma}\label{lemma:approximate_node_stability}
    Let $N$ be a node spanning $\timeseg{0}{1}$ sampled using the $d$-approximate algorithm,
    with initial state $S_{t_0}$ and child nodes $L$ and $R$.
    If $N$ satisfies the exact stopping condition~\ref{def:stop_condition}, 
    then $L$ and $R$ satisfy it as well, 
    and $S_{t_1} = S_{t_0} + \sum_i e_i^N O_i$.
\end{lemma}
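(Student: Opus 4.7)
The plan is to proceed by induction on the height of $N$ in the binary tree produced by the $d$-approximate algorithm. The base case, when $N$ is a leaf at depth $d$, is trivial: the claim about $L$ and $R$ is vacuous, and $S_{t_1} = S_{t_0} + \sum_i e_i^N O_i$ is enforced directly by Step~1 of the algorithm. For the inductive step, my first task is to check that $L$ itself satisfies \ref{stop_condition_1} and \ref{stop_condition_2} at $S_{t_0}$. Since splitting preserves $e_i^L + e_i^R = e_i^N$, one has $e_i^L \le e_i^N$, $\delta_i^L \le \delta_i^N$, $u_i^L \ge u_i^N$ and $l_i^L \le l_i^N$; componentwise monotonicity of $\rho_i$ together with the signs of $O_j^+[k]$ and $O_j^-[k]$ should reduce the stopping conditions for $L$ to those of $N$ after a short case split on whether $e_i^L$ equals $0$ or is $\ge 1$. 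The inductive hypothesis applied to $L$ then yields $S_{t_{1/2}} = S_{t_0} + \sum_i e_i^L O_i$.

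My second task is to argue that resampling the precursor $R^* = \tau^*_{\timeseg{1/2}{1}}$ at $S_{t_{1/2}}$ leaves each triple $\elh{_i^{R^*}}$ unchanged. By the case analysis in Section~\ref{section:basic:tau_sampling}, this reduces to verifying $\rho_i(S_{t_{1/2}}) \in \clopen{l_i^{R^*}, u_i^{R^*}}$ for every $i$. The upper bound follows by dropping the non-positive $\sum_j e_j^L O_j^-$ to obtain $S_{t_{1/2}} \le S_{t_0} + \sum_j e_j^N O_j^+$ componentwise, then using \ref{stop_condition_1} of $N$ together with $u_i^{R^*} \ge u_i^N$. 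For the lower bound, if $e_i^R = 0$ the splitting rules force $l_i^{R^*} = 0$ and the inequality is automatic; if $e_i^R \ge 1$, a direct algebraic manipulation using $\sum_j e_j^L O_j^- = \sum_j e_j^N O_j^- - \sum_j e_j^R O_j^-$ shows that $S_{t_{1/2}}[k] \ge S_{t_0}[k] + \sum_j e_j^N O_j^-[k] - \delta_i^N O_i^-[k]$ for every $k \in I_i$, so monotonicity of $\rho_i$ and \ref{stop_condition_2} of $N$ give $\rho_i(S_{t_{1/2}}) \ge l_i^N \ge l_i^{R^*}$.

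With $R$'s triples equal to those of $R^*$, my third task is to verify the stopping condition for $R$ at $S_{t_{1/2}}$. For \ref{stop_condition_1}, the bound $S_{t_{1/2}} + \sum_j e_j^R O_j^+ \le S_{t_0} + \sum_j e_j^N O_j^+$ (obtained by dropping $\sum_j e_j^L O_j^-$) reduces the inequality to \ref{stop_condition_1} of $N$; for \ref{stop_condition_2}, dropping $\sum_j e_j^L O_j^+$ and using $\delta_i^R \le \delta_i^N$ gives $S_{t_{1/2}} + \sum_j e_j^R O_j^- - \delta_i^R O_i^- \ge S_{t_0} + \sum_j e_j^N O_j^- - \delta_i^N O_i^-$ componentwise, so \ref{stop_condition_2} of $N$ closes the bound. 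Applying the inductive hypothesis to $R$ then yields $S_{t_1} = S_{t_{1/2}} + \sum_i e_i^R O_i$, and combining with the earlier identity for $S_{t_{1/2}}$ delivers $S_{t_1} = S_{t_0} + \sum_i e_i^N O_i$.

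The main obstacle is the lower bound in the second task. The naive monotonicity estimate would miss $l_i^N$ by exactly one unit of $-O_i^-$ in the case $e_i^L = 0$ with $e_i^N \ge 1$, and the $-\delta_i O_i^-$ correction inside Definition~\ref{def:stop_condition} is calibrated to absorb precisely this gap. The assumption $e_i^R \ge 1$ (which is forced whenever $l_i^{R^*} > 0$, by the splitting rules) contributes the compensating $-e_i^R O_i^- \ge -O_i^-$ coming from the $\sum_j e_j^R O_j^-$ term, which is why the treatment of the last reaction occurrence discussed before Definition~\ref{def:stop_condition} is essential.
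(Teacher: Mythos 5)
Your proposal is correct and follows essentially the same inductive route as the paper: induct on depth, show $L$ inherits~\ref{stop_condition_1} and~\ref{stop_condition_2} by monotonicity of $\rho_i$ together with $e_i^L \le e_i^N$, $u_i^L \ge u_i^N$, $l_i^L \le l_i^N$, apply the induction hypothesis to get $S_{t_{1/2}} = L(S_{t_0})$, then transfer the conditions to $R$ and conclude. One point where you are actually more careful than the paper: you explicitly verify the lower bound $\rho_i(S_{t_{1/2}}) \ge l_i^{R*}$ so that the resampling of Procedure~\ref{procedure:tau:resample} is the identity and $e_i^R = e_i^{R*}$ — a fact the paper's final display relies on but for which it only spells out the upper bound $\rho_i^R \le u_i^N$. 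One small slip in your third task: the phrase ``using $\delta_i^R \le \delta_i^N$'' points the wrong way, since $-O_i^-[k] \ge 0$ means a smaller $\delta_i^R$ weakens the bound; what closes the argument is the case split you already use in the second task ($e_i^R = 0$ makes~\ref{stop_condition_2} vacuous because $l_i^R = 0$, while $e_i^R \ge 1$ forces $\delta_i^R = \delta_i^N = 1$), which is exactly how the paper handles it.
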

\begin{proof}
    For a node of depth $d$, the lemma is trivially true since the algorithm 
    changes $S$ to $S + \sum e_i^N O_i$.
    We will now show that if the lemma applies to all nodes of depth greater than $d'$, 
    then it applies to nodes of depth $d'$ as well.

    Consider a node $N$ of depth $d'$ satisfying~\ref{stop_condition_1} 
    and~\ref{stop_condition_2}, 
    with child nodes $L$ and $R$.
    We first prove that $L$ satisfies~\ref{stop_condition_1} and~\ref{stop_condition_2}. 
    For every $k$ we have\footnote{
        Recall that for every node $N$, and in particular for the node $L$, 
        we defined $N(S) = S + \sum_i e_i^N O_i$ and $N^+(S) = S + \sum_i e_i^N O_i^+$.
    }
    $L^+(S)[k] \le N^+(S)[k]$ and $L^-(S)[k]  - \delta_i^L O_i^-[k] \ge N^-(S)[k]  - \delta_i^N O_i^-[k]$,
    and therefore $\rho_i(L^+(S)) \le \rho_i(N^+(S)) \le u_i^N \le u_i^L$
    and $\rho_i(L^-(S) - \delta_i^L O_i^-) \ge \rho_i(N^-(S) - \delta_i^N O_i^-) \ge l_i^N \ge l_i^L$,
    and hence $L$ satisfies~\ref{stop_condition_1} and~\ref{stop_condition_2}.
    By the induction hypothesis, Lemma~\ref{lemma:approximate_node_stability} applies to $L$, 
    and therefore the state after $L$ is equal to $L(S)$.

    We turn to showing that the lemma applies to $R$.
    \begin{align*}
        u_i^N \ge & r_i \prod_{k \in I_i} \binom {S[k] + \sum_j e_j^N O^+_j[k]}{C_i[k]} \ge  \\
                  & r_i \prod_{k \in I_i} \binom {S[k] + \sum_j e_j^L O^+_j[k]}{C_i[k]} \ge \\
                  & r_i \prod_{k \in I_i} \binom {S[k] + \sum_j e_j^L O_j[k]}{C_i[k]} = \\
                  & r_i \prod_{k \in I_i} \binom {L(S)[k]}{C_i[k]} = \rho_i^R,
    \end{align*}
    where the first inequality is satisfied by Condition~\ref{stop_condition_1},
    the second because $e_j^N \ge e_j^L$, the third because $O^+_j[k] \ge O_j[k]$ at every component, 
    and the last equality follows from the induction hypothesis.  
    Thus, $\rho_i^R \le u_i^N \le u_i^R$, and therefore $e_i^L + e_i^R \le e_i^N$.
    Hence:
    \begin{align*}
        u_i^N > & r_i \prod_{k \in I_i} \binom {S[k] + \sum_j e_j^N O^+_j[k]}{C_i[k]} \ge  \\
                & r_i \prod_{k \in I_i} \binom {S[k] + \sum_j (e_j^L + e_j^R) O^+_j[k]}{C_i[k]} \ge \\
                & r_i \prod_{k \in I_i} \binom {S[k] + \sum_j e_j^L O_j[k] + \sum_j e_j^R O_j^+[k]}{C_i[k]} = \\
                & \rho_i\pr{L(S) + \sum_j e_j^R O_j^+},
    \end{align*}

    and hence $R$ satisfies~\ref{stop_condition_1}.

    We next consider~\ref{stop_condition_2}.

    If $e_i^R \ne 0$, then $\delta_i^R = \delta_i^N$, and:
    \begin{align*}
        l_i^N \le & r_i \prod_{k \in I_i} \binom {S[k] + \sum_j e_j^N O^-_j[k] - \delta_i^N O_i[k]}{C_i[k]} \le \\
                  & r_i \prod_{k \in I_i} \binom {S[k] + \sum_j (e_j^L + e_j^R) O^-_j[k] - \delta_i^R O_i[k]}{C_i[k]} \le \\
                  & r_i \prod_{k \in I_i} \binom {S[k] + \sum_j e_j^L O_j[k] + \sum_j e_j^R O_j^-[k] - \delta_i^R O_i[k]}{C_i[k]} = \\
                  & \rho_i\pr{L(S) + \sum_j e_j^R O_j^- - \delta_i^R O_i[k]},
    \end{align*}
    showing that $R$ satisfies~\ref{stop_condition_2}.
    Otherwise, since $e_i^R = 0$, we have $l_i^R = 0$, so $\rho_i\pr{L(S) + \sum_j e_j^R O_j^- - \delta_i^R O_i[k]} \ge l_i^R$.
    
    Since $R$ satisfies~\ref{stop_condition_1} and~\ref{stop_condition_2}, 
    the induction hypothesis applies to $R$, and the state after the algorithm simulates $R$ is 
    $R(L(S)) = S + \sum_i (e_i^R + e_i^L) O_i = S + \sum_i (e_i^{R*} + e_i^L) O_i = S + \sum_i e_i^N O_i = N(S)$.
\end{proof}

\bigskip

Lemma~\ref{lemma:approximate_node_stability} shows 
that the distribution sampled using the inexact stopping condition is the same 
as the distribution sampled using the stopping condition that stops when either 
the inexact or the exact stopping condition map a step to STOP.\@
We will now show that as $d \to \infty$, almost surely this stopping condition is equal to the exact stopping condition.

\begin{lemma}\label{lemma:finite_unstable}
    Almost surely, there exist only a finite number of nodes that do 
    not satisfy the stopping condition~\ref{def:stop_condition}.
\end{lemma}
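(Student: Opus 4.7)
The plan is to bound the expected total number of unstable nodes by a convergent sum, so that almost-sure finiteness follows from Markov's inequality. First I would reduce to the case of uniformly bounded rates: since $S_t[k]$ is bounded in probability for all $k$ and $t \in [0, T]$, for any $\epsilon > 0$ there is a constant $M$ such that with probability at least $1 - \epsilon$ one has $\rho_i(S_t) \le M$ for every $i$ and $t \in [0, T]$. It suffices to prove finiteness on this event and take $\epsilon \to 0$. I would then consider the infinite binary tree obtained by extending the $d$-approximate algorithm to all depths (sampling at every node via Procedures~\ref{procedure:tau:split} and~\ref{procedure:tau:resample}) and bound the expected number of unstable nodes in it.

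Second, for a node $N$ at depth $h$ with segment length $\tau = T/2^h$, Property~\ref{property:tau:distribution} gives $u_i^N - \rho_i^N \sim \Exp{\tau}$, independent $e_j^N \sim \Pois{\rho_j^N \tau}$, and $l_i^N = \rho_i^N \cdot \BetaVar{e_i^N}{1}$ when $e_i^N \ge 1$. Each $\rho_i$ is a polynomial in the components of $S$ of bounded degree, hence Lipschitz on the bounded-rate region with some constant $C$ depending on the system and $M$, so the increment $\rho_i(S_{t_0} + \sum_j e_j^N O_j^+) - \rho_i^N$ is bounded by $C \sum_j e_j^N$, whose expectation is $O(\tau)$. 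Using the elementary inequality $\PP(\Exp{\tau} < X) \le \tau\,\EE[X]$ for nonnegative $X$, the probability of violating Condition~\ref{stop_condition_1} for a single reaction is $O(\tau^2)$, and an analogous computation for Condition~\ref{stop_condition_2} (using $\PP(\BetaVar{n}{1} > y) = 1 - y^n$ together with the Poisson law of $e_i^N$) gives the same bound. Union-bounding over the finitely many reactions yields $\PP(N \text{ unstable}) = O(T^2 / 4^h)$. Summing over the $2^h$ nodes at depth $h$ and then over $h \ge 0$ produces a convergent geometric series of total expectation $O(T^2)$, and Markov's inequality then delivers almost-sure finiteness of the set of unstable nodes.

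The main obstacle is handling Condition~\ref{stop_condition_2}: when $e_i^N$ is large, $l_i^N$ concentrates near $\rho_i^N$, so a bound treating the marginals of $e_i^N$ and $l_i^N$ separately would give only $O(\tau)$ and cause the sum over depths to diverge. The rescue is to evaluate the joint probability directly and exploit that $\PP(e_i^N \ge 1) = O(\tau)$, which supplies the missing factor of $\tau$ needed for the $O(\tau^2)$ estimate. A minor secondary issue is that the Lipschitz constant $C$ depends on $M$, but this dependence is absorbed into the overall $O(T^2)$ bound before passing $\epsilon \to 0$.
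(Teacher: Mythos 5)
Your overall strategy is sound and is essentially the paper's argument with different bookkeeping: the paper likewise reduces to uniformly bounded rates, shows that the probability of an unstable node at depth $d$ is summable over $d$, and concludes by a Borel--Cantelli-type step, whereas you sum expectations over the whole infinite tree; either route closes the argument. Your treatment of Condition~\ref{stop_condition_1} is fine: $u_i^N - \rho_i^N \sim \Exp{\tau}$ is independent of the $e_j^N$ by the Poisson point process construction, and $\EE\br{\rho_i(S^+) - \rho_i(S)} = O(\tau)$, giving an $O(\tau^2)$ failure probability.

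The gap is in Condition~\ref{stop_condition_2}. You correctly diagnose that the marginal bound gives only $O(\tau)$, but the rescue you propose --- ``exploit that $\PP(e_i^N \ge 1) = O(\tau)$'' --- supplies only that one factor of $\tau$; you still need the conditional failure probability given $e_i^N \ge 1$ to be $O(\tau)$, and it is not for a generic reason. Conditional on $e_i^N = 1$ with no other occurrences, $l_i^N$ is uniform on $\clopen{0, \rho_i(S)}$, and comparing it to $\rho_i(S + O_i^-)$ gives a failure probability of $1 - \rho_i(S + O_i^-)/\rho_i(S)$, a quantity of order $\sum_k C_i[k]\abs{O_i[k]}/S[k]$ that does not depend on $\tau$; the per-depth expected count of such failures would then be $\Theta(1)$ and the sum over depths would diverge, so the lemma would be false for that version of the condition. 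The ingredient that saves it is the $-\delta_i O_i^-$ term in Condition~\ref{stop_condition_2}: when $e_i^N = 1$ is the only occurrence, $S^- - \delta_i O_i^- = S$ exactly, so the condition holds with probability one, and a violation therefore forces at least two occurrences among $R_i$ and the reactions affecting its inputs --- an event of probability $O(\tau^2)$ by independence of the $e_j^N$. This is exactly how the paper argues (``for this reaction, $e_i = \delta_i = 1$, and thus $S^- - \delta_i O_i^- = S$''), and it is the reason the stopping condition was defined with the $\delta_i$ correction; your proof must invoke it explicitly. A smaller imprecision: the Lipschitz bound $C\sum_j e_j^N$ on the rate increment is not literally valid, since $\rho_i$ has degree up to $\sum_k C_i[k]$ and the perturbed state can leave the region on which you conditioned, but because all moments of $\Pois{O(\tau)}$ are $O(\tau)$ the expectation estimate you need still holds.
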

\begin{proof}
    As we saw in the proof of Lemma~\ref{lemma:distribution_convergence},  the total number of reaction occurrences is bounded in probability,
    and for every node $N$ of depth $d$, the values $e_i^N$ 
    are distributed independently as $\Pois{2^{-d}T \rho_i^N}$.
    Thus, the total number of occurrences in a node is distributed as 
    $\Pois{2^{-d}T \sum_i \rho_i^N}$.
    
    Since the reaction rates are bounded in probability, 
    for every $\epsilon$ there exists a $C$ such that the sum of the reaction rates is smaller than $C$
    with probability $1-\epsilon$.
    Conditioning on this event, 
    the total number of reaction occurrences in a node is stochastically dominated by $\Pois{2^{-d} C}$.
    The probability of a node having one occurrence is in $O(2^{-d})$, 
    and the probability of a node having at least two occurrences is in $O(2^{-2d})$.
    Since there are $2^d$ nodes, 
    and the probability of any node having at least two occurrences is in $O(2^{-d})$,
    almost surely there exists a $d$ where no node has more than one occurrence.

    When a node has zero occurrences it trivially satisfies~\ref{stop_condition_1} and~\ref{stop_condition_2}.
    When a node has only one occurrence, then $l_i^N \ne 0$ only for the reaction where $e_i^N = 1$.
    For this reaction, $e_i = \delta_i = 1$, and thus $S^- - \delta_i O_i^- = S$,
    so the node satisfies~\ref{stop_condition_2}.

    For a node $N$ with initial state $S$ and $e_i^N = 1$, 
    we need $\rho_i(N^+(S))$ to be smaller than $u_i$ for the node to satisfy~\ref{stop_condition_1}.
    Since $u_i - \rho_i$ is exponentially distributed with rate $O(2^{-d})$ in probability, 
    the probability of this condition not being satisfied is in $O(2^{-d})$.
    Since the number of nodes of a given depth with one reaction occurrence is bounded in probability
    and the probability of a node not satisfying Conditions~\ref{stop_condition_1} and~\ref{stop_condition_2} is $O(2^{-d})$,
    there almost surely exists a depth at which all nodes satisfy~\ref{stop_condition_1} and~\ref{stop_condition_2}
    and we are done.
\end{proof}

By Lemma~\ref{lemma:finite_unstable}, as $d \to \infty$, almost surely there exists a depth $d'$ at which the recursion (and the algorithm) terminates. 
For any $d > d'$, by Lemma~\ref{lemma:approximate_node_stability}, the $d$-inexact algorithm behaves the same as the exact algorithm.
Therefore, as $d \to \infty$, the $d$-inexact algorithm converges to the exact algorithm. 
Hence, Property~\ref{property:stop:exact} follows from Lemma~\ref{lemma:approximate_node_stability}, 
and Theorem~\ref{theorem:exact_algorithm_correctness} follows from Lemma~\ref{lemma:distribution_convergence}.

\section{The full \texorpdfstring{$\tau$}{tau}-splitting algorithm}\label{section:tau_improved}

We will now present the full $\tau$-splitting algorithm.
In the basic algorithm, when we split a node, we split all the reactions,
even if some reactions satisfy the conditions of the exact stopping condition~\ref{def:stop_condition} individually.
To improve the complexity of the algorithm, 
we identify a stopping condition for individual reactions,
and stop splitting each reaction as early as we can.

    



\subsection{The stopping condition for individual reactions}

We will now present the stopping condition for individual reactions 
used in the full $\tau$-splitting algorithm.

Let $N$ be a node spanning $\timeseg{0}{1}$. 
Since the progress of a discrete state SDE is unpredictable, 
it is difficult to find an exact stopping condition that uses only the data from before $t_0$
to decide whether to stop splitting an individual reaction.
Instead, we will use a stopping condition that also relies on a condition being maintained during the
time segment $\timeseg{0}{1}$.
We will stop splitting the reaction once it satisfies the condition, 
and resume splitting it if the condition is violated.

We will use the following condition:

\begin{definition}[The exact stopping condition for reactions]\label{def:stop_condition_reaction}
    A reaction $R_i$ is mapped to STOP if and only if it satisfies Conditions~\ref{stop_condition_1} and~\ref{stop_condition_2}.
\end{definition}

We now prove the analog of Property~\ref{property:stop:exact} for the stopping condition:

\begin{lemma}[Reaction Stopping Condition]\label{lemma:reaction_deactivation}
    Let $R_i$ be a reaction that satisfies~\ref{stop_condition_1} and~\ref{stop_condition_2} at a node $N$.
    If for all right child nodes\footnote{
        A \emph{right child node} is a node that is the right child of its parent node.
    } $M$ in the subtree rooted by $N$, all reactions $R_j$ on which $R_i$ depends satisfy
    $e_j^{M*} \le e_j^M$, then the total number of times the reaction $R_i$ occurs in the time segment spanned by $N$ is equal to $e_i^N$.
\end{lemma}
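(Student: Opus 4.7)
The plan is to prove the lemma by induction on the subtree rooted at $N$, following the template of Lemma~\ref{lemma:approximate_node_stability} but tracking how resampling (Procedure~\ref{procedure:tau:resample}) can alter reaction counts at right children. Let $c(M, R_i)$ denote the number of $R_i$ occurrences in the time segment of a node $M$ as produced by the full algorithm; the goal is to show $c(N, R_i) = e_i^N$.

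The base case is a node at which $R_i$ is committed to its sampled count (either because it is declared stable or because the recursion has reached its bottom), giving $c(N, R_i) = e_i^N$ by construction. For the inductive step, suppose $N$ is split into a left child $L$ and a right child $R$, where $R$ is obtained from its precursor $R^*$ by Procedure~\ref{procedure:tau:resample} after $L$ has been processed. I would establish that $R_i$ satisfies~\ref{stop_condition_1} and~\ref{stop_condition_2} at $L$, that the rate $\rho_i^R$ at the start of $R$ lies in $\clopen{l_i^{R*}, u_i^{R*}}$ so that the middle case of Procedure~\ref{procedure:tau:resample} yields $e_i^R = e_i^{R*}$ and hence $e_i^L + e_i^R = e_i^N$, and that $R_i$ satisfies~\ref{stop_condition_1} and~\ref{stop_condition_2} at $R$. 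Together with the inductive hypothesis applied to $L$ and $R$, which inherit the dependency assumption from $N$'s subtree, these give $c(N, R_i) = c(L, R_i) + c(R, R_i) = e_i^L + e_i^R = e_i^N$.

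The propagation of~\ref{stop_condition_1} and~\ref{stop_condition_2} to $L$ and $R$ reuses the monotonicity chain from Lemma~\ref{lemma:approximate_node_stability}: splitting (Procedure~\ref{procedure:tau:split}) gives $l_i^L, l_i^{R*} \le l_i^N$ and $u_i^L, u_i^{R*} \ge u_i^N$; the state shifts satisfy $L^+(S)[k] \le N^+(S)[k]$ and $L^-(S)[k] - \delta_i^L O_i^-[k] \ge N^-(S)[k] - \delta_i^N O_i^-[k]$; and pushing these through $\rho_i$ by monotonicity in its inputs yields the required inequalities at $L$, while an analogous chain built on top of $L(S)$ gives them at $R$.

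The crux is the two-sided bound on $\rho_i^R$, which is where the hypothesis $e_j^{M*} \le e_j^M$ enters decisively. This asymmetric inequality rules out the third (shrinking) case of Procedure~\ref{procedure:tau:resample} for every dependency $R_j$ at every right child in $N$'s subtree, so the cumulative $R_i$-input vector at the end of $L$'s subtree cannot drop below $N^-(S) - \delta_i^N O_i^-$ in the coordinates that lower $\rho_i$; then~\ref{stop_condition_2} at $N$ yields $\rho_i^R \ge l_i^N \ge l_i^{R*}$. The matching upper bound $\rho_i^R < u_i^{R*}$ combines~\ref{stop_condition_1} at $N$ with the splitting conservation $e_j^L + e_j^{R*} = e_j^N$ for dependencies and the fact that their accumulated counts in $L$'s subtree remain bounded by $e_j^N$ in the coordinates that would push $\rho_i$ above $u_i^N$. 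I expect the main obstacle to be making this unwinding precise: the one-sided precursor bound must be shown to control both sides of $\rho_i^R$ uniformly across all right children in $N$'s subtree, which will likely require a secondary induction tracking how far the actual cumulative counts can deviate from the hypothetical split counts before the conditions at $N$ are violated.
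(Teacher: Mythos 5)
Your architecture parallels the paper's: show that Conditions~\ref{stop_condition_1} and~\ref{stop_condition_2} for $R_i$ persist throughout the subtree, conclude that every resampling of $R_i$ at a right child falls into the middle case of Procedure~\ref{procedure:tau:resample} so that $e_i^R = e_i^{R*}$ and the counts telescope to $e_i^N$, and finish as in Lemma~\ref{lemma:approximate_node_stability}. (The paper makes the last step precise by running the argument inside the $d$-approximate algorithm of Section~\ref{section:basic:stop_condition} and letting $d\to\infty$; your base case ``the recursion has reached its bottom'' is standing in for that limiting argument.)

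The gap is exactly the one you flag in your last sentence, and your proposed mechanism for closing it points the wrong way. The paper extracts from the hypothesis a single global consequence: writing $E_{j,t}$ for the actual number of occurrences of a dependency $R_j$ in $\clopen{t_0, t}$, one has $E_{j,t}\le e_j^N$ for all $t$ in the segment spanned by $N$. Since splitting conserves $e_j^L+e_j^{R*}=e_j^N$, this bound amounts to saying that resampling never \emph{increases} a dependency's count relative to its precursor, and it is what keeps the true state sandwiched between $N^-(S_{t_0})$ and $N^+(S_{t_0})$ on the inputs of $R_i$ throughout $\timeseg{0}{1}$; two monotonicity chains then verify~\ref{stop_condition_1} and~\ref{stop_condition_2} for $R_i$ at every node of the subtree simultaneously, with no secondary induction and no need to bound $\rho_i$ at each right child separately. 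You instead argue that the hypothesis rules out the third (shrinking) case of Procedure~\ref{procedure:tau:resample} and that this keeps the cumulative input vector from dropping below $N^-(S)$ and the accumulated counts bounded by $e_j^N$. That cannot work: ruling out the shrinking case still permits the first (growing) case, and a single right child with $e_j^M>e_j^{M*}$ already makes the accumulated count exceed $e_j^N$, after which the surplus occurrences --- whose stoichiometries $O_j$ have components of both signs --- can push the state outside $\br{N^-(S), N^+(S)}$ in either direction, breaking both conditions at once. (The hypothesis as printed does read $e_j^{M*}\le e_j^M$, which is presumably why you read it this way, but the bound the proof actually consumes, $E_{j,t}\le e_j^N$, is the one that forbids count growth under resampling.) Once that global count bound is in hand, the two-sided control of $\rho_i$ that you identify as the main obstacle is immediate, and the rest of your outline goes through.
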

\begin{proof}
    Let $\timeseg{0}{1}$ be the time segment spanned by $N$, 
    and let $E_{j, t}$ be the number of events of the $j$'th reaction at the time segment $\clopen{t_0, t}$.

    Since all the reactions $R_j$ on which $R_i$ depends satisfy $e_j^{M*} \le e_j^M$ 
    at all nodes in the subtree,
    we have $E_{j, t} \le e_j^N$ for all $t \in \timeseg{0}{1}$.
    Thus, for a node $M$ spanning the time-segment $\timeseg{a}{b}$:
    \begin{align*}
        u_i^M > & r_i \prod_{k \in I_i} \binom {S[k] + \sum_j e_j^N O^+_j[k]}{C_i[k]} \ge  \\
                  & r_i \prod_{k \in I_i} \binom {S[k] + \sum_j E_{j, t_b} O^+_j[k]}{C_i[k]} \ge \\
                  & r_i \prod_{k \in I_i} \binom {S[k] + \sum_j E_{j, t_a} O_j[k] + e_j^M O_j^+[k]}{C_i[k]} = 
                  \rho_i\pr{S_{t_a} + \sum e_j^M O_j^+},
    \end{align*}
    where the first inequality is due to~\ref{stop_condition_1}, 
    the second follows by the Lemma's hypothesis, 
    and the third follows since $O^+_j[k] \ge O_j[k]$.
    Thus, all nodes in the subtree rooted by $N$ satisfy~\ref{stop_condition_1} for $R_i$.
    If $e_i^M > 0$, then $\delta_i^M = \delta_i^N$, and similarly: 
    \begin{align*}
        l_i^M \le & r_i \prod_{k \in I_i} \binom {S[k] + \sum_j e_j^N O^-_j[k] - \delta_i^N O_i^-}{C_i[k]} \le  \\
        & r_i \prod_{k \in I_i} \binom {S[k] + \sum_j E_{j, t_b} O^-_j[k] - \delta_i^M O_i^-}{C_i[k]} \le \\
        & r_i \prod_{k \in I_i} \binom {S[k] + \sum_j E_{j, t_a} O_j[k] + e_j^M O_j^-[k] - \delta_i^M O_i^-}{C_i[k]} = 
        \rho_i\pr{S_{t_a} + \sum e_j^M O_j^- - \delta_i^M O_i^-}.
    \end{align*}
    Otherwise, $l_i^M = 0$, and in particular $l_i^M \le \rho_i\pr{S_{t_a} + \sum e_j O_j^- - \delta_i^M O_i^-}$.
    Therefore, all nodes in the subtree rooted by $N$ satisfy~\ref{stop_condition_2} for $R_i$.

    Consider the subtree rooted by the node $N$ of the tree sampled by the 
    $d$-approximate algorithm defined in Section~\ref{section:basic:stop_condition}. 
    As in the proof of Lemma~\ref{lemma:approximate_node_stability}, 
    when all nodes in this subtree satisfy Conditions~\ref{stop_condition_1} 
    and~\ref{stop_condition_2} for the reaction $R_i$,
    the total number of occurrences of $R_i$ in the time segment spanned by $N$ 
    is equal to $e_i^N$.
    As in the last paragraph of Section~\ref{section:basic:stop_condition}, as $d \to \infty$, 
    the inexact stopping condition converges to the exact stopping condition for individual reactions, 
    proving the claim.
\end{proof}

\subsection{The full \texorpdfstring{$\tau$}{tau}-splitting algorithm}

We will now describe an algorithm that makes use of Lemma~\ref{lemma:reaction_deactivation} 
to stop splitting individual reactions.
We will refer to reactions that we have stopped sampling as \emph{inactive}, 
to the marking of a reaction as inactive as \emph{deactivation},
and to the marking of an inactive reaction as active as \emph{reactivation}.
Reactions will be deactivated while the algorithm processes a node, 
and will remain associated with that node.
We will refer to a reaction that satisfies~\ref{stop_condition_1} and~\ref{stop_condition_2} as \emph{stable}.

Lemma~\ref{lemma:reaction_deactivation} allows us to stop splitting a stable reaction $R_i$,
so long as we resume splitting it reteroactively if one of the reactions $R_j$ 
it depends on has $e_j > e_j^*$,
which may happen if $R_j$ is unstable.
Instead of storing $\tau$-leaping steps and precursors as vectors, 
we store them as sparse vectors, containing tuples $(i, e_i, l_i, u_i)$.
We store the data for the inactive reactions in a way that allows us to 
reactivate them when changes in the system invalidate the condition of 
Lemma~\ref{lemma:reaction_deactivation}.

Let the \emph{current node} be the node the algorithm is currently operating on.
Let $P$ be the path from the root to the current node.

We maintain the following data structures:
\begin{enumerate}[label=D\arabic*., ref=D\arabic*]
    \item\label{ds:dependent_reactions} \textbf{The Dependent Reaction Sets:} 
    For every reactant, the set of unstable reactions having the reactant as an input.
    \item\label{ds:reaction_list} \textbf{The Node Reaction Lists:} 
    For every node $N$, a list $\rdatalist{N}$ storing:
    \begin{itemize}
        \item The active reactions or reaction precursors in $N$.
        \item The reactions deactivated in $N$.
    \end{itemize}
    \item\label{ds:inactive_index} \textbf{The Inactive Index:}
    A list storing for every reaction either a null value if it is active, 
    or the node in which the reaction was deactivated.
    \item\label{ds:is_stable}\textbf{The Reaction Stability List:}
    A list storing for every reaction whether it is currently stable.
    \item\label{ds:node_status}\textbf{The Node Status List:}
    A list storing for every node whether the recursive algorithm is done operating on it,
    whether it is in $P$, or whether the algorithm has not yet reached it.
    \item\label{ds:bounding_states}\textbf{The Bounding States}
    $S^+ = S + \sum_{N \in P, i \in \rdatalist{N}} e_i^N O_i^+$ 
    and $S^- = S + \sum_{N \in P, i \in \rdatalist{N}} e_i^N O_i^-$.
\end{enumerate}

We perform a recursive algorithm starting with the root node.
We describe the operation of the algorithm at a node $N$ with left and right child nodes $L$ and $R$, respectively.

\paragraph{Step 1.}\label{algorithm:lazy:update_node_list}
We start by updating the Node Status List (\ref{ds:node_status}) to mark that $N$ is in $P$.

\paragraph{Step 2.}\label{algorithm:lazy:backward_reactivation}
If $N$ is a right child node, the Node Reaction List (\ref{ds:reaction_list}) 
stores the $\tau$-leaping precursor, which has the tuples $(i, e_i^{N*}, l_i^{N*}, u_i^{N*})$.
We sample the values $\elh{_i^N}$ using Procedure~\ref{procedure:tau:resample} based on $\rho_i(S)$
and $(e_i^{N*}, l_i^{N*}, u_i^{N*})$. 
For every reaction $R_i$ where $e_i^{N*} > e_i^N$ we go over all reactions 
depending on $R_i$ and reactivate them.

\paragraph{Step 3.}\label{algorithm:lazy:update_bounds}
For all reactions $R_i$ in $F^N$, we add $e_i O_i^+$ to $S^+$ and $e_i O_i^-$ to $S^-$.

\paragraph{Step 4.}\label{algorithm:lazy:forward_reactivation}
We check for every reaction $R_i$ if it is stable. 
We update the stability of all reactions in the 
Dependent Reaction Sets (\ref{ds:dependent_reactions}) 
and the Reaction Stability List~(\ref{ds:is_stable}), 
and reactivate all inactive reactions on which
an unstable reaction depends.

\paragraph{Step 5.}\label{algorithm:lazy:deactivation}
We check which reactions can be deactivated using the 
Dependent Reaction Sets~(\ref{ds:dependent_reactions}), 
by checking for every output of the reaction 
if the set of unstable reactions depending on it is empty.
We store the list of reactions that can be deactivated 
in the Node Reaction List~(\ref{ds:reaction_list}) of the node $N$,
and mark their location in the Inactive Index~(\ref{ds:inactive_index}).

\paragraph{Step 6.}\label{algorithm:lazy:apply_or_recurse}
If all reactions have been deactivated, 
we update the state from $S$ to $S + \sum_{i \in \rdatalist{N}} e_i^N O_i$,
$S^+$ to $S^+ + \sum_{i \in \rdatalist{N}} e_i^N (O_i - O_i^+)$,
and $S^-$ to $S^- + \sum_{i \in \rdatalist{N}} e_i^N (O_i - O_i^-)$.
We then finish the recursive call.

Otherwise, for all active reactions $R_i$, 
we subtract $e_i O_i^+$ from $S^+$ and $e_i O_i^-$ from $S^-$,
split them in two using Procedure~\ref{procedure:tau:split}, and add the two halves to the Node Reaction Lists~(\ref{ds:reaction_list}) $\rdatalist{L}$ and $\rdatalist{R}$.

\paragraph{Step 7.}\label{algorithm:lazy:recurse_left}
We recursively apply the algorithm to $\tau_{\timeseg{0}{1/2}}$. 

\paragraph{Step 8.}\label{algorithm:lazy:recurse_right}
We recursively apply the algorithm to $\tau_{\timeseg{1/2}{1}}$. 

\paragraph{Step 9.}\label{algorithm:lazy:recursion_end}
For each reaction $R_i$ still in $\rdatalist{N}$,  
we update the state from $S$ to $S + e_i^N O_i$,
$S^+$ to $S^+ + e_i^N (O_i - O_i^+)$,
and $S^-$ to $S^- + e_i^N (O_i - O_i^-)$.

\bigskip

This concludes the recursive procedure. We now explain the reactivation of reactions.

\paragraph{Reactivating reactions.}\label{algorithm:lazy:reactivation}

Reactivation of a reaction is also done recursively.
To reactivate a reaction, we use the Inactive Index~(\ref{ds:inactive_index}) 
to identify the node $N \in P$ in which it was deactivated,
and remove the reaction data $\elh{_i^N}$ 
from $\rdatalist{N}$~(\ref{ds:reaction_list}), from the Inactive Index~(\ref{ds:inactive_index}), 
and from the Bounding States~(\ref{ds:bounding_states}).
In the recursive step of the reactivation,
if $N$ is the current node, 
we add $\elh{_i^N}$ to $\rdatalist{N}$~(\ref{ds:reaction_list}) 
and to the Bounding States~(\ref{ds:bounding_states}) and complete the reactivation.
Otherwise, we split $\elh{_i^N}$ into two halves 
${\elh{_i^L}}$ and $\elh{_i^{R*}}$ 
using Procedure~\ref{procedure:tau:split}.
Since $P$ is a path from the root to the current node, it has one node in every depth. 
Thus, one of the child nodes is in $P$ and the other is not.
If $L$ is not in $P$, then the algorithm has already finished processing it.
We a $e_i^L O_i$ to $S$, $S^+$, and $S^-$.
Since the reaction was inactive at the initial timepoint of $R$,
we have $\elh{_i^{R*}} = {\elh{_i^R}}$,
and recursively reactivate ${\elh{_i^R}}$.
Otherwise, $R$ is not in $P$, and we add $\elh{_i^{R*}}$ to $\rdatalist{R}$~(\ref{ds:reaction_list})
and recursively reactivate ${\elh{_i^L}}$.

This concludes the reactivation procedure.

\subsection{The complexity of the full \texorpdfstring{$\tau$}{tau}-splitting algorithm}\label{section:full:complexity}

We will now compute algorithm's runtime densities of the RSSA-CR, BlSSSA, 
and the $\tau$-splitting algorithms.

We will begin with computing the runtime densities of the 
RSSA-CR algorithm~\cite{thanh2016rssacr} 
and the BlSSSA algorithm\footnote{Recall that $s$ is the dimension of the state space.}~\cite{ghosh2021blsssa}.
\begin{lemma}
    Let $D_i'$ be the number of reactions depending on $R_i$.
    The runtime density of the RSSA-CR algorithm
    is $\bigo{\sum_i (D_i' + K)\rho_i(S)}$,
    where $K$ is the number of groups used by the algorithm,
    and the runtime density of the BlSSSA algorithm
    is $\bigo{\sum_i \sqrt{s}\rho_i(S)}$.
\end{lemma}
\begin{proof}
    By the complexity analysis of~\cite{thanh2016rssacr}, 
    the runtime of the RSSA-CR algorithm when sampling the reaction $R_i$ is $O(D_i' + K)$,
    and by the complexity analysis of~\cite{ghosh2021blsssa},
    the runtime of the BlSSSA when sampling any reaction is $\sqrt s$.
    In a time segment of length $dt$ with initial state $S$, 
    the expected number of times the reaction $R_i$ is sampled
    is $\rho_i(S_t)dt + O(dt^2)$.
    Thus, the expected runtime of RSSA-CR when sampling a trajectory is the sum of the contribution of all reactions
    when integrating over all states in all trajectories:
    \begin{align*}
        \sum_i \int_{S} (D_i' + K)\rho_i(S)dS = \int_{S} \sum_i (D_i' + K)\rho_i(S)dS,
    \end{align*}
    and the expected runtime of BlSSSA when sampling a trajectory is
    \begin{align*}
        \sum_i \int_{S} \sqrt{s}\rho_i(S)dS = \int_S \sum_i \sqrt{s}\rho_i(S),
    \end{align*}
    proving the claim.
\end{proof}

The $\tau$-splitting algorithm relies on the approximation of the change of state with a
set of independent Poisson processes to allow multiple reaction occurrences in a single $\tau$-leaping step.
The higher the values of $S[k]$ are, the more accurate the approximation is, 
and more reactions can occur in a single $\tau$-leaping step.
When the reaction rates change too quickly, the first order approximation is highly inaccurate.
In this case the algorithm can advance reactions only one occurrence at a time,
making the complexity similar to the Gillespie algorithm.

We will define a function that will be a valid runtime density for the $\tau$-splitting algorithm.
The following result characterizes this function.
The contribution of an individual reaction to the function has two bounds, 
depending on the magnitude of $S[k]$.
We state the runtime density when all $S[k]$ are large, 
and then state how small $S[k]$ affect the individual reactions.

\begin{theorem}\label{theorem:full:complexity}
    Let $D_i$ be the total number of reactions depending on $R_i$ or on which $R_i$ depends.
    Let $\rho_k^\pm = \sum_j \rho_j \abs{O_j[k]}$.
    Let $M_k = \pr{\max_i \abs{O_i[k]}}_k$.

    When for all $k$ and $i$, $S[k] > \max\pr{e^2 M_k^2 \frac{\sum_{k \in I_i}\rho_k^\pm}{\rho_i}, e^2 M_k^2, 2C_i[k]}$, 
    the function
    \begin{align*}
        f(S) = \bigo{\sum_i D_i \sqrt{\rho_i \sum_{k \in I_i} \frac{\rho_k^\pm}{S[k]}}}
    \end{align*}
    is a runtime density for the $\tau$-splitting algorithm.

    More generally, when $S[k] > \max\pr{e^2 M_k^2 \frac{\sum_{k \in I_i}\rho_k^\pm}{\rho_i}, e^2 M_k^2, 2C_i[k]}$ for all $k \in I_i$,
    the function
    \begin{align}
        f_i(S) = \bigo{D_i \sqrt{\rho_i \sum_{k \in I_i} \frac{\rho_k^\pm}{S[k]}}}\label{eq:full:good_complexity}
    \end{align}
    is the contribution of $R_i$ to the runtime density.
    Otherwise, let $\cG_i$ be the set of reactions on which $R_i$ depends.
    The function
    \begin{align}
        f_i(S) = \bigo{D_i \sum_{j \in \cG_i}\rho_j}\label{eq:full:bad_complexity}
    \end{align}
    is the contribution of $R_i$ to the runtime density.
\end{theorem}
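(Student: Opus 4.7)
The plan is to bound the total expected runtime by summing the contribution of each reaction $R_i$ separately, and showing that its expected work per unit time at state $S$ is at most $f_i(S)$. A preliminary inspection of the algorithm shows that each node-operation involving $R_i$ (splitting, resampling, stability checking, (de)activation, or reactivation) costs $\bigo{D_i}$: all such operations either take constant time or iterate over the $\bigo{D_i}$ reactions on which $R_i$ depends or that depend on $R_i$. So it suffices to bound the expected number of such node-operations per unit time, which by the binary tree structure is bounded (up to a factor of 2) by the number of ``stability intervals'' of $R_i$ in $[0, T]$, that is, the maximal time intervals over which $R_i$ remained continuously deactivated (including any intervals created by reactivation).

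The main quantity to bound is therefore the typical duration $t$ of a stability interval for $R_i$. The argument has two pieces. First, in a $\tau$-leaping step of duration $t$ with initial rate $\rho_i$, Property~\ref{property:tau:distribution} gives $\rho_i - l_i$ as the gap between $\rho_i$ and the maximum of $\Pois{\rho_i t}$ uniform points in $\clopen{0, \rho_i}$, and $u_i - \rho_i \sim \Exp{t}$; both concentrate at $\Theta(1/t)$. Second, over duration $t$ the signed change $\Delta S[k]$ of reactant $k$ has expected magnitude $\bigo{t \rho_k^\pm}$ and, by the first-order expansion of the binomial $\binom{S[k] + \Delta S[k]}{C_i[k]}$, induces a change $\Delta \rho_i = \bigo{\rho_i t \sum_{k \in I_i} C_i[k] \rho_k^\pm / S[k]}$. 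The hypothesis $S[k] > e^2 M_k^2 \sum_{k' \in I_i} \rho_{k'}^\pm / \rho_i$ is exactly what controls the tail error in this Taylor expansion, since it guarantees that within a stable interval $|\Delta S[k]| \ll S[k]$. Setting $|\Delta \rho_i|$ equal to the window size $1/t$ yields $t = \Theta(1/\sqrt{\rho_i A_i})$ with $A_i = \sum_{k \in I_i} \rho_k^\pm / S[k]$, so the rate of node-operations for $R_i$ is $\bigo{\sqrt{\rho_i A_i}}$ per unit time and the good-case bound~\eqref{eq:full:good_complexity} follows after multiplying by $D_i$.

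For the bad case, where the large-$S[k]$ hypothesis fails for some $k \in I_i$, the first-order approximation can break after a single event of a reaction in $\cG_i$, so we pessimistically charge every event of a reaction in $\cG_i$ (occurring at total rate $\sum_{j \in \cG_i} \rho_j$) with $\bigo{D_i}$ work for $R_i$, giving~\eqref{eq:full:bad_complexity}. Summing $f_i(S)$ over $i$ and integrating over trajectories by linearity of expectation then yields that $f$ is a runtime density.

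The main obstacle is making the good-case analysis rigorous. The window widths and the cumulative rate change are both random, so one must use concentration inequalities (Chernoff for the Poisson event counts, Bernstein-type bounds for $\Delta S[k]$) to show that with sufficient probability the step at duration $t = \Theta(1/\sqrt{\rho_i A_i})$ is stable; moreover one must handle the recursive halving carefully so that the expected number of leaves in the subtree where $R_i$ is active does not exceed $\bigo{T \sqrt{\rho_i A_i}}$. A secondary subtlety is correctly accounting for work triggered by reactivations across reactions so that each unit of work is charged to exactly one $f_i$.
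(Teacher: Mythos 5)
Your plan follows the same route as the paper's proof: charge $\bigo{D_i}$ work per node at which $R_i$ is processed, identify the critical timescale $t^* = \Theta\pr{1/\sqrt{\rho_i A_i}}$ with $A_i = \sum_{k \in I_i}\rho_k^\pm/S[k]$ by balancing the window width $u_i - l_i = \Theta(1/t)$ against the first-order drift $\bigo{\rho_i t A_i}$ of the rate, use the largeness hypothesis on $S[k]$ to control the binomial expansion, and fall back to a zero-or-one-event argument in the bad case. All of these ingredients are present in the paper. But what you defer as ``the main obstacle'' is essentially the entire content of the paper's argument, so as written the proposal is a correct plan with the central estimates missing rather than a proof.

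Concretely, the paper never reasons about the ``typical duration of a stability interval.'' It instead introduces two indicators, $U_i(S,t)$ (the reaction is unstable at a node of length $t$) and $\overline U_i(S,t)$ (the reaction is split due to instability somewhere below that node), proves the quantitative bound $\EE\pr{U_i(S,t)} \le c\,\rho_i t^2 \sum_{k \in I_i}\rho_k^\pm/S[k]$ (Lemma~\ref{lemma:full:good_complexity}) by observing that instability requires the drift $\sum_k c_k \Delta_k / S[k]$ to exceed one of two independent $\Exp{t}$-sized windows, with a separate Poisson tail estimate (Lemma~\ref{lemma:full:sk_large}) guaranteeing $\Delta_k \le S[k]$ so that the expansion of $\binom{S[k]\pm\Delta_k}{C_i[k]}$ is legitimate. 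It then relates $\overline U_i$ to $U_i$ by a union bound over dyadic descendants (Lemma~\ref{lemma:unstable_inactive_inequality}) and evaluates the total cost $\sum_d 2^d T^{-1}\,\overline U_i(S, 2^{-d}T)$ by splitting the sum at the critical depth $d^*$ with $2^{-d^*}T \approx 1/\sqrt{\rho_i c A_i}$: above $d^*$ one uses $\overline U_i \le 1$ and below $d^*$ the $t^2$ bound, and both halves are geometric series dominated by the term at $d^*$. This multi-scale summation is what makes your ``up to a factor of 2'' leaf-count rigorous, and it also resolves your reactivation worry, since work done on a stable $R_i$ that is split because some unstable $R_j$ depends on it is charged to $R_j$'s density. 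Finally, your bad-case accounting (charging every event of $\cG_i$ directly) is not what the paper does: it proves $\EE\pr{U_i(S,t)} \le c t^2 \pr{\sum_{j \in \cG_i}\rho_j}^2$ and runs the same dyadic summation with $2^{-d^*}T \approx \pr{\sum_{j\in\cG_i}\rho_j}^{-1}$; your shortcut would still need an argument that the recursion for $R_i$ does not descend below the single-event scale, which is again the dyadic bookkeeping you omitted.
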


We prove the theorem by estimating the contribution of each reaction $R_i$ to the runtime of the algorithm.
The reactions are split by the $\tau$-splitting algorithm so long as they are not stable according to 
Condition~\ref{def:stop_condition_reaction}.
After that, they may still be split further because there are other unstable reactions depending on them,
but we count those splits as part of the contribution of those other reactions to the runtime.
A reaction that seems stable at the beginning of the time segment $\timeseg{a}{b}$ 
might also be split because it became unstable during $\timeseg{a}{b}$ due to the effect of a reaction on which it depends.

Let $\overline U_i(S, t)$ be the indicator random variable that is 1
if and only if $R_i$ is split due to its instability 
in a time segment of length $t$ with an initial state $S$,
and let $U_i(S, t)$ be the indicator random variable that is 1 
if and only if $R_i$ is unstable in a recursive step with initial state $S$ spanning a time segment of length $t$.
The difference between the two is that $\overline U_i(S, t)$ is 1 if the reaction is stable
according to Condition~\ref{def:stop_condition_reaction}, 
but is split due to it becoming unstable at a greater depth,
while $U_i(S, t)$ is 0 in that case.

We use the following inequality to relate $\overline U_i(S, t)$ and $U_i(S, t)$:
\begin{lemma}
    \label{lemma:unstable_inactive_inequality}
    Let $\timeseg{0}{1}$ be a time segment of length $T$, and let $S_t$ be the state at time $t$
    for some trajectory. Then:
    \begin{align*}
        \overline U_i(S_{t_0}, T) \le \sum_{d = 0}^\infty \sum_{n = 0}^{2^d} U_i(S_{t_0 + n2^{-d}T}, 2^{-d}T)
    \end{align*}
\end{lemma}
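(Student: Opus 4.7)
The plan is to prove the inequality pointwise. Since $\overline U_i(S_{t_0}, T)$ is $\{0,1\}$-valued and the right-hand side is a sum of indicators, it suffices to show that whenever $\overline U_i(S_{t_0}, T) = 1$ we can exhibit at least one pair $(d, n)$ with $U_i(S_{t_0 + n 2^{-d} T}, 2^{-d} T) = 1$. I would argue by induction on the depth at which the $\tau$-splitting recursion terminates for $R_i$ within the subtree rooted at $(S_{t_0}, T)$; this depth is almost surely finite by Lemma~\ref{lemma:finite_unstable}, so the induction is well-founded.

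The base case is immediate: if $R_i$ fails Condition~\ref{def:stop_condition_reaction} at the root, then $U_i(S_{t_0}, T) = 1$, which is the $(d, n) = (0, 0)$ term on the right-hand side.

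For the inductive step, suppose $\overline U_i(S_{t_0}, T) = 1$ while $U_i(S_{t_0}, T) = 0$. By the paragraph explicitly distinguishing the two indicators, $R_i$ is stable at the root yet is split because of its own instability at a strictly greater depth. That deeper instability lies in one of the root's two child subtrees, which span $\clopen{t_0, t_0 + T/2}$ and $\clopen{t_0 + T/2, t_0 + T}$. Letting $t^* \in \{t_0, t_0 + T/2\}$ be the starting time of the child whose subtree contains it, we obtain $\overline U_i(S_{t^*}, T/2) = 1$ by restriction. The induction hypothesis then furnishes indices $(d', n')$ with $U_i(S_{t^* + n' 2^{-d'}(T/2)}, 2^{-d'}(T/2)) = 1$; setting $d = d' + 1$ and choosing $n$ so that $t_0 + n 2^{-d} T = t^* + n' 2^{-d'}(T/2)$ gives a matching term in the sum on the right-hand side of the main inequality.

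I expect the main obstacle to be making the descent step rigorous in the presence of the reactivation machinery of Section~\ref{section:tau_improved}. Specifically, one must unpack how an $R_i$-instability surfacing at positive depth in the root's subtree — whether arising from $R_i$'s own failure of Condition~\ref{def:stop_condition_reaction} deeper down, or induced through Procedure~\ref{procedure:tau:resample} after a cascade triggered by dependent reactions changing $\rho_i$ — is localized to exactly one of the two child subtrees and is ``seen'' by the recursive call on that child with the same status. This reduces to a bookkeeping argument verifying that Procedures~\ref{procedure:tau:split} and~\ref{procedure:tau:resample} preserve which sub-segment owns the problematic depth, together with the observation that the stability conditions~\ref{stop_condition_1} and~\ref{stop_condition_2} inherit downward through splits exactly as exploited in the proof of Lemma~\ref{lemma:approximate_node_stability}.
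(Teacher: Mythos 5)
Your proposal is correct and matches the paper's argument: the paper likewise observes that $\overline U_i(S_{t_0}, T) = 1$ forces $R_i$ to be unstable at some dyadic split of $\timeseg{0}{1}$, so some term $U_i(S_{t_0 + n2^{-d}T}, 2^{-d}T)$ on the right equals $1$, and concludes by a union bound over these events. Your induction is simply a more explicit way of locating that witnessing dyadic sub-segment, and the care you take with well-foundedness (via Lemma~\ref{lemma:finite_unstable}) and with the reactivation machinery is reasonable but not something the paper elaborates.
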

\begin{proof}
    The indicator $\overline U_i(S_{t_0}, T)$ is 1 when the reaction $R_i$ 
    is split during the time segment $\timeseg{0}{1}$ due to its instability.
    For that to happen, the $R_i$ has to be unstable at some dyadic split of $\timeseg{0}{1}$.
    The Lemma then follows from the union bound over these events.
\end{proof}

\begin{corollary}
    \label{corollary:unstable_inactive_integral_inequality}
    Let $\clopen{0, T}$ be a time segment with initial state $S_0$. 
    Then, the expectation of $\overline U_i(S, T)$ satisfies:
    \begin{align*}
        \EE\pr{\overline U_i(S_0, T)} \le \int_S \sum_{d = 0}^\infty 2^d U_i(S, 2^{-d}T)dS
    \end{align*}
\end{corollary}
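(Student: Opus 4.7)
The plan is to take expectations in the preceding Lemma and recognize the resulting dyadic sum as a Riemann-sum representation of the trajectory integral that defines the runtime density.

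First, I would apply $\EE$ to both sides of the pointwise inequality in Lemma~\ref{lemma:unstable_inactive_inequality}. Since every term is a non-negative indicator, Tonelli's theorem permits interchanging $\EE$ with the double sum, giving
\[
\EE \overline U_i(S_0, T) \le \sum_{d=0}^\infty \sum_{n=0}^{2^d} \EE U_i\bigl(S_{n 2^{-d}T},\, 2^{-d}T\bigr).
\]

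Second, for each depth $d$ I would bound the inner dyadic sum by $2^d \int_S U_i(S, 2^{-d}T)\,dS$. Using the defining identity $\int_S g(S)\,dS = \EE \int_0^T g(S_t)\,dt$ for runtime densities (the same convention used earlier for the RSSA-CR and BlSSSA complexities), this target equals $2^d \EE \int_0^T U_i(S_t, 2^{-d}T)\,dt$. The factor $2^d$ is chosen precisely to compensate for the $2^{-d}T$ spacing between dyadic points, so that the discrete sum is a Riemann-type sampling of the continuous integral. Summing over $d$ by monotone convergence then gives the claimed bound.

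The main obstacle is turning the Riemann-type comparison into an actual inequality. Since $U_i$ is a $\{0,1\}$-valued random variable, no direct pointwise Riemann inequality applies. I would first condition via the Markov property on $S_t$ and pass to the deterministic conditional probability $p(S, t) = \PP(R_i \text{ unstable in a step of length } t \mid S)$, so that the comparison becomes one between a sum $\sum_n \EE\,p(S_{n2^{-d}T}, 2^{-d}T)$ and the integral $\EE \int_0^T p(S_t, 2^{-d}T)\,dt$. Because $p(\cdot, 2^{-d}T)$ varies smoothly along the trajectory (it is determined by the continuously-evolving reaction rates), the expected samples at the $2^d$ dyadic points can be dominated by $2^d$ times this integral, which is exactly $\int_S 2^d U_i(S, 2^{-d}T)\,dS$.
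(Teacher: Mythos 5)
Your skeleton matches the paper's: take expectations in Lemma~\ref{lemma:unstable_inactive_inequality} by Tonelli, bound the inner dyadic sum at each depth $d$ by $2^d$ times the trajectory integral $\int_S U_i(S, 2^{-d}T)\,dS$, and sum over $d$. You have also correctly identified the crux, namely comparing a sum of point evaluations at the $2^d$ dyadic gridpoints with a continuous integral over the trajectory. The gap is in how you close that step. Your justification rests on the claim that $p(S_t, 2^{-d}T)$ ``varies smoothly along the trajectory'' because the reaction rates evolve continuously. They do not: $S_t$ is a pure-jump process, the rates $\rho_j(S_t)$ are piecewise constant in $t$ and jump by a discrete amount at every reaction occurrence, and $U_i$ is by construction most sensitive precisely where those jumps matter. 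Without some structural input, a domination of a sum of point evaluations by $2^d/T$ times the integral can fail outright --- a nonnegative function concentrated in tiny windows around the gridpoints has dyadic sum of order $2^d$ but arbitrarily small integral --- so the step as written is not established.

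What actually closes the step (and what the paper spells out later, in the lemma bounding the contribution of $R_i$ to the runtime density) is monotonicity rather than smoothness: if $R_i$ is unstable on a time segment, it is unstable on every containing segment. This is the contrapositive of the inheritance of stability by sub-segments proved in Lemma~\ref{lemma:approximate_node_stability} (and the paper notes the same proof works for non-dyadic subsegments). Hence, writing $t_0 = n2^{-d-1}T$, for every $t$ with $t_0 - 2^{-d-1}T < t \le t_0$ the segment $\clopen{t_0, t_0 + 2^{-d-1}T}$ is contained in $\clopen{t, t + 2^{-d}T}$, so $U_i\pr{S_{t_0}, 2^{-d-1}T} \le U_i\pr{S_t, 2^{-d}T}$, and each point evaluation is dominated by the average of $U_i\pr{\cdot, 2^{-d}T}$ over a window of length $2^{-d-1}T$. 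Summing over $n$ produces the factor $2^d$ (up to a shift of one depth level and constants, which is the level of precision the paper works at) with no regularity assumption on the trajectory. I would replace your smoothness paragraph with this containment argument.
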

\begin{proof}
    We apply Lemma~\ref{lemma:unstable_inactive_inequality} and integrate over all states $S$ 
    in all trajectories.
    Since we integrate over all trajectories, each $U_i(S_t, 2^{-d}T)$ 
    contributes at most a $\int_S U_i(S, 2^{-d}T)dS$ term.
    Summing all these terms proves the Corollary.
\end{proof}

We will now find bounds for $U_i(S, t)$.

\subsubsection{A bound for \texorpdfstring{$U_i(S, t)$}{Ui(S, t)} when \texorpdfstring{$S[k]$}{S[k]} are large}

This section is devoted to the proof of the following bound on $U_i(S, t)$:
\begin{lemma}\label{lemma:full:good_complexity}
    Let $\cR$ be the set of reactions, let $R_i = (I_i, C_i, O_i, r_i)$ be a reaction with input set $I_i$, input count vector $C_i$, 
    stoichiometry vector $O_i$, and rate constant $r_i$,
    and let $\rho_i$ be the rate of $R_i$.
    There exists some constant $c$ depending only on $\cR$,
    such that when\footnote{Here, $e$ is the base of the natural logarithm.} 
    $S[k] > \max\pr{e^2 M_k^2 \frac{\sum_{k \in I_i}\rho_k^\pm}{\rho_i}, e^2 M_k^2, 2C_i[k]}$ for all $k \in I_i$, we have:
    \begin{align}
        \EE\pr{U_i\pr{S, t}} \le \rho_i t^2 c \sum_{k \in I_i} \frac{\rho_k^\pm}{S[k]}. \label{eq:full:uist_bound}
    \end{align}
\end{lemma}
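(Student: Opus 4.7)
The plan is to decompose $U_i(S,t) \le \mathbf{1}_A + \mathbf{1}_B$, where $A$ is the event that Condition~\ref{stop_condition_1} fails and $B$ the event that Condition~\ref{stop_condition_2} fails, and bound each separately. Writing $\Delta^+[k] := \sum_j e_j O_j^+[k] \ge 0$ gives $A = \{u_i \le \rho_i(S + \Delta^+)\}$, and writing $W[k] := \sum_j \tilde e_j |O_j^-[k]|$ with $\tilde e_i := \max(e_i - 1, 0)$ and $\tilde e_j := e_j$ for $j \ne i$ gives $S^- - \delta_i O_i^- = S - W$, so $B = \{l_i > \rho_i(S-W)\}$. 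The key distributional facts, inherited from the Poisson-process coupling of Section~\ref{section:basic:tau_sampling}, are that conditional on $\pr{e_j}_j$, $u_i - \rho_i \sim \Exp{t}$ (by the disjoint regions property of the Poisson process and the independence of the processes used for different reactions), and conditional on $e_i$, $l_i \sim \rho_i \cdot \BetaVar{e_i}{1}$.

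Using $1 - e^{-x} \le x$ yields $P(A \mid \pr{e_j}_j) \le t \pr{\rho_i(S + \Delta^+) - \rho_i}$. Bernoulli's inequality applied to $l_i$ yields $P(B \mid e_i, \pr{e_j}_j) \le e_i (\rho_i - \rho_i(S-W))/\rho_i$ when $e_i \ge 1$, and $P(B) = 0$ when $e_i = 0$ (since then $l_i = 0$ and $W = 0$). The problem therefore reduces to estimating $\EE[\rho_i(S+\Delta^+) - \rho_i]$ and $\EE[e_i (\rho_i - \rho_i(S-W))/\rho_i]$.

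For these I establish a multiplicative first-order bound on the rate: the factorization $\binom{S[k]+\alpha}{C_i[k]}/\binom{S[k]}{C_i[k]} = \prod_{j=0}^{C_i[k]-1}(1 + \alpha/(S[k]-j))$ combined with the hypothesis $S[k] > 2 C_i[k]$ gives $\rho_i(S+\alpha)/\rho_i \le \exp\pr{c' \sum_k C_i[k]|\alpha[k]|/S[k]}$ for $\alpha \ge 0$, with a matching lower bound via Bernoulli for negative $\alpha$. To turn this into an expectation I use the Poisson moment-generating function $\EE[e^{\lambda e_j}] = \exp(\rho_j t (e^\lambda - 1))$ together with the independence of the $e_j$, so that each exponent behaves like $c' C_i[k] O_j^+[k]/S[k]$. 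The hypotheses $S[k] > e^2 M_k^2$ and $S[k] > e^2 M_k^2 \sum_{k' \in I_i} \rho_{k'}^\pm/\rho_i$ are calibrated exactly so that each individual exponent is bounded and so that the summed exponent in the product MGF stays bounded, allowing $e^\theta - 1$ to be replaced by $c''\theta$ throughout. Combined with $\EE[\Delta^+[k]] \le \rho_k^\pm t$, this yields $P(A) \le c_1 \rho_i t^2 \sum_k C_i[k]\rho_k^\pm/S[k]$.

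The $B$ bound is where the $\delta_i$-correction pays off. The second factorial moment of the Poisson gives $\EE[e_i \tilde e_i] = \EE[e_i(e_i-1)] = (\rho_i t)^2$ \emph{exactly}, while independence across reactions gives $\EE[e_i \tilde e_j] = \rho_i \rho_j t^2$ for $j \ne i$. Hence $\EE[e_i W[k]] = \sum_j \rho_i \rho_j t^2 |O_j^-[k]| \le \rho_i t^2 \rho_k^\pm$, with no linear-in-$t$ residual (which would otherwise arise from the variance of $e_i$). The rate-change bound then yields $P(B) \le c_2 \rho_i t^2 \sum_k C_i[k]\rho_k^\pm/S[k]$. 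Summing $P(A)$ and $P(B)$ and absorbing the bounded quantities $C_i[k]$ into a single constant $c$ depending only on $\cR$ gives the claim. The main obstacle is making the MGF linearization rigorous: the exponential bound on $\rho_i(S+\alpha)/\rho_i$ becomes an MGF in the Poisson variable $\Delta^+$, and verifying that the three prescribed conditions on $S[k]$ suffice to keep this MGF within a bounded factor of $1$ — so $e^\theta - 1$ can be replaced by $\theta$ in expectation — is where the technical work lies.
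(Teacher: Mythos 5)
Your decomposition $U_i \le \mathbf{1}_A + \mathbf{1}_B$ and the distributional inputs you feed into it are sound, and your route genuinely differs from the paper's. The paper (i) first disposes of large $t$ via $\EE\pr{U_i}\le 1$, (ii) proves a separate tail lemma (Lemma~\ref{lemma:full:sk_large}) showing $P\pr{\Delta_k\ge S[k]}$ is negligible so that the binomial-coefficient expansion can be truncated to its linear term pointwise, (iii) replaces the $\BetaVar{e_i}{1}$ law of $l_i$ by $\rho_i-\Exp{t}$ through a resampling/stochastic-domination argument that also removes $\delta_i$, and then (iv) bounds the probability that the linearized rate change exceeds either of two independent $\Exp{t}$ gaps by $\EE\pr{1-e^{-2t\cdot(\text{rate change})}}\le 2t\,\EE\pr{\text{rate change}}$ with $\EE\pr{\Delta_k}=\rho_k^\pm t$. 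Your handling of Condition~\ref{stop_condition_2} — conditioning on $e_i$, using $P\pr{l_i> x\mid e_i}=1-\pr{x/\rho_i}^{e_i}$ with Bernoulli, and then the exact second factorial moment $\EE\br{e_i(e_i-1)}=(\rho_i t)^2$ — is cleaner than the paper's step (iii): it exploits the $\delta_i$ correction exactly where it matters and needs neither the domination argument nor the truncation lemma for that half. Your MGF treatment of Condition~\ref{stop_condition_1} replaces steps (ii) and the upper half of (iv).

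The gap is in that MGF step, and it is exactly where you say the technical work lies. The conditions on $S[k]$ alone do \emph{not} keep the summed exponent $\sum_j\rho_j t\pr{e^{\theta_j}-1}\asymp t\sum_{k\in I_i}C_i[k]\rho_k^\pm/S[k]$ bounded: it grows linearly in $t$, so for large $t$ the MGF is not within a bounded factor of $1$, the replacement of $e^x-1$ by $c''x$ fails, and your bound on $P(A)$ degenerates. You must first make the same reduction the paper makes at the outset: since $U_i$ is an indicator, $\EE\pr{U_i(S,t)}\le 1\le \rho_i t^2 c\sum_{k\in I_i}\rho_k^\pm/S[k]$ whenever $t\ge\pr{\rho_i c\sum_{k\in I_i}\rho_k^\pm/S[k]}^{-1/2}$, so one may assume $t$ is below that threshold. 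Only then do the hypotheses deliver boundedness: $S[k]>e^2M_k^2\sum_{k'\in I_i}\rho_{k'}^\pm/\rho_i$ gives $\sum_{k\in I_i}\rho_k^\pm/S[k]\le\rho_i/e^2$, hence $t\sum_{k\in I_i}\rho_k^\pm/S[k]\le 1/(e\sqrt{c})=\bigo{1}$, which is what legitimizes the linearization (the individual exponents $\theta_j$ are already $\bigo 1$ from $S[k]\ge 2C_i[k]$ and $O_j^+[k]\le M_k\le\sqrt{S[k]}/e$). With that one reduction inserted, your argument goes through, and the $C_i[k]$ factors are absorbed into the constant $c$ as you indicate.
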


The random variable $U_i\pr{S, t}$ is an indicator random variable, and thus $\EE\pr{U_i\pr{S, t}} \le 1$.
Thus, Lemma~\ref{lemma:full:good_complexity} holds for any $t \ge \sqrt{\rho_i t^2 c \sum_{k \in I_i} \frac{\rho_k^\pm}{S[k]}}^{-1}$,
and we only have to prove~\eqref{eq:full:uist_bound} for $t < \sqrt{\rho_i c \sum_{k \in I_i} \frac{\rho_k^\pm}{S[k]}}^{-1}$.

\bigskip

Let $\Delta_k = \sum_i e_i \abs{O_i[k]}$ be a bound on the change of $S[k]$ in a $\tau$-leaping step.
We will next prove that the probability that $\Delta_k \ge S[k]$ is negligible:
\begin{lemma}\label{lemma:full:sk_large}
    Suppose that $t < \sqrt{\rho_i c \sum_{k \in I_i} \frac{\rho_k^\pm}{S[k]}}^{-1}$.
    Under the hypothesis of Lemma~\ref{lemma:full:good_complexity},
    $P\pr{\Delta_k \ge S[k]} \in o\pr{\rho_i t^2 \sum_{k \in I_i} \frac{\rho_k^\pm}{S[k]}}$.
\end{lemma}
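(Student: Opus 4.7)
The plan is to bound $P(\Delta_k \ge S[k])$ by a Chernoff tail bound on a dominating Poisson random variable, and then read off the little-$o$ from the fact that the quantile in question forces at least $3$ reactions to occur, an event of probability $O(t^3)$.

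First I would observe that since every term in $\Delta_k = \sum_i e_i |O_i[k]|$ has stoichiometry coefficient bounded in absolute value by $M_k$, and since the $e_i$ are independent and Poisson by Property~\ref{property:tau:distribution}, one has the crude domination
\begin{align*}
    \Delta_k \le M_k N_k, \qquad N_k := \sum_{i : O_i[k] \ne 0} e_i \sim \Pois{\lambda_k},
\end{align*}
where $\lambda_k = t \sum_{i : O_i[k]\ne 0}\rho_i \le t \rho_k^\pm$. Setting $a := \lceil S[k]/M_k\rceil$, the hypothesis $S[k] > e^2 M_k^2$ gives $a \ge e^2 > 2$, hence $a \ge 3$, and $P(\Delta_k \ge S[k]) \le P(N_k \ge a)$.

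Next I would apply the standard Poisson Chernoff bound $P(N_k \ge a) \le e^{-\lambda_k}(e\lambda_k/a)^a$, whose applicability requires $a \ge \lambda_k$. Using $\lambda_k \le t \rho_k^\pm$ together with the regime restriction $t^2 < 1/(c\rho_i A)$ where $A := \sum_{k'\in I_i}\rho_{k'}^\pm/S[k'] \ge \rho_k^\pm/S[k]$, and invoking $S[k] > e^2 M_k^2 \rho_k^\pm/\rho_i$, a short algebraic chain gives
\begin{align*}
    \frac{e\lambda_k}{a} \le \frac{e t \rho_k^\pm M_k}{S[k]} \le e M_k \sqrt{\frac{\rho_k^\pm}{c\,\rho_i S[k]}} \le \frac{1}{\sqrt{c}}.
\end{align*}
Choosing $c \ge e^2$ (allowed, since $c$ depends only on $\cR$) both secures $a \ge \lambda_k$ and yields $(e\lambda_k/a)^a \le (et\rho_k^\pm M_k/S[k])^a$, so
\begin{align*}
    P(\Delta_k \ge S[k]) \le \pr{\frac{e t \rho_k^\pm M_k}{S[k]}}^a.
\end{align*}

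Finally, because $a \ge 3$ and all non-$t$ factors are bounded independently of $t$, this probability is $O(t^3)$ as $t \to 0$, whereas the target $\rho_i t^2 \sum_{k'\in I_i}\rho_{k'}^\pm/S[k']$ is $\Theta(t^2)$ in that limit. The ratio is therefore $O(t) = o(1)$, which is the desired little-$o$ conclusion — and it is precisely the refinement limit $t = 2^{-d}T \to 0$ driven by Corollary~\ref{corollary:unstable_inactive_integral_inequality} that makes this the meaningful notion of asymptotic here.

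The main obstacle is the algebraic bookkeeping that produces $e\lambda_k/a \le 1/\sqrt{c}$: it interleaves the three $S[k]$ lower bounds from Lemma~\ref{lemma:full:good_complexity} with the range restriction on $t$, and one must be careful not to lose the $e^2$ factors that make the final bound strictly less than $1/e$ rather than merely less than $1$. Once this is established, both the Chernoff applicability check and the extraction of the little-$o$ are immediate.
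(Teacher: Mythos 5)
Your proposal is correct and follows essentially the same route as the paper: both dominate $\Delta_k$ by $M_k$ times a Poisson variable of rate at most $\rho_k^\pm t$, use the hypotheses $S[k] > e^2 M_k^2 \rho_k^\pm/\rho_i$ and the restriction on $t$ to show the threshold $S[k]/M_k$ exceeds $e$ times the Poisson mean, and conclude from a tail bound of the form $(e\lambda/a)^a$ with $a \ge 3$ that the probability is $O(t^3) = o(t^2)$. The only differences are cosmetic — you invoke the standard Poisson Chernoff bound where the paper derives the same estimate from the pmf ratio plus Stirling, and you are slightly more careful in enlarging $c$ so the base is strictly below $1$.
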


\begin{proof}
    We will first show that the probability of $\Delta_k \ge S[k]$ 
    is dominated by the probability that $\Delta_k = S[k]$.

    For a Poisson random variable with rate $\lambda$ we have $P(\Pois{\lambda} = k) = \frac{\lambda^k e^{-\lambda}}{k!}$.
    When $k \ge \alpha \lambda$ for some constant $\alpha > 1$, increasing $k$ by 1 decreases $P(\Pois{\lambda} = k)$ 
    by at least $\frac{1}{\alpha}$.
    Thus, 
    \begin{align*}
        P(\Pois{\lambda} \ge \alpha \lambda) 
        = \sum_{k = \alpha \lambda}^\infty P(\Pois{\lambda} = k) 
        \le P(\Pois{\lambda} = \alpha\lambda) \sum_{k=0}^\infty \alpha^{-k} 
        = \frac{1}{1 - \alpha^{-1}} P(\Pois{\lambda}).
    \end{align*}

    In our case, the random variable $\Delta_k$ is stochastically dominated by $M_k\Pois{\rho_k^\pm t}$.
    Thus, $P(M_k\Delta_k \ge S[k]) \le P(\Pois{\rho_k^\pm t} \ge S[k]/M_k)$.
    Since $t \le \sqrt{\rho_i \sum_{k \in I_i}\frac{\rho_k^\pm}{S[k]}}^{-1}$,
    we have 
    \begin{align*} 
        \rho_k^\pm t 
         \le \rho_k^\pm \sqrt{\rho_i \sum_{k \in I_i}\frac{\rho_k^\pm}{S[k]}}^{-1}
         \le \rho_k^\pm \sqrt{\rho_i \frac{\rho_k^\pm}{S[k]}}^{-1} 
        = \sqrt{\frac{\rho_k^\pm S[k]}{\rho_i}}.
    \end{align*}
    Since $S[k] \ge e^2 M_k^2 \frac{\rho_k^\pm}{\rho_i}$ we have 
    $S[k] / M_k \ge e \sqrt{\frac{\rho_k^\pm S[k]}{\rho_i}} \ge e \rho_k^\pm t$, which is $e$ times the rate of the Poisson random variable.
    Thus, the probability that $\Delta_k \ge S[k]$ is dominated by the probability that $\Pois{\rho_k^\pm t} = \floor{S[k]/M_k}$.
    Let $A_k$ be $\floor{S[k]/M_k}$.
    
    \begin{align*}
        P\pr{\Pois{\rho_k^\pm t} = A_k}
        = \frac{\pr{\rho_k^\pm t}^{A_k} e^{-\pr{\rho_k^\pm t}}}{A_k!} 
        \le \frac{\pr{\rho_k^\pm t}^{A_k}}{A_k!} 
        \le \frac{\pr{e \rho_k^\pm t}^{A_k}}{A_k^{A_k}},
    \end{align*}
    Where the first inequality follows since the exponent of a negative value is smaller than 1,
    and the second by Stirling's approximation.
    To prove the lemma, we need to prove that $\frac{\pr{e \rho_k^\pm t}^{A_k}}{A_k^{A_k}} \in o(\rho_i t^2 \sum_{k \in I_i} \frac{\rho_k^{\pm}}{S[k]})$.
    We have already shown that $\frac{\pr{e \rho_k^\pm t}}{A_k} \le 1$, 
    so the inequality holds for the maximal value of $t$, 
    which is $\sqrt{\rho_i \sum_{k \in I_i} \frac{\rho_k^\pm}{S[k]}}^{-1}$, 
    since at that $t$ the right hand side is in $\bigo 1$.
    As $t$ decreases, the left hand side decreases as $t^{-A_k} \le t^{-3}$, 
    so the inequality holds in general.
\end{proof}

Lemma~\ref{lemma:full:sk_large} shows that for all $k$, $\Delta_k \le S[k]$ with a high probability.
This allows us to use the expression $\rho_i(S^-) = r_i \prod_k \binom{S[k] - \Delta_k}{C_i[k]}$ 
for the rate without having to treat the case where some of terms in the product become negative,
and will be used in expectation calculations later on.

\bigskip

For $R_i$ to stabilize we need $\rho_i(S^+) \le u_i$ and $\rho_i(S^- - \delta_i O_i) \ge l_i$,
where in Definition~\ref{def:stop_condition} we defined $\delta_i = \min(e_i, 1)$.
Recall that $u_i$ is distributed as $\rho_i(S) + \Exp{t}$ and $l_i$ 
as the height of the highest point in $\clopen{0, t} \x \clopen{0, \rho_i(S)}$.

We will now simplify these expressions to remove the $\delta_i$ term, 
instead sampling $e_i$ as $\Pois{\rho_i(S)t}$,
$u_i$ as $\rho_i(S) + \Exp t$, 
and $l_i$ as $\rho_i(S) - \Exp t$ independently.
Instead of sampling $e_i$ first and $l_i$ second, 
as we did in Procedures~\ref{procedure:tau:split} and~\ref{procedure:tau:resample},
we can sample $l_i$ first as $\max\pr{\rho_i - \Exp{t}, 0}$ and $e_i$ as 0 if $l_i = 0$ and $1 + \Pois{l_i t}$ otherwise.
Since $l_i \le \rho_i(S)$, the distribution of $e_i - \delta_i$ is stochastically dominated by $\Pois{\rho_i(S) t}$.

Since in the stop condition we check if $\rho_i(S^- - \delta_i O_i) < l_i$,
and since $\rho_i(S^- - \delta_i O_i) \ge 0$, 
$\rho_i(S^- - \delta_i O_i)$ cannot be smaller than $l_i$ when $l_i = 0$,
the condition is always satisfied when $l_i = 0$.
To simplify the distributions, we can thus sample $l_i$ as $\rho_i - \Exp{t}$ instead of 
$\max\pr{\rho_i - \Exp{t}, 0}$ without changing the probability of the reaction being stable.

We can thus simplify Condition~\ref{stop_condition_2} by 
replacing $S^- -\delta_i O_i^-$ with $S^-$
and $\delta_i$ with 0, and sampling $e_i$ as $\Pois{\rho_i(S)t}$.
This substitution does not affect~\ref{stop_condition_1}.

Thus, we have
\begin{align}
\rho_i(S^-) = r_i \prod_{k \in I_i} \binom{S^-[k]}{C_i[k]} \ge r_i \prod_{k \in I_i} \binom{S[k] - \Delta_k}{C_i[k]}, \label{lemma:density:lower}\\
\rho_i(S^+) = r_i \prod_{k \in I_i} \binom{S^+[k]}{C_i[k]} \le r_i \prod_{k \in I_i} \binom{S[k] + \Delta_k}{C_i[k]}. \label{lemma:density:upper}
\end{align}

We will now expand $\binom{S[k] + \Delta_k}{C_i[k]}$ to a polynomial in $\Delta_k$.
If we tried to expand $\pr{S[k] + \Delta_k}^{C_i[k]}$ instead of $\binom{S[k] + \Delta_k}{C_i[k]}$, 
we could have used the binomial theorem to expand it to 
$S[k]^{C_i[k]}\pr{\sum_i \binom{C_i[k]}{i}\pr{\frac{\Delta_k}{S[k]}}^i}$.
In the case of $\binom{S[k] + \Delta_k}{C_i[k]}$, we have a product of the form 
\begin{align*}
    \frac{\pr{S[k] + \Delta_k}\pr{S[k] - 1 + \Delta_k}\dots\pr{S[k] - C_i[k] + 1 + \Delta_k}}{C_i[k]!}.
\end{align*}
We expand the parentheses of the numerator, 
separating the $S[k] - j$ terms and the $\Delta_k$ terms,
yielding
\begin{align*}
    \frac{S[k](S[k] - 1)\dots(S[k] - C_i[k] + 1) + \Delta_k(S[k] - 1)\dots(S[k] - C_i[k] + 1) + \dots}{C_i[k]!}.
\end{align*}
Every term in this sum is a product of $(S[k] - j)$ terms and $\Delta_k$ terms.
We multiply and divide the expression by $\binom{S[k]}{C_i[k]}$ and cancel all $(S[k] - j)$ terms in the numerator, 
yielding $\binom{S[k]}{C_i[k]}$ times a sum of terms of the form $\frac{\Delta_k}{S[k] - a} \frac{\Delta_k}{S[k] - b}\dots$.
In each such term, since $S[k] \ge 2C_i[k]$, we have $S[k] - C_i[k] \ge \frac{S[k]}{2}$, 
so 
\begin{align*}
    \frac{\Delta_k}{S[k]} \le \frac{\Delta_k}{S[k] - j} \le 2\frac{\Delta_k}{S[k]}.
\end{align*}
We use this inequality to group terms together, leading to:
\begin{align*}
    \binom{S[k]}{C_i[k]}\pr{\sum_{i=0}^{C_i[k]} \binom{C_i[k]}{i}\pr{\frac{\Delta_k}{S[k]}}^i} \le 
    \binom{S[k] + \Delta_k }{C_i[k]} \le 
    \binom{S[k]}{C_i[k]}\pr{\sum_{i=0}^{C_i[k]} 2^i\binom{C_i[k]}{i}\pr{\frac{\Delta_k}{S[k]}}^i}.
\end{align*}

Since the $C_i[k]$ are in $O(1)$, we can simplify the expression to
\begin{align}
    \binom{S[k]}{C_i[k]}\pr{\sum_{i=0}^{C_i[k]} c_i \pr{\frac{\Delta_k}{S[k]}}^i} \le 
    \binom{S[k] + \Delta_k }{C_i[k]} \le 
    \binom{S[k]}{C_i[k]}\pr{\sum_{i=0}^{C_i[k]} c_i' \pr{\frac{\Delta_k}{S[k]}}^i},\label{eq:open_binomial}
\end{align}
with $c_0 = c_0' = 1$.

Rearranging the right-hand side of~\eqref{lemma:density:lower}
by opening the each binomial coefficient using~\eqref{eq:open_binomial} 
and grouping the terms by the powers of $\Delta_k$,
we deduce that there are constants $\cbr{c_1, c_2, \dots}$ 
such that the right-hand side of~\eqref{lemma:density:lower} is at least
\begin{align*}
    r_i \prod{\binom{S[k]} {C_i[k]}} \pr{
        1 
        - \sum_{k \in I_i} {c_k \frac{\Delta_k}{S[k]}} 
        + \sum_{k_1, k_2 \in I_i} c_{k_1}c_{k_2} \frac{\Delta_{k_1}\Delta_{k_2}}{S[k_1]S[k_2]} 
        + \dots}.
\end{align*}
Similarly, the right-hand side of~\eqref{lemma:density:upper} is at most
\begin{align*}
    r_i \prod{\binom{S[k]} {C_i[k]}} \pr{1 
    + \sum_{k \in I_i} {c_k \frac{\Delta_k}{S[k]}} 
    + \sum_{k_1, k_2 \in I_i}{c_{k_1}c_{k_2} \frac{\Delta_{k_1} \Delta_{k_2}}{S[k_1]S[k_2]}}
    + \dots}.
\end{align*}

Substituting $\rho_i(S) = r_i \prod{\binom{S[k]} {C_i[k]}}$, we get:
\begin{align*}
    \rho_i(S^-) \ge \rho_i(S) \pr{1 
    - \sum_{k \in I_i} {c_k \frac{\Delta_k}{S[k]}} 
    + \sum_{k_1, k_2 \in I_i}{c_{k_1}c_{k_2} \frac{\Delta_{k_1} \Delta_{k_2}}{S[k_1]S[k_2]}}
    + \dots},
\end{align*}
and
\begin{align*}
    \rho_i(S^+) \le \rho_i(S) \pr{1 
    + \sum_{k \in I_i} {c_k \frac{\Delta_k}{S[k]}} 
    + \sum_{k_1, k_2 \in I_i}{c_{k_1}c_{k_2} \frac{\Delta_{k_1} \Delta_{k_2}}{S[k_1]S[k_2]}}
    + \dots}.
\end{align*}

Since Lemma~\ref{lemma:full:sk_large} ensures that $\Delta_k \le S[k]$,
every term $\frac{\Delta_{k_1} \Delta_{k_2}\dots}{S[k_1]S[k_2]\dots}$ with more than one $\Delta_k$
is no more than the first term $\frac{\Delta_{k_1}}{S[k_1]}$ in the product.
Since there is only a finite number of them, we can substitute them with the first terms $\frac{\Delta_{k_1}}{S[k_1]}$
and increase the appropriate $c_k$, leading to the inequalities
\begin{align}
    \rho_i(S^-) \ge \rho_i(S) \pr{1 - \sum_{k \in I_i} {c_k \frac{\Delta_k}{S[k]}}}, 
    \label{eq:full:lower_bound}
\end{align}
and
\begin{align}
    \rho_i(S^+) \le \rho_i(S) \pr{1 + \sum_{k \in I_i} {c_k \frac{\Delta_k}{S[k]}}}.
    \label{eq:full:upper_bound}
\end{align}

The stability conditions $\rho_i(S^+) \le u_i$ and $\rho_i(S^-) \ge l_i$
are equivalent to $\rho_i(S^+) - \rho_i(S) \le u_i - \rho_i(S)$ and $\rho_i(S) - \rho_i(S^-) \le \rho_i(S) - l_i$.
By~\eqref{eq:full:lower_bound} and~\eqref{eq:full:upper_bound},
\begin{align}
    \rho_i(S^+) - \rho_i(S), \rho_i(S) - \rho_i(S^-) \le \sum_{k \in I_i} {c_k \frac{\Delta_k}{S[k]}}.
    \label{eq:density:combined_bound}
\end{align}

Since $U_i(S, t)$ is an indicator random variable, 
$P(U_i(S, t) = 1) = \EE(U_i(S, t))$.
Moreover, $U_i(S, t) = 1$ when $R_i$ is unstable.
Since $R_i$ is stable when~\eqref{eq:density:combined_bound} 
is smaller than two independent random variables distributed as $\Exp{t}$, we have:
\begin{align*}
    \EE\pr{U_i(S, t)} 
    & \le \EE\pr{1 - e^{-2\rho_i(S) tc \pr{\sum_{k \in I_i} {c_k \frac{\Delta_k}{S[k]}}}}} \\
    & \le 2\rho_i(S) tc \sum_{k \in I_i} {c_k \frac{\EE\pr{\Delta_k}}{S[k]}} \\
    & = 2\rho_i(S) t^2 c \sum_{k \in I_i} {c_k \frac{\EE\pr{\rho_k^\pm}}{S[k]}},
\end{align*}
proving the claim.

\subsubsection{A bound for \texorpdfstring{$U_i(S, t)$}{Ui(S, t)} when \texorpdfstring{$S[k]$}{S[k]} are small}

This section is devoted to proving a second bound that does not depend on $S[k]$, 
following the proof of Lemma~\ref{lemma:finite_unstable}.

\begin{lemma}\label{lemma:full:bad_complexity}
    Let $R_i$ be a reaction with input vector $I_i$, rate constant $r_i$, and rate $\rho_i$.
    Let $\cG_i$ be the set of reactions on which $R_i$ depends.
    Then there exists a constant $c$ that depends only on $\cR$, such that:
    \begin{align*}
        \EE\pr{U_i\pr{S, t}} \le \rho_i t^2 c \pr{\sum_{j \in \cG_i}\rho_j(S)}^2.
    \end{align*}
\end{lemma}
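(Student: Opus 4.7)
The plan is to follow the strategy of Lemma~\ref{lemma:finite_unstable} but make the probability bounds quantitative. Parametrize by $E := \sum_{j \in \cG_i} e_j$, the total number of $\cG_i$-reactions occurring in the step. Since $e_j \sim \Pois{\rho_j t}$ independently, $E \sim \Pois{\sigma t}$ where $\sigma = \sum_{j \in \cG_i}\rho_j(S)$. First I would observe that only reactions in $\cG_i$ can affect $\rho_i$, so $\rho_i(S^+)$ and $\rho_i(S^-)$ depend on the $e_j$ only through $j \in \cG_i$.

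If $E = 0$, then $\rho_i(S^+) = \rho_i(S^-) = \rho_i(S)$, so Conditions~\ref{stop_condition_1} and~\ref{stop_condition_2} hold automatically (since $u_i \ge \rho_i(S) \ge l_i$), hence $U_i = 0$. If $E \ge 2$, I would apply a standard Poisson tail bound to get $P\pr{E \ge 2} \le \pr{\sigma t}^2 / 2$, contributing at most this much to $\EE\br{U_i}$ regardless of whether the step is actually unstable.

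The key case is $E = 1$. Here a single $R_j \in \cG_i$ occurs once, and I would use that $u_i$ and $l_i$ are independent of $e_j$ for $j \ne i$ (and of $e_i$ conditional on it being at least $1$) by the Poisson point process construction, with $u_i - \rho_i(S)$ and $\rho_i(S) - l_i$ each stochastically dominated by $\Exp{t}$. Conditional on $R_j$ being the sole occurrence, the probability that either stopping condition fails is at most $2 t \Delta_j$, where $\Delta_j$ bounds the change in rate $\max\pr{\rho_i\pr{S + O_j^+} - \rho_i(S),\ \rho_i(S) - \rho_i\pr{S + O_j^-}}$ produced by a single $R_j$ occurrence. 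Weighting by the marginal probability $\rho_j t$ that $R_j$ is that unique occurrence and summing gives an $E = 1$ contribution of order $t^2 \sum_{j \in \cG_i} \rho_j \Delta_j$.

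The main obstacle will be absorbing both contributions into the claimed form $\rho_i t^2 c \sigma^2$. For the $E \ge 2$ term, $(\sigma t)^2 / 2$ is already of the right quadratic form in $\sigma t$, and we only need that multiplying by the structural constant $c$ makes room for the $\rho_i$ factor (when $\rho_i$ is bounded below; otherwise the bound is trivial since $R_i$ cannot contribute any work at all until a neighboring reaction makes it active). For the $E = 1$ term, I would show that each $\Delta_j$ is at most a structural constant times $\rho_i$ or absorbed into $\rho_i \sigma$, by directly bounding $\rho_i\pr{S + O_j}$ using the explicit formula $\rho_i(S) = r_i \prod_{k \in I_i} \binom{S[k]}{C_i[k]}$ and the fact that the coordinates of $O_j$ and $C_i$ are bounded by constants depending only on $\cR$. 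Collecting both contributions then yields the stated inequality with a constant $c$ depending only on $\cR$.
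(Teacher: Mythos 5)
Your proposal follows essentially the same route as the paper: condition on $E=\sum_{j\in\cG_i}e_j$, observe that $E=0$ forces stability, charge $E\ge 2$ to the Poisson tail $O\pr{(\sigma t)^2}$, and bound the $E=1$ failure probability by the chance that an exponentially distributed gap is smaller than the $O(\rho_i)$ rate change caused by a single occurrence; the paper packages the same computation as $e^{-t\sigma}+t\sigma e^{-t\sigma}e^{-O(\rho_i t)}\ge 1-O\pr{t^2\sigma(\sigma+\rho_i)}$. Two caveats. First, in the $E=1$ case your bound on Condition~\ref{stop_condition_2} via stochastic domination points the wrong way: $\rho_i(S)-l_i$ being dominated by $\Exp{t}$ makes it \emph{more} likely to fall below $\Delta_j$, not less. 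The paper instead observes (in the proof of Lemma~\ref{lemma:finite_unstable}) that with at most one occurrence the $\delta_i$ correction makes Condition~\ref{stop_condition_2} hold deterministically, so only Condition~\ref{stop_condition_1} can fail, and there the gap $u_i-\rho_i(S)$ is exactly $\Exp{t}$; you should route your argument through that observation. Second, the $\rho_i$ prefactor that you rightly identify as the main obstacle is not actually delivered by the paper's own proof either --- its computation ends at $O\pr{t^2\sigma^2}$, which is what is used downstream --- and your fallback (``the bound is trivial when $\rho_i$ is small'') does not close it, since when $\rho_i(S)=0$ the right-hand side vanishes while a single neighboring occurrence can still push $\rho_i(S^+)$ above $u_i$ with positive probability. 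Modulo these two points, which do not change the substance, the decomposition and the key estimates match the paper's.
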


If one of the two following mutually exclusive conditions is satisfied, then $R_i$ is stable:
\begin{itemize}
    \item For all $j \in \cG_i$ we have $e_j = 0$.
    \item $\sum_{j \in \cG_i} e_j = 1$, and~\ref{stop_condition_1} is satisfied.
\end{itemize}

We have already shown in the proof of Lemma~\ref{lemma:finite_unstable}
that when $\sum_i e_i = 0$ the reaction $R_i$ is stable, 
and that when $\sum_i e_i = 1$ the reaction $R_i$ satisfies~\ref{stop_condition_2},
and thus, if it satisfies~\ref{stop_condition_1} as well, it is stable.

The probability of all reactions having zero occurrences is $e^{-t \sum_{j \in \cG_i} \rho_j(S)}$.
The probability of having exactly one occurrence in all reactions is $\pr{t \sum_{j \in \cG_i} \rho_j(S)}e^{-t \sum_{j \in \cG_i} \rho_j(S)}$.
When $\sum_i e_i = 1$, $\Delta_k$ are bounded by a constant, 
and hence by~\eqref{eq:full:upper_bound}, $\rho_i(S^+) = \bigo{\rho_i(S)}$.
Thus, the probability of satisfying~\ref{stop_condition_1} 
if $\sum_i e_i = 1$ is at least $e^{-\bigo{\rho_i(S)t}}$.

The total probability of the reaction being stable is thus:
\begin{align*}
    & e^{-t \sum_{j \in \cG_i} \rho_j(S)} + \pr{t \sum_{j \in \cG_i} \rho_j(S)}e^{-t \sum_{j \in \cG_i} \rho_j(S)}e^{-\bigo{\rho_i(S)t}} \\
    & = e^{-t \sum_{j \in \cG_i} \rho_j(S)}\pr{1 + \pr{t \sum_{j \in \cG_i} \rho_j(S)} e^{-\bigo{t\rho_i(S)}}} \\
    & \ge \pr{1 - t \sum_{j \in \cG_i} \rho_j(S)}\pr{1 + \pr{t \sum_{j \in \cG_i} \rho_j(S)} \pr{1 - \bigo{t\rho_i(S)}}} \\
    & = 1 - \bigo{t^2\pr{\sum_{j \in \cG_i}\rho_j(S)}\pr{\sum_{j \in \cG_i} \rho_j(S) + \rho_i(S)}} + \bigo{t^3}.
\end{align*}

Thus, $\EE\pr{U_i(S, t)}$ is smaller than $\bigo{t^2\pr{\sum_{j \in \cG_i}\rho_j(S)}^2}$.

\subsubsection{The final calculation}

\begin{lemma}
    Let $W_i(S, t)$ be the work the algorithm performs when processing $R_i$ 
    in a recursive step with initial state $S$ spanning a time segment of length $t$.
    Then, $W_i(S, t) = O(D_i)$.
\end{lemma}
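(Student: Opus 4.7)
The plan is to walk through the six algorithmic steps and two state updates described in Section~\ref{section:tau_improved} and charge the per-reaction cost of each to $R_i$, then confirm that each unit of work involves only reactions that share a reactant with $R_i$. The guiding observation is that every data-structure entry touched on behalf of $R_i$ within one recursive call either belongs to $R_i$ itself, to some reactant in $I_i$ (and hence to a reaction in $I_i$'s Dependent Reaction Set), or to a reaction depending on some output of $R_i$. All such reactions fall into the set counted by $D_i$, so any loop over them contributes $\bigo{D_i}$.

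The bookkeeping I would carry out is as follows. Steps 3 and 9 add $e_i^N O_i^{\pm}$ to the Bounding States~(\ref{ds:bounding_states}); since $O_i$ is sparse with $\bigo 1$ nonzero components, this costs $\bigo 1$. Step 6 either splits the reaction via Procedure~\ref{procedure:tau:split} or applies its stoichiometric update, both at $\bigo 1$ cost per reaction. Step 2's resampling via Procedure~\ref{procedure:tau:resample} is $\bigo 1$ per reaction from its closed-form distributions, and the subsequent reactivation loop iterates only over reactions depending on $R_i$, contributing $\bigo{D_i'} \le \bigo{D_i}$. Steps 4 and 5 are where the $D_i$ factor is actually realized: to test whether $R_i$ became stable or unstable we must consult $\rho_i(S^+)$ and $\rho_i(S^-)$, which is $\bigo 1$ given the maintained bounding states; but on a stability flip of $R_i$ we must update the Dependent Reaction Sets~(\ref{ds:dependent_reactions}) and the Reaction Stability List~(\ref{ds:is_stable}) at every reactant in $I_i$, and the deactivation test in Step 5 examines the Dependent Reaction Sets for every output of $R_i$. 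Each such traversal touches $\bigo{D_i}$ entries.

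The main obstacle is making the accounting watertight for the recursive reactivation procedure invoked in Steps 2 and 4. A single reactivation call may descend through several ancestors in the tree, which could naively be attributed to the current node's processing of $R_i$. The clean way out is to amortize: each level of descent in the reactivation of a reaction $R_j$ costs $\bigo 1$ and should be charged to the processing of $R_j$ at the ancestor node where reactivation terminates (or at the tree node it passes through), not to $R_i$ at the current node. With that convention, the work attributable to $R_i$ at one recursive call consists solely of the $\bigo 1$ arithmetic of Procedures~\ref{procedure:tau:split} and~\ref{procedure:tau:resample}, $\bigo 1$ state updates, and the $\bigo{D_i}$ bookkeeping for stability transitions and the deactivation check, which sums to $\bigo{D_i}$ as claimed.
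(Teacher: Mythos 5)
Your proof is correct and takes essentially the same approach as the paper's, which simply observes that in any recursive step all reactions counted by $D_i$ that are relevant to $R_i$ are active and each incurs constant work. Your version is considerably more detailed — in particular, your explicit amortization of the multi-level reactivation descent (charging each level to the reaction being reactivated at the node it passes through) makes rigorous a point the paper's three-sentence argument leaves implicit.
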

\begin{proof}
    In every $\tau$-leaping step where $R_i$ is unstable, 
    all reactions depending on $R_i$ and on which $R_i$ depends must be active.
    For each of these reactions we perform a constant amount of work in every step of the 
    full $\tau$-splitting algorithm.
    Thus, $W_i(S, t) = O(D_i)$.
\end{proof}

\begin{lemma}
    The contribution of the reaction $R_i$ to the runtime density is in
    \begin{align*}
        \bigo{D_i \sum_{d = 0}^\infty \frac{1}{2^{-d}T}\overline U_i(S, 2^{-d}T)}.
    \end{align*}
\end{lemma}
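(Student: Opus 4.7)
The plan is to account for the work the $\tau$-splitting algorithm performs on $R_i$ by summing contributions over all depths of the recursion tree, and then to convert the resulting sum over tree nodes into a trajectory integral in the form required by the definition of runtime density.

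By the preceding lemma, each time the algorithm processes $R_i$ at a node it performs $O(D_i)$ work. By definition, $\overline U_i(S_{t_N}, 2^{-d}T)$ is the indicator that $R_i$ is split (due to its own instability) at a depth-$d$ node $N$ with initial state $S_{t_N}$. Hence the total work spent on $R_i$ along a trajectory is bounded by
\begin{align*}
    \sum_{d=0}^\infty \sum_{N \text{ at depth } d} O(D_i)\, \overline U_i(S_{t_N}, 2^{-d}T).
\end{align*}

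Since every depth-$d$ node spans a time segment of length exactly $2^{-d}T$, for each fixed $d$ the inner sum equals $\frac{1}{2^{-d}T}$ times the trajectory integral $\int_0^T \overline U_i(S^{(d)}_t, 2^{-d}T)\,dt$, where $S^{(d)}_t$ denotes the state at the start of the depth-$d$ segment containing $t$. Exchanging sum and integral, the total work on $R_i$ is bounded by $\int_0^T D_i \sum_{d=0}^\infty \frac{\overline U_i(S^{(d)}_t, 2^{-d}T)}{2^{-d}T}\,dt$, and identifying this integrand as the runtime density for $R_i$ yields the stated bound.

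The main technical point to check is that the argument naturally produces an integrand depending on $S^{(d)}_t$ rather than on the instantaneous trajectory state $S_t$. However, $\overline U_i(S, t)$ depends only on the behaviour of a single $\tau$-leaping step starting from $S$, and $S_t$ differs from $S^{(d)}_t$ only by the state change occurring within one depth-$d$ segment, whose magnitude is controlled by the reaction rates $\rho_k^\pm$ times $2^{-d}T$. I expect this substitution to alter the expression only by a multiplicative constant that can be absorbed into the $O(\cdot)$, so that evaluating $\overline U_i$ at $S_t$ gives the runtime density as claimed.
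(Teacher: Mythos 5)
Your overall skeleton matches the paper's: write the work on $R_i$ as $D_i \sum_{d} \sum_{N \text{ at depth } d} \overline U_i(S_{t_N}, 2^{-d}T)$ and then compare this node sum to a trajectory integral. The gap is in the final step, where you replace the piecewise-constant state $S^{(d)}_t$ by the instantaneous state $S_t$ and assert that this costs only a multiplicative constant. Nothing in the definitions supports this: $\overline U_i(S,\tau)$ is an indicator whose expectation is a probability, and you need a \emph{lower} bound on $\EE\,\overline U_i(S_t,\tau)$ in terms of $\EE\,\overline U_i(S^{(d)}_t,\tau)$ for the density evaluated along the trajectory to dominate the actual work; the observation that the two states differ by at most one segment's worth of reactions does not yield such a bound, and the quantitative estimates on $U_i$ proved later are one-sided upper bounds valid only when all $S[k]$ are large, whereas this lemma is stated unconditionally. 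Note also that $\overline U_i$ is not determined by a single $\tau$-leaping step from $S$: it records whether $R_i$ is split anywhere in the subtree, including splits forced by instability arising at greater depth, so a perturbation argument about one step would not suffice in any case.

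The paper closes this gap with a containment/monotonicity argument rather than a perturbation argument. For an arbitrary $t$, let $t_0 = 2^{-d-1}T\ceil{t/(2^{-d-1}T)}$; then $\clopen{t_0, t_0+2^{-d-1}T} \subset \clopen{t, t+2^{-d}T}$, and the sampling procedure guarantees that if the (hypothetical) step over $\clopen{t, t+2^{-d}T}$ starting at $S_t$ is not split, then no step over a contained subsegment is split either. Hence $\EE\pr{\overline U_i(S_t, 2^{-d}T)} \ge \EE\pr{\overline U_i(S_{t_0}, 2^{-d-1}T) \mid S_t}$, so the depth-$d$ slice of the continuous integral dominates the depth-$(d+1)$ slice of the dyadic node sum; summing over $d$ recovers the entire node sum except the root term, which contributes at most $D_i$ and is absorbed. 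You should replace your ``multiplicative constant'' claim with an argument of this kind, or actually prove the comparison you assert.
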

\begin{proof}
    The contribution of $R_i$ to the runtime of the algorithm is:
    \begin{align}
        D_i \sum_{d = 0}^\infty \sum_{n = 0}^{2^d} \overline U_i\pr{S_{n2^{-d}T}, 2^{-d}T}, \label{lemma:density:real}
    \end{align}
    since we perform $\bigo{D_i}$ work for every recursive call where $R_i$ is unstable.
    We count the work required by recursive calls where $R_i$ is stable, 
    but is split because an unstable reaction $R_j$ depends on it, 
    as part of the contribution of $R_j$ to the runtime.

    The integral over the contribution of $R_i$ to the time density function we have defined is:
    \begin{align}
        D_i \sum_{d = 0}^\infty \frac{1}{2^{-d}T} \int_{t=0}^T \overline U_i\pr{S_t, 2^{-d}T} .\label{lemma:density:density}
    \end{align}
    The difference between~\eqref{lemma:density:real} and~\eqref{lemma:density:density} is that in~\eqref{lemma:density:density} 
    we have a continuous integral over time segments, 
    and in the real runtime~\eqref{lemma:density:real} we have a discrete sum over the dyadic splits of $\clopen{0, T}$.

    Let $t_0 = 2^{-d-1}T\ceil{\frac{t}{2^{-d-1}T}}$ be the rounding up of 
    $t$ to an integer multiple of $2^{-d-1}T$.
    Since $t_0 - t < 2^{-d-1}T$, we have $t_0 + 2^{-d-1}T < t + 2^{-d}T$, 
    and hence $\clopen{t_0, t_0 + 2^{-d-1}T} \subset \clopen{t, t + 2^{-d}T}$.
    When the $\tau$-leaping step spanning $\clopen{t, t + 2^{-d}T}$ is not split, 
    our sampling procedure guarantees that any $\tau$-leaping 
    step over a time segment contained in $\clopen{t, t + 2^{-d}T}$ is stable 
    (We have proved this for the dyadic splits in Procedures~\ref{procedure:tau:split} and~\ref{procedure:tau:resample}, 
    and the same proof works for any subsegment).
    Thus, $\EE\pr{\overline U(S_t, 2^{-d}T)} \ge \EE\pr{\overline U(S_{t_0}, 2^{-d-1}T) | S_t}$.
    Substituting this inequality into~\eqref{lemma:density:real}, we get:
    \begin{align}
        & \EE\pr{D_i\int_{t=0}^T \sum_{d = 0}^\infty 2^{-d}T \overline U_i\pr{S_t, 2^{-d}T}dt } \nonumber \\
        & = \EE\pr{D_i\sum_{d = 0}^\infty 2^{-d}T  \int_{t=0}^T \overline U_i\pr{S_t, 2^{-d}T}dt } \nonumber \\
        & \ge \EE\pr{D_i\sum_{d = 0}^\infty 2^{-d}T \int_{t=0}^T \overline U_i\pr{S_{2^{-d-1}T\ceil{\frac{t}{2^{-d-1}T}}}, 2^{-d-1}T}} dt \nonumber \\
        & = \EE\pr{D_i\sum_{d = 1}^\infty\sum_{n=0}^{2^d} \overline U_i\pr{S_{n2^{-d}T}, 2^{-d}T}} \nonumber \\
        & \ge \EE\pr{D_i\sum_{d = 0}^\infty\sum_{n=0}^{2^d} \overline U_i\pr{S_{n2^{-d}T}, 2^{-d}T}} - D_i \label{eq:density:calc}.
    \end{align}
    Since $D_i$ is constant and is negligible relative to the total work required by the simulation, 
    the term~\eqref{eq:density:calc} is equal up to an additive constant to~\eqref{lemma:density:real}, 
    the expected contribution of $R_i$ to the real runtime, 
    and the claim follows.
\end{proof}

We are now ready to complete the proof.

Recall that the total work density is:
\begin{align}
    D_i\sum_{d = 0}^\infty \frac{1}{2^{-d}T}\overline U_i(S, 2^{-d}T),\label{eq:full:work_density}.
\end{align}

We start by proving the bound~\eqref{eq:full:good_complexity}.
Under the theorem's hypothesis for~\eqref{eq:full:good_complexity}, we have:
\begin{align}
    U_i(S, 2^{-d}T) \le \min\pr{1, 2^{-2d}T^2\rho_i c \sum_{k \in I_i} \frac{\rho_k^\pm}{S[k]}}\label{eq:ui_inequality}
\end{align}

We separate the proof to two cases, depending on which term in the minimum is smaller.
Let 
\begin{align*}
    d^* = \ceil{\log_2\pr{T\sqrt{\rho_i c \sum_{k \in I_i} \frac{\rho_k^\pm}{S[k]}}}} = \log_2\pr{T\sqrt{\rho_i c \sum_{k \in I_i} \frac{\rho_k^\pm}{S[k]}}} + f,
\end{align*}
where $f \in [0, 1]$, be the depth where the second term in the minimum becomes lesser or equal to the first.

We split the sum~\eqref{eq:full:work_density} to the part where $d \le d^*$ and the part where $d > d^*$.
When $d \le d^*$, we have no better bound than $\overline U_i(S, 2^{-d}T) \le 1$,
yielding:
\begin{align*}
    D_i\sum_{d = 0}^{d^*} \frac{1}{2^{-d}T}\overline U_i(S, 2^{-d}T) 
    \le D_i\sum_{d = 0}^{d^*} \frac{1}{2^{-d}T}
    = 2 D_i \frac{1}{2^{-d^*} T} = 2 D_i \sqrt{\rho_i c \sum_{k \in I_i} \frac{\rho_k^\pm}{S[k]}}.
\end{align*}

We now turn to the second case 
\begin{align*}
    D_i\sum_{d = d^*}^{\infty} \frac{1}{2^{-d}T}\overline U_i(S, 2^{-d}T).
\end{align*}

To get a bound in terms of $U_i(S, t)$ instead of $\overline U_i(S, t)$,
we use Corollary~\ref{corollary:unstable_inactive_integral_inequality}, yielding
\begin{align*}
    \int_S D_i\sum_{d = d^*}^{\infty} \frac{1}{2^{-d}T}\overline U_i(S, 2^{-d}T) dS 
    & = \int_S D_i\sum_{d = d^*}^{\infty} \frac{1}{2^{-d}T} \sum_{d'=0}^\infty 2^{d'} U_i(S, 2^{-d-d'}T) dS \\
    & = \int_S D_i\sum_{d = d^*}^{\infty} \frac{d - d^* + 1}{2^{-d}T} U_i(S, 2^{-d}T) dS,
\end{align*}
where the last inequality follows by grouping together equal terms of $U_i(S, t)$.

Using~\eqref{eq:ui_inequality}, we get 
\begin{align*}
    D_i \sum_{d = d^*}^{\infty} \frac{d - d^* + 1}{2^{-d}T} U_i(S, 2^{-d}T)
    & \le D_i \sum_{d = d^*}^{\infty} \pr{d - d^* + 1}2^{-d}T\rho_i c \sum_{k \in I_i} \frac{\rho_k^\pm}{S[k]} \\
    &   = D_i \sum_{d = d^*}^{\infty} \pr{d - d^* + 1}2^{-(d - d^*)}2^{-d*}T\rho_i c \sum_{k \in I_i} \frac{\rho_k^\pm}{S[k]} \\
    &   = 4D_i 2^{-d^*}T\rho_i c \sum_{k \in I_i} \frac{\rho_k^\pm}{S[k]} \\
    &   = 4D_i \sqrt{\rho_i c \sum_{k \in I_i} \frac{\rho_k^\pm}{S[k]}}
\end{align*}

Therefore, the total runtime density is in $\bigo{D_i \sqrt{\rho_i \sum_{k \in I_i} \frac{\rho_k^\pm}{S[k]}}}$,
proving~\eqref{eq:full:good_complexity}.

We next prove the bound~\eqref{eq:full:bad_complexity}.
By Lemma~\ref{lemma:full:bad_complexity}, $\EE\pr{U_i(S, t)} \le c t^2\pr{\sum_{j \in D}\rho_j(S)}^2$.
Set 
\begin{align*}
    d^* = \ceil{\log_2\pr{T\pr{\sum_{j \in D}\rho_j(S)}^{-1}}}.
\end{align*}
Similar calculations to the ones above show that the total runtime density is in 
\begin{align*}
    \bigo{D_i \pr{\sum_{j \in D}\rho_j(S)}},
\end{align*}
proving the bound~\eqref{eq:full:bad_complexity}.

\section{Code Availability}

The code for the $\tau$-splitting algorithm is available at \url{https://github.com/ormorni/sde}.

\section*{Acknowledgements}
The authors would like to thank 
Professor Gadi Fibich for reading the drafts of the paper and for 
his useful comments that significantly improved the presentation.

\bibliography{main} 
\bibliographystyle{ieeetr}

\end{document}